\documentclass[oneside]{amsart}
\usepackage[a4paper,body={17.2cm,24.5cm},centering]{geometry}
\usepackage{microtype} 
\usepackage{tikz}
\usetikzlibrary{arrows,calc,matrix,topaths,positioning,scopes,shapes,decorations,decorations.markings,decorations.pathreplacing,decorations.pathmorphing} 
\usepackage{forest}
\usepackage{stmaryrd}
\usepackage{textcomp}
\usepackage{amscd,amsfonts}
\usepackage{amsthm,amsmath}
\usepackage{amstext,amssymb}
\usepackage{xcolor}
\usepackage[hidelinks]{hyperref}
\reversemarginpar
\usepackage{enumerate}
\usepackage{enumitem}
\newlist{enumD}{enumerate}{1}
\setlist[enumD]{label=(D\arabic*)}
\newlist{enumPL}{enumerate}{1}
\setlist[enumPL]{label=(PL\arabic*)}
\newlist{enumR}{enumerate}{1}
\setlist[enumR]{label=(R\arabic*)}
\newlist{enumEQ}{enumerate}{1}
\setlist[enumEQ]{label=(EQ\arabic*)}
\newlist{enumM}{enumerate}{1}
\setlist[enumM]{label=(M\arabic*)}
\newlist{enumRe}{enumerate}{1}
\setlist[enumRe]{label=(R\arabic*)}


\usepackage{xargs}                      
\usepackage{xcolor}  
\usepackage[colorinlistoftodos,textsize=tiny]{todonotes}
\usepackage{cancel}

\newcommandx{\unsure}[2][1=]{\todo[linecolor=red,backgroundcolor=red!25,bordercolor=red,#1]{#2}}
\newcommandx{\change}[2][1=]{\todo[linecolor=blue,backgroundcolor=blue!25,bordercolor=blue,#1]{#2}}
\newcommandx{\info}[2][1=]{\todo[linecolor=cyan,backgroundcolor=cyan!25,bordercolor=cyan,#1]{#2}}
\newcommandx{\improvement}[2][1=]{\todo[caption={Short note},linecolor=violet,backgroundcolor=violet!25,bordercolor=violet,size=\tiny,#1]{#2}}
\newcommandx{\thiswillnotshow}[2][1=]{\todo[disable,#1]{#2}}
\setlength{\marginparwidth}{2cm}

\numberwithin{equation}{section}
\numberwithin{figure}{section}
\theoremstyle{plain}
\newtheorem{theorem}{Theorem}[section]
\theoremstyle{definition}
\newtheorem{definition}[theorem]{Definition}
\theoremstyle{plain}
\newtheorem{proposition}[theorem]{Proposition}
\theoremstyle{plain}
\newtheorem{corollary}[theorem]{Corollary}
\theoremstyle{plain}
\newtheorem{lemma}[theorem]{Lemma}
\theoremstyle{remark}
\newtheorem{remark}[theorem]{Remark}
\theoremstyle{definition}
\newtheorem{example}[theorem]{Example}
\theoremstyle{definition}

\newcommand{\ie}{\emph{i.e.} }
\newcommand{\ot}{\otimes}
\newcommand{\cat}[1]{\mathbf{#1}}
\newcommand{\ca}[1]{\mathcal{#1}}
\def\co{\colon\thinspace}

\newcommand{\id}{\operatorname{id}}

\def\h{{\mathfrak h}}

\def\Cross{\operatorname{Cross}}
\def\Der{\operatorname{Der}}
\def\End{\operatorname{End}}
\def\Aut{\operatorname{Aut}}


\newcommand{\e}{\mathfrak e}
\newcommand{\g}{\mathfrak g}
\newcommand{\der}{\operatorname{Der}}

\tikzstyle{my circle}=[draw, fill, circle, minimum size=3pt, inner sep=0pt]	
\tikzstyle{sq}=[draw, fill, rectangle, minimum size=3pt, inner sep=0pt]		
\tikzstyle{whitesq}=[draw, fill=white, rectangle, minimum size=4pt, inner sep=0pt]		
\tikzstyle{lie}=[draw,thin,fill, circle, minimum size=4pt, inner sep=0pt]	
\tikzstyle{prelie}=[draw,thin, fill=white, circle, minimum size=4pt, inner sep=0pt]	

\newcommand{\pl}{\;\triangleright}

\newcommand{\PSB}{\ca{PRT}}
\newcommand{\SB}{\ca{RT}}


\title{What is the Magnus expansion?}

\author{Kurusch Ebrahimi-Fard \and Igor Mencattini \and Alexandre Quesney}
\address[KEF]{Norwegian University of Science and Technology (NTNU), NO-7491 Trondheim, Norway. Centre for Advanced Study (CAS),  Drammensveien 78, 0271 Oslo, Norway.}
\email{kurusch.ebrahimi-fard@ntnu.no}
\urladdr{https://folk.ntnu.no/kurusche/}
\address[IM]{Universidade de S\~ao Paulo (USP), S\~ao Carlos, SP, Brazil.}
\email{igorre@icmc.usp.br}
\urladdr{https://sites.icmc.usp.br/igorre/}
\address[AQ]{Universidad Polit\'ecnica de Madrid (UPM), Campus de Montegancedo, Avenida de Montepr\'{i}ncipe s.n., 28660 Boadilla del Monte, Madrid, Spain.}
\email{alexandre.quesney@upm.es}

\subjclass[2020]{16S30, 16T30, 17A30}
\keywords{Magnus expansion, crossed morphism, post-Lie algebra, pre-Lie algebra, universal enveloping algebra.}

\date{\today}


\begin{document}

\begin{abstract}
 The Magnus expansion, a specific infinite Lie series, was introduced by Wilhelm Magnus in 1954. His work tackled a fundamental question in applied mathematics: computing the logarithm of the solution to a first-order homogeneous linear differential equation involving a linear operator. Since its discovery it has evolved into a pivotal tool used across diverse disciplines, including physics, chemistry, and engineering. Over the past 25 years, however, the Magnus expansion has undergone significant mathematical developments which revealed an intricate interplay between algebra, combinatorics, and geometry. 
 By emphasizing a modern perspective based on pre- and post-Lie algebras, we explore the Magnus expansion within the Guin--Oudom framework. This specifically includes Magnus' original expansion, which is notable for its pre-Lie algebra structure first observed by Agrachev and Gamkrelidze. This approach enables us to interpret the inverse of the Magnus expansion as a crossed morphism of formal groups. Consequently, the question posed in the title is addressed by showing that the Magnus expansion can be identified as a relative Rota--Baxter operator.
\end{abstract}


\maketitle


	
\section{Introduction} 
\label{sec:intro}

In his groundbreaking 1954 work \cite{Magnus}, Wilhelm Magnus addressed a fundamental problem in applied mathematics: expressing the logarithm of the solution to a first-order homogeneous linear initial value problem defined in terms of a linear operator. He provided a solution in the form of an invertible infinite Lie series known as the Magnus expansion. This expansion offers significant insights together with powerful computational tools for approximating solutions of differential equations across various scientific and engineering disciplines. We refer the reader to Blanes et al.~\cite{Blanesetal2008} for an authoritative account on applications of the Magnus expansion.

Studying the Magnus expansion quickly reveals an intricate interplay of geometric, algebraic and combinatorial structures, most notably pre- and post-Lie algebras. Consequently, the Magnus expansion has seen significant mathematical developments over the past 25 years, including applications outside the classical realm of ordinary differential equations. Given these advancements, it is natural to wonder about a more precise understanding of the mathematical nature of this specific Lie series. This article explores the Magnus expansion in the context of the Guin--Oudom framework, which permits to characterize the inverse of the Magnus expansion as a crossed morphism of formal groups. In return, this result allows us to answer the question posed in the title: the Magnus expansion is identified as a relative Rota--Baxter operator. It is worth remarking that Magnus' original expansion provides an obvious example thanks to its natural pre-Lie algebra structure, first observed by Agrachev and Gamkrilidze in the context of studying group-theoretic properties of flows, defined by nonstationary differential equations on manifolds \cite{AG}.

\medskip


\subsection{The origin of the Magnus expansion}

Starting from a matrix-valued first-order homogeneous linear initial value problem
\begin{equation*}
		\dot{Y}(t) = A(t) Y(t), \qquad Y(0)=Y_0,
\end{equation*}
the computation of the logarithm 
$$
	\log (Y(t))=:\Omega(A)(t)
$$ 
permits to express the solution as a proper matrix exponential
\begin{equation*}
		Y(t) = \exp(\Omega(A)(t)) Y_0.
\end{equation*} 

In the 1954 landmark paper \cite{Magnus}\footnote{See also Magnus' earlier research report on ``Algebraic aspects in the theory of systems of linear differential equations" \cite{Magnus-report}.}, Wilhelm Magnus described the function $\Omega(A) (t)$ as the unique solution of the following differential equation 
\begin{equation}
\label{eq:MagnusODE}
	\dot{\Omega}(A)(t) 
	= A(t) + \sum_{n\geq 1} \frac{B_n}{n!} \text{ad}_{\Omega(A)}^n (A)(t)
	 = \frac{\text{ad}_{\Omega(A)}}{e^{\text{ad}_{\Omega(A)}} -1} (A)(t),
\end{equation}
with initial value $\Omega(A)(0)=0$. 
Here $\text{ad}^n$ is the $n$-th iteration of the Lie adjoint map $\text{ad}_X(Y) := [X,Y]$. The coefficients $B_n$ are the Bernoulli numbers, 
$$
	B_0=1,\; 
	B_1=-\frac 12,\;
	B_2=\frac 16,\;
	B_3=0,\;
	B_4=-\frac {1}{30},\;
	B_5=0,\;
	B_6=-\frac{1}{42},\;
	B_7=0,\ \ldots
$$
which are defined in terms of the exponential generating function
$$
	\frac{z}{e^z-1}=\sum_{m\ge 0}B_m\frac{z^m}{m!}.
$$

By introducing a formal parameter $h$, one obtains an infinite series known as Magnus expansion 
\begin{equation}
\label{MagnusExpansion}
	\Omega(hA)(t) = \sum_{k\geq 1} h^k\Omega_{k}(A)(t), 
\end{equation}
where 
\begin{equation*}
	\Omega_{1}(A)(t) = \int_{0}^{t}A(s)ds 
\end{equation*}	
and higher order terms can be expressed as a combination of iterated integrals and Lie brackets by expanding and integrating
\begin{align}
 \label{plm-rec2}
	\dot\Omega_k(A)
	&=\sum_{m=1}^{k-1}\frac{B_m}{m!}\sum_{r_1+\cdots+r_m=k-1 \atop r_i >0 }
	\Big[\Omega_{r_1}(A),\big[ \Omega_{r_2}(A), \cdots [\Omega_{r_m}(A), A]\big]\cdots\Big],\quad k>1.
\end{align}
In particular: 
\begin{equation*}
	\Omega_{2}(A)(t) = - \frac{1}{2}\int_{0}^{t} \left(  \big[\int_{0}^{s_1}A(s_2)ds_2,A(s_1)\big] \right)ds_1		
\end{equation*}
and
\begin{multline*}
	\Omega_{3}(A)(t) = \frac{1}{4}\int_{0}^{t} \left(  
			\Big[\int_{0}^{s_1}\big[\int_{0}^{s_2} A(s_3)ds_3,A(s_2)\big]ds_2,A(s_1)\Big] \right)ds_1  \\+
	 \frac{1}{12}\int_{0}^{t} \left( \Big[\int_{0}^{s_1} A(s_2)ds_2, \big[\int_{0}^{s_1}A(s_3)ds_3,A(s_1)\big]\Big] \right)ds_1 . 
\end{multline*}

Writing $\Omega(hA) = h\Omega_1(A) + \tilde{\Omega}(hA)$, one may interpret the higher-order terms, $\tilde{\Omega}(hA)$, as an infinite series of correction terms needed in light of non-commutativity of the matrix product. These contributions have to be included so that the exponential $\exp\big(h\int_0^t A(s)ds + \tilde{\Omega}(hA)(t)\big)$ matches the solution $Y(t)$ expressed as a --formal-- series of iterated integrals called time-ordered exponential
\begin{equation}
\label{eq:timeord}
	Y_h(t)=\mathcal{T}\!\!\exp(h\int_0^t A(s)ds)Y_0
	:= \Big(\text{id} + \sum_{n>0} h^n \idotsint\limits_{0 \le s_1 \le \cdots \le s_n \le t} A(s_1) \cdots A(s_{n})\ ds_1 \cdots ds_{n} \Big)Y_0.
\end{equation}

\begin{remark}
\label{rmk:someremarks1}
\begin{enumerate}

\item \label{item:rightMagnus} 
Considering instead the initial value problem $\dot{Y}(t) = Y(t) A(t)$, $Y(0)=Y_0,$ one can show that the computation of $\log(Y(t))$ results in the slightly different Magnus expansion $\tilde{\Omega}(A)$ defined in terms of modified Bernoulli numbers $\tilde{B}_k$ as coefficients, where $\tilde{B}_1= -B_1=1/2$ and $\tilde{B}_k=B_k$ for all $k \neq 1$. 

\smallskip

\item \label{item:convergence} 
Regarding error estimates and convergence results for the expansion $\Omega(A)$, we refer the reader to reference \cite{Blanesetal2008} which provides a rather comprehensive survey on the Magnus expansion and its wide range of applications across diverse disciplines such as physics, chemistry, and engineering. More recently, the Magnus expansion was revisited in \cite{Beauchardetal2023,Beauchardetal2024} in the context of nonlinear systems in control theory. 

\smallskip

\item \label{item:ChenStrichartz} 
Following B.~Mielnik's and J.~Plebański's classical reference \cite{MP1970}, the Magnus expansion can be expressed in a rather different form widely known as Chen--Strichartz formula \cite{Strichartz1987} 
\begin{equation}
\label{eq:ChenStrich}  
	\Omega(A)(t) 
	= \sum_{n \ge 1} \sum_{\sigma \in S_n} 
	\frac{(-1)^{d_\sigma}}{n^2{n-1 \choose d_\sigma}}\;  
	\idotsint\limits_{0 \le s_1 \le \cdots \le s_n \le t}
	\Big[A(s_{\sigma_1}),\big[A(s_{\sigma_2}), \cdots [A(s_{\sigma_{n-1}}),A(s_{\sigma_{n}})]\cdots \big] \Big]
	ds_1 \cdots ds_{n}.
\end{equation}
Here $S_n$ denotes the symmetric group of order $n$ and, for any permutation $\sigma \in S_n$, $d_\sigma:=|D(\sigma)|$, where
$$
	D(\sigma):=\{i \ :\ \sigma (i)>\sigma(i+1),\ 1\le i <n\} \subseteq \{1,\ldots,n-1\}
$$
is the descent set of the permutation $\sigma$. We refer the reader to the more recent article \cite{Arnaletal2018} where the authors employ the Malvenuto--Reutenauer Hopf algebra of permutations to give a general expression for the terms in the Magnus expansion as iterated integrals of independent right-nested commutators.
\end{enumerate}
\end{remark}


\subsection{From the Magnus expansion to pre-Lie algebras}

Returning to the differential equation \eqref{eq:MagnusODE}, a closer look at the terms in the expansion \eqref{MagnusExpansion} leads to the following key observation, first made by A.~Agrachev and R.~Gamkrelidze in \cite{AG}, which identifies a specific bilinear operation on matrix-valued functions 
\begin{equation}
\label{eq:chrono1}
	(X \pl Y) (t) := \big[\int_{0}^{t}X(s)ds,Y(t)\big] = \int_{0}^{t}X(s)ds\, Y(t) - Y(t) \int_{0}^{t}X(s)ds
\end{equation}
as a crucial structural component in the Magnus expansion. For instance, the second order term of $\Omega(A)(t)$ writes 
\begin{align*}
	\Omega_{2}(A)(t) 
	&= - \frac{1}{2}\int_{0}^{t} \left(  \big[\int_{0}^{s_1}A(s_2)ds_2,A(s_1)\big] \right)ds_1\\
	&= - \frac{1}{2}\int_{0}^{t} (A \pl A)(s_1)ds_1.
\end{align*} 
Using the Lie adjoint map, we see that 
\begin{align}
\label{adjoint}
\begin{aligned}
	\Omega_{2}(A)(t) 
	&= - \frac{1}{2}\int_{0}^{t}  \text{ad}_{\int_{0}^{s_1}\!\! A(s_2)ds_2}(A(s_1)) ds_1\\
	&= - \frac{1}{2}\int_{0}^{t} 	L_{A\pl}(A)(s_1) ds_1,
\end{aligned}	
\end{align} 
where the left-multiplication map 
$$
	L_{X\! \pl }(Y)(t):= (X\! \pl Y)(t)
$$ 
and $L^n_{X\! \pl }(Y)(t):= L^{n-1}_{X\! \pl } (X\! \pl Y)(t)$, for $n>1$. This gives rise to both a deeper understanding as well as a more transparent presentation of the Magnus expansion. Regarding the latter claim, we note that, by following example \eqref{adjoint}, Lie brackets and integrals can be absorbed by rewriting $\dot{\Omega}(A)$ in terms of \eqref{eq:chrono1} and the identity
\begin{equation}\label{eq:idexpl}
	\text{ad}_{\Omega(A)}(A)(t) 
	 = [\Omega(A)(t),A(t)] 
	 =\text{ad}_{\int_{0}^{t}\! \dot{\Omega}(A)(s) ds}(A(t))
	 = L_{\dot{\Omega}(A)\! \pl }(A)(t), 
\end{equation}
where we used $\Omega(A)(0)=0$. This results in
\begin{align}
\label{eq:preLieMag1a}
	\dot{\Omega}(hA)(t) 
	= hA(t) + \sum_{n\geq 0} \frac{B_n}{n!} \text{ad}_{\Omega(A)}^n (hA)(t)
	= hA(t) + \sum_{n\geq 1} \frac{B_n}{n!} L_{\dot{\Omega}(A)\pl }^n (hA)(t).
\end{align}
Under this rewriting, Formula \eqref{plm-rec2} becomes
\begin{align}
 \label{plm-rec3}
	\dot{\Omega}_k(A)(t)
	&=\sum_{m=1}^{k-1}\frac{B_m}{m!}\sum_{r_1+\cdots+r_m=k-1 \atop r_i >0 }
	\Big(\dot{\Omega}_{r_1}(A) \pl \big( \dot{\Omega}_{r_2}(A) \pl \cdots (\dot{\Omega}_{r_m}(A) \pl A)\big)\Big)(t), \quad k>1.
\end{align}
Expanding -- up to order three -- gives an arguably simpler series 
\begin{equation}
 \label{preLieMagnusseries}
	\dot{\Omega}(A)(t)
	= hA(t) - \frac{h^2}{2} (A \pl A)(t) + \frac{h^3}{4} ((A \pl A)\pl A)(t) + \frac{h^3}{12} (A \pl (A\pl A))(t) + \cdots .
\end{equation}

Agrachev and Gamkrelidze called the binary operation \eqref{eq:chrono1} chronological product and showed that it satisfies the so-called chronological identity \cite{AG}
\begin{equation}
\label{eq:chronoB1}
	(X \pl Y) \pl Z - X \pl (Y \pl Z) = (Y \pl X) \pl Z - Y \pl (X \pl Z),
\end{equation}
which defines the notion of chronological algebra; continuous matrix-valued functions therefore provide an example of chronological algebra. In fact, any associative algebra satisfies  \eqref{eq:chronoB1}. Moreover, identity \eqref{eq:chronoB1} is sufficient for the commutator 
\begin{equation}
\label{eq:chronoLie}
	[X,Y]:= X \pl Y - Y \pl X
\end{equation}
to define a Lie bracket. Hence, chronological algebras are Lie admissible. Chronological calculus is an integral part of control theory \cite{AgraSary2004}.

Interestingly enough, identity \eqref{eq:chronoB1} was already well-known in the mathematics literature under the name (left) pre-Lie or quasi-symmetric relation. It can be traced back to seminal works by E.~Vinberg \cite{Vinberg1963} and M.~Gerstenhaber \cite{Gerstenhaber1963}, where it has played important roles in geometry, algebra, and combinatorics. For more details and background, we refer the reader to D.~Manchon's short review \cite{Manchon2009} as well as the recent textbook by P.~Cartier and F.~Patras \cite{CarPat2021}. In Section \ref{sec:Magnus}, see also Subsection \ref{ss:Magpost}, these introductory comments will be extended to the case of a post-Lie algebra. From this very general setting we will be able to recover the Magnus expansion for a general pre-Lie algebra, see Corollary \ref{cor:inversecumulants} and Remark \ref{rem:classvsmod}, of which the chronological case discussed above is a particular instance.


\subsection{The Magnus expansion and other structures}

We return to the Magnus expansion. In the form presented in \eqref{eq:preLieMag1a}, the series is known as pre-Lie Magnus expansion in the context of general pre-Lie algebras \cite{ChaPat2013,KM,EFPatras2013}. Returning to Agrachev's and Gamkrelidze's concrete pre-Lie product \eqref{eq:chrono1} and defining the binary left- and right-integration operations 
\begin{equation}
\label{eq:halfshuffle}
	(X \succ Y) (t) = \int_{0}^{t}X(s)ds\, Y(t) 
	\qquad
	(X \prec Y) (t) = X(t)\, \int_{0}^{t}Y(s)ds ,
\end{equation}
we observe that   
\begin{equation}
\label{eq:chronoB}
	(X \pl Y) (t) = (X \succ Y) (t) - (Y \prec X) (t).
\end{equation}
The fact that \eqref{eq:chrono1} respectively \eqref{eq:chronoB} define pre-Lie products can be deduced from the observation that the operations \eqref{eq:halfshuffle} satisfy the so-called noncommutative shuffle (or dendriform) identities
\begin{align}
\label{shufflealgebra}
\begin{aligned}
	(X \prec Y) \prec Z	
	&=	X \prec (Y \ast Z), 						\\
 	(X \succ Y) \prec Z	
	&=	X \succ (Y \prec Z), 						\\
	(X \ast Y) \succ Z	
	&=	X \succ (Y \succ Z),		
\end{aligned}				
\end{align}
where the noncommutative shuffle (or dendriform) product is
\begin{equation}
\label{eq:shuffle}
 	X * Y := X \succ Y + X \prec Y. 
 \end{equation}
The shuffle identities \eqref{shufflealgebra} imply that the product \eqref{eq:shuffle} is associative. Observe that the Lie bracket defined in terms of the associative product \eqref{eq:shuffle} coincides with that defined in terms of the pre-Lie product \eqref{eq:chronoLie}. Hence, continuous matrix-valued functions provide an example of a noncommutative shuffle algebra. For instance, note that in explicit form, the first identity of \eqref{shufflealgebra} just encodes  integration by parts
\begin{eqnarray*}
	((X \prec Y) \prec Z)(t) 
	&=&  X(t)\int_{0}^{t}Y(s)ds \int_{0}^{t}Z(r)dr\\
	&=&  X(t)\int_{0}^{t} \Big(\int_{0}^{s}Y(r)drZ(s) + Y(s)\int_{0}^{s}Z(r)dr  \Big)ds 
	=  X(t)\int_{0}^{t} (Y * Z) (s)ds.	
\end{eqnarray*}
The shuffle and pre-Lie perspective on the Magnus expansion becomes more interesting when noting that there are examples of shuffle and pre-Lie algebras which do not derive from an integral operator. See for instance \cite{CEFPP2022,CP2023}, where relations between different types of cumulants in noncommutative probability theory are described using the pre-Lie Magnus expansion \eqref{eq:preLieMag1a}. See \cite{KM,EFPatras2021} for more background and details. 

\smallskip


\subsection{The Magnus expansion and post-Lie algebras}
\label{ss:Magpost}

Recently, it has become evident that the concept of a pre-Lie algebra is a specific instance of the broader concept of a post-Lie algebra, which was first introduced in \cite{BV2007}. See also \cite{MKW}, where the closely related notion of D-algebra appeared independently. In a post-Lie algebra, the Lie brackets of two Lie algebras, $\mathfrak{h}$ and ${\mathfrak{g}}$, defined over the same vector space $V$, are related in terms of a third bilinear product $\pl : V \times V \to V$ such that 
$$      
	[x,y]_{{\mathfrak{g}}} = x \pl  y - y \pl  x + [x,y]_{\mathfrak{h}}.
$$
The relations between the product $\pl$ and the Lie bracket on $\mathfrak{h}$ giving rise to the last identity are
\begin{align*}
	x \pl [y,z]_{\mathfrak{h}}
	&= [x \pl y,z]_{\mathfrak{h}} + [y,x \pl z]_{\mathfrak{h}}\\
	[x,y]_{\mathfrak{h}} \pl z
	&= (x \pl y) \pl z - x \pl (y \pl z) - (y \pl x) \pl z + y \pl (x \pl z).
\end{align*}
They define the notion of post-Lie algebra. Note that if the Lie algebra $\mathfrak{h}$ is trivial, then the product $\pl$ reduces to a pre-Lie product. For more details, we refer the reader to Section \ref{sec: crossed to post-Lie}. See also \cite{CEFMK} for a brief review. 
 
In line with Agrachev's and Gamkrelidze's identification of the pre-Lie nature of the classical Magnus expansion, the post-Lie Magnus expansion, see Formula \eqref{eq:postLieMagIntro} below, as well as \eqref{eq:postLieMagorder3} and \eqref{eq:preLieMag in hbar}, permits to express solutions of matrix-valued finite difference initial value problems as a proper matrix exponential \cite{KM14}. The underlying summation by parts calculus is described in terms of noncommutative quasi-shuffle (or tridendriform) algebra. See \cite{CEFMK,CEFO,EFPatras2021b} for more details, including links to stochastic as well as geometric integration theory. However, a detailed description of the Magnus expansion within the framework of post-Lie algebras is more complex and was provided in \cite{AEFM22,EFLMMK,EFLM,MQS,MQ-Crossed}. As an initial comparison, we present the post-Lie Magnus expansion up to the third order in the parameter~$h$.
\begin{equation}
\label{eq:postLieMagIntro}
	\chi(hx) 
	= hx - \frac{h^2}{2} x \pl x + \frac{h^3}{4} (x \pl x)\pl x + \frac{h^3}{12} x \pl (x \pl x) + \frac{h^3}{12} [x \pl x,x]_{\mathfrak{h}} + \cdots ,	
\end{equation}
indicating how a trivial Lie algebra $\mathfrak{h}$  ($[-,-]_{\mathfrak{h}}=0$) leads to the pre-Lie Magnus expansion \eqref{preLieMagnusseries}.

\smallskip


\subsection{The Magnus expansion and integration of post-Lie algebras}

In this survey we revisit the work \cite{MQ-Crossed} by showing that the viewpoint of crossed morphisms permits to understand the (pre-) post-Lie Magnus expansion as a crucial tool in the integration theory of (pre-) post-Lie algebras. Proposition \ref{lem: varPhi crossed morph} identifies the post-Lie Magnus expansion as the inverse of a crossed morphism between two formal groups naturally associated to a post-Lie algebra. Consequently, the post-Lie Magnus expansion can be identified as a relative Rota--Baxter operator, see Remark \ref{rmk:relRB}. 
 
\smallskip\smallskip

\noindent We close this introduction by highlighting the main points of each section.


\subsubsection{Plan of the present work}

We have divided the material of this note into six main sections followed by an appendix. Each section is devoted to the presentation of the (definitions and properties of each of the) main characters of our narrative, i.e.~crossed morphisms (Sections \ref{sec:crossedmorphism} and \ref{sec: crossed to post-Lie}), post-Lie algebras (Sections \ref{sec: crossed to post-Lie}, \ref{sec:exainf} and \ref{sec:unipo}) and the Magnus expansion (Section \ref{sec:Magnus}). We also explain how they interact with each other. In the appendix, we collect several of the relevant algebraic and differential geometric concepts used in Sections \ref{sec:crossedmorphism}-\ref{sec:Magnus}.

In more detail, in Section \ref{sec:crossedmorphism} we first introduce the notion of crossed morphism both in the case of Lie groups, see Subsection \ref{sec: subsec cross morph gp}, and of Lie algebras, see Subsection \ref{ssec:liealgebras} Then we recall how they relate to each other, see Subsubsection \ref{sssec:group2lie}. Moreover, in Lemma \ref{lem:crosMC} and Proposition \ref{pro:37} we enclose the relation of a crossed morphism between the Lie algebras $\g$ and $\h$ and the Maurer--Cartan elements of the differential graded Lie algebras $\Omega^\bullet(G,\h)^G$ and $C^\bullet(\g,\h)$, whose definition and main properties are collected in the appendix.

In Section \ref{sec: crossed to post-Lie}, we analyze post-Lie structures from the viewpoint of crossed morphisms. More precisely, after borrowing the notion of post-Lie group from \cite{Bai-Guo-Sheng-Tang-post-groups}, we prove that one can form a category isomorphic to the category of the so-called split special extensions of Lie groups, see Proposition \ref{prop: iso SE PL groups}. Then we introduce the notion of post-Lie algebra $(\h,\pl)$ and we show that the corresponding category is isomorphic to the category of special split extensions of Lie algebras, see Proposition \ref{pro:equiv}, which is the infinitesimal analogue of the category of special split extensions of Lie groups. Finally in Proposition \ref{prop:postgroups}, we explain how post-Lie algebras and post-Lie groups relate to each other.
In Section \ref{sec:exainf} we collect a few examples of post-Lie algebras having a differential geometric origin. 

In Section \ref{sec:unipo} we make an in-depth analysis of the universal enveloping algebra of a post-Lie algebra, aiming to introduce the so-called Grossman--Larson product, see Proposition \ref{pro:69}. Along the way, we introduce the notion of D-bialgebra and we discuss its role in the theory of post-Lie algebra, see Proposition \ref{pro:68}. 

In the sixth and final section, we explain how one can integrate a post-Lie algebra $(\h,\pl)$. More precisely, first we present the notion of completion of a post-Lie algebra with respect to a descending filtration. Then we show that a suitable twist of the map $\Theta$ introduced in Section \ref{sec:unipo}, Formula \eqref{eq:starmon}, defines an invertible crossed morphism between two formal groups $\mathcal G$ and $\mathcal H$, stemming from the completions of the Lie algebras $\g$ and $\h$ naturally associated to the post-Lie algebra $(\h,\pl)$.
The inverse of this crossed morphism is a remarkable Lie series called post-Lie Magnus expansion which, in our context, is the main ingredient to define the post-Lie group integrating the original post-Lie algebra, see Proposition \ref{pro:79} and Theorem \ref{thm:81}. Our approach to post-Lie algebras and their integration discloses an interesting (and novel) relation between the Grossman--Larson product and the post-Lie Magnus expansion summarized in Lemma \ref{lem:glth} and in Theorem \ref{thm:glth}. 

\vspace{0.2cm}


\subsection*{Acknowledgements}

We are grateful to the anonymous
referees, whose suggestions helped us to substantially improve the content and the presentation of our
manuscript.
KEF is supported by the Research Council of Norway through project 302831 {\it{Computational Dynamics and Stochastics
on Manifolds}} (CODYSMA). He would also like to thank the Centre for Advanced Study (CAS) in Oslo for its warm hospitality and financial support during the research project {\it{Signatures for Images}} (SFI).

\vspace{0.2cm}


\subsection{Notations and conventions} 
\label{sssec:notations}

\label{ssec:notation}
Hereafter, $M$ will denote a real smooth manifold, $C^\infty(M)$ is the ring of smooth functions on $M$ and $\mathfrak X(M)$ the (left) $C^\infty(M)$-module of vector fields on $M$. As usual, every $X\in\mathfrak X(M)$ will be identified with a derivation on $C^\infty(M)$, defined by $X(f)(m)=X_mf$, for all $m\in M$, and $\mathfrak X(M)$ will be endowed with the Lie bracket defined by $[X_1,X_2](f)=X_1(X_2(f))-X_2(X_1(f))$, for all $f\in C^\infty(M)$. The term ``smooth'' will be used synonymously with ``differentiable'' of class $C^\infty$ and if $\phi:M_1\rightarrow M_2$ is a smooth application, its differential, at the point $m$, will be denoted by $\phi_{\ast,m}$, where the latter is a linear map between the tangent spaces $T_mM_1$ and $T_{f(m)}M_2$, defined by the formula $\phi_{\ast, m}(v)(f)=v(f\circ\phi)$, for all $v\in T_mM_1$ and $f\in C^\infty(M_2)$.

\vspace{0.2cm}

	
\section{Crossed morphisms}
\label{sec:crossedmorphism}

In this section we will introduce the notion of crossed morphism both in the case of Lie groups and of Lie algebras. We will prove how crossed morphisms of Lie groups relate to equivariant one-forms and to Maurer--Cartan (MC)-elements. In particular, it will be proven that to every crossed morphism between Lie groups corresponds a crossed morphism between the corresponding Lie algebras. For more information about crossed morphisms and related topics, we refer the reader to the monograph \cite{Hilgert-Neeb}. We refer to Appendix \ref{sec:diffformV} for a collection of background material.


\subsection{Groups}
\label{sec: subsec cross morph gp}


\subsubsection{Crossed morphisms}

Assume $G$ and $H$ to be Lie groups and let $\phi: G\rightarrow \text{Aut}(H)$ be a Lie group morphism. 
Hereafter we will denote by $\phi_g$ the image of $g \in G$ via $\phi$, i.e.~$\phi_g:=\phi(g)$ for all $g\in G$. 
In particular, for all $g\in G$, $\phi_g:H \rightarrow H$ is an isomorphism of Lie groups.

\begin{lemma}\label{lem:semdirG}
$G \times H$ with the composition law 
\begin{equation}
\label{eq:prophi}
	(g_1,h_1)(g_2,h_2)=(g_1g_2,h_1\phi_{g_1}(h_2)),
\end{equation}
is a Lie group, called the semi-direct product of $G$ and $H$ and usually denoted by $G\ltimes_\phi H$.
\end{lemma}

\begin{proof} 
The identity for the product is $(e_G,e_H)$ and, for every $(g,h) \in G\times H$, $(g,h)^{-1}:=(g^{-1},\phi_{g^{-1}}(h^{-1}))$ is an inverse. 
Finally, associativity and smoothness of \eqref{eq:prophi} can be easily checked.
\end{proof}

The Lie group $G\ltimes_\phi H$ is an extension of $G$ by $H$. In other words, one has the following short exact sequence of Lie groups 
\begin{equation}
\label{eq: cano. extension groups}
	1\to H\xrightarrow{i} G\ltimes_\phi H \xrightarrow{\pi} G \to 1,
\end{equation}
where $i$ is the canonical inclusion and $\pi$ is the projection onto the first factor.  Now, consider a smooth map  $s\co G\to  G\times H$   such that  $\pi\circ s = \id$. Clearly, $s(g) = (g,f(g))$ for some smooth map $f\co G \to H$. 

\begin{definition}[Crossed morphism of groups]
A smooth map $f:G\rightarrow H$ is a crossed morphism relative to $\phi$ if for all elements $g_1,g_2\in G$
\begin{equation}
\label{eq:crossmor}
	f(g_1g_2)= f(g_1)\phi_{g_1}(f(g_2)).
\end{equation}
\end{definition}

The next result follows.

\begin{lemma}
$f:G\rightarrow H$ is a crossed morphism relative to $\phi$ if and only if $(\mathrm{id}_G,f):G\rightarrow G\ltimes_\phi H$ is a group morphism.
\end{lemma}

\begin{proof} 
It follows from a direct computation that
\[
	(g,f(g))(h,f(h))\stackrel{\eqref{eq:prophi}}{=}\big(gh,f(g)\phi_g(f(h))\big)
\]
which equals $(gh,f(gh))$ if and only if $f$ satisfies \eqref{eq:crossmor}.
\end{proof}

The previous result entails the following

\begin{corollary}
$f:G\rightarrow H$ is a crossed morphism relative to $\phi$ if and only if its graph is a subgroup of $G\ltimes_\phi H$.
\end{corollary}

\begin{proof} 
It suffices to note that the graph of $f$ coincides with the image of $(\mathrm{id}_G,f)$.
\end{proof}

\begin{remark}\label{rem:RRB}
\begin{enumerate}

\item \label{rem:RRB1}
Note that if $f:G \rightarrow H$ is a crossed morphism relative to a $G$-action on $H$, then $f(e_G)=e_H$. 

\item \label{rem:RRB2} Moreover, suppose that $f:G \rightarrow H$ is an invertible crossed morphism. Then relation \eqref{eq:crossmor}, read on $f^{-1}$, becomes for all $h_1,h_2\in H$
\begin{equation}
\label{relRBop}
	f^{-1}(h_1)f^{-1}(h_2)=f^{-1}(h_1\phi_{f^{-1}(h_1)}(h_2)) .
\end{equation}
This follows at once from \eqref{eq:crossmor}, if $h_i=f(g_i)$, $i=1,2$. Such an $f^{-1}$ is called a Rota--Baxter operator relative to $\phi$. See \cite[Def.~3.1]{JSZ2021} for the definition of the notion of relative Rota--Baxter operator. 
\end{enumerate}
\end{remark}

	
\subsubsection{Logarithmic derivative}	
\label{sssec:logderiv}

With the aim of introducing the infinitesimal analogue of a crossed morphism, we recall the following notion. Hereafter, $M$ will denote a smooth manifold, $H$ is a Lie group and $\mathfrak h$ its Lie algebra. Moreover, unless differently specified, the Maurer--Cartan (MC) form considered will be always left-invariant. We refer to Appendix \ref{sec:diffformV} for a collection of some background material.
	
\begin{definition}[Logarithmic derivative]
The logarithmic derivative of $f\in C^\infty(M,H)$ is the element $\delta f\in\Omega^1(M,\mathfrak h)$ defined for all $m \in M$ and $v \in T_mM$ by 
\begin{equation*}
	\delta f_m(v)=(L_{f(m)^{-1}})_{\ast,f(m)}f_{\ast,m}(v).
\end{equation*}
\end{definition}
	
Note that the formula defining the notion of logarithmic derivative is, by chain-rule, equivalent to 
\begin{equation}
	\delta f_m(v)=\theta_{f(m)}(f_{\ast,m}v)=(f^\ast\theta)_m(v), \qquad \forall m\in M,\,v\in T_mM.\label{eq:mci}
\end{equation}
In other words, the logarithmic derivative of $f:M\rightarrow H$ is the pull-back to $M$ via $f$ of the MC-element $\theta$ of $H$, that is
\begin{equation}
\label{eq:MC2}
	\delta f=f^\ast\theta.
\end{equation}
See Example \ref{ex:MCform} in Appendix \ref{sec:diffformV} and the comments that follow it.

\begin{lemma}\label{lem: log der MC element}
If $f\in C^\infty(M,H)$, then  $\delta f$ is an $\mathrm{MC}$-element of $\Omega^1(M,\mathfrak h)$.
\end{lemma}
	
\begin{proof}
Since $\delta f=f^*\theta$ and $\theta$ is a MC-element, we have that:
\begin{equation*}
	df^*\theta = f^*d\theta = -\frac{1}{2}f^*[\theta\wedge\theta] = -\frac{1}{2}[f^*\theta\wedge f^*\theta].
\end{equation*}
\end{proof}

Note that $C^\infty(M,H)$ is a group, whose binary operation, inverse and identity are induced by those of $H$, i.e.~for all $m\in M$ and $f,f_1,f_2\in C^\infty(M,H)$, we have $f_1f_2(m):=f_1(m)f_2(m)$, $f^{-1}(m):=f(m)^{-1}$ and the identity $1\in C^\infty(M,H)$ is the constant map $1:M\rightarrow H$  which associates to every $m \in M$ the identity element $e_H$. The logarithmic derivative interacts with the group structure of $C^\infty(M,H)$ as explained in the following proposition. See \cite{Neeb} and references therein.

\begin{proposition}
For all $f_1,f_2\in C^\infty(M,H)$, the following Leibniz rule holds true
\begin{equation}
\label{eq:leib}
	\delta (fg)=\delta f+\text{Ad}_{f^{-1}}\delta g.
\end{equation}
\end{proposition}

\begin{remark}
Note that Formula \eqref{eq:leib} means that for all $m\in M$
\[
	\delta (fg)_m=\delta f_m+\text{Ad}_{f^{-1}(m)}\delta g_m,
\]
is an identity between two applications from $T_mM$ to $\mathfrak h$, i.e., for all $v\in T_mM$ one has that
\[
	\delta (fg)_m(v)=\delta f_m(v)+\text{Ad}_{f^{-1}(m)}\delta g_m(v),
\]
is an identity in $\mathfrak h$.
\end{remark}

The above formula implies that 
\begin{equation}
\label{eq:one}
	\delta 1=0.
\end{equation}
In fact, taking $f=1=g$ in Formula \eqref{eq:leib} one gets $\delta 1=\delta (1\cdot1)=\delta 1+\delta 1$. Formulas \eqref{eq:leib} and \eqref{eq:one} imply for all $f \in C^\infty(M,H)$ that
\begin{equation}
	\delta f^{-1}=-\text{Ad}_f\delta f.\label{eq:inverse}
\end{equation}
In fact, for all $f\in C^\infty(M,H)$ the following holds
\[
	0\stackrel{\eqref{eq:one}}{=}\delta 1=\delta (f^{-1}f)\stackrel{\eqref{eq:leib}}{=}\delta f^{-1}+\text{Ad}_{f}\delta f.
\]
After these remarks we can go back to our setting where $G$ and $H$ are two Lie groups and let $\mathfrak g$ respectively $\mathfrak h$ be their Lie algebras. A given action $\phi:G\rightarrow\text{Aut}(H)$ induces a $G$-action on $\mathfrak h$, via the formula
\begin{equation}
	g.x=(\phi_g)_{\ast,e_H}x, \qquad \forall x\in\mathfrak h.\label{eq:Ghmor}
\end{equation} 
Under this assumption, $\alpha\in\Omega^1(G,\mathfrak h)$ will be called $G$-equivariant, see Definition \ref{def:Gequiv} in Appendix \ref{sec:diffformV}, if	 
\begin{equation}
	L_{g'}^\ast\alpha_g=g'.\alpha_g, \qquad \forall g,g'\in G, \label{eq:equiform}
\end{equation}
where the left-hand side of the previous formula should be read as follows. For all $v\in T_gG$
\[
g'.\alpha_g(v)\stackrel{\eqref{eq:Ghmor}}{=}(\phi_g)_{\ast,e_H}\alpha_g(v).
\]

It is worth recalling that the previous identity is equivalent to
\begin{equation}
\label{eq:equiform1}	
	\alpha_{g'g}\big((L_{g'})_{\ast,g}v\big)=g'.\alpha_g(v),	
\end{equation}
for all $g',g\in G$ and $v\in T_gG$, see Formula \eqref{eq:equivK1} in Appendix \ref{sec:diffformV}, which, in turn, implies that for every $\xi\in\mathfrak g$
\begin{equation}
\label{eq:equiform2}
	\alpha_{g'}\big(X_{\xi}(g')\big)=g'.\alpha_{e_G}(\xi),	
\end{equation}
where $X_\xi$ is the left-invariant vector field on $G$ defined by $\xi$. One can prove the following 
	
\begin{theorem}\label{th: crossed morph gp G-inv}
If $f\in C^\infty(G,H)$ is a crossed morphism relative to the $G$-action $\phi$, then
\begin{enumerate}
	\item $f(e_G)=e_H$;
	\item $\delta f\in\Omega^1(G,\mathfrak h)$ is $G$-equivariant.
\end{enumerate}
\end{theorem}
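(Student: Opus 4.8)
The plan is to treat (1) as an immediate consequence of \eqref{eq:crossmor}, and to obtain (2) by pulling the Maurer--Cartan form of $H$ back along the factorisation of $f$ that the crossed-morphism relation encodes. For (1) I would simply set $g_1=g_2=e_G$ in \eqref{eq:crossmor}. Since $\phi$ is a group morphism, $\phi_{e_G}=\text{id}_H$, so $f(e_G)=f(e_G)\,\phi_{e_G}(f(e_G))=f(e_G)f(e_G)$, and cancelling one factor gives $f(e_G)=e_H$; this was already observed in Remark \ref{rem:RRB}.

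For (2), recall that the logarithmic derivative satisfies $\delta f=f^\ast\theta$, with $\theta\in\Omega^1(H,\mathfrak h)$ the left-invariant Maurer--Cartan form of $H$. The first step is to read \eqref{eq:crossmor} as an identity of smooth maps $G\to H$: for each fixed $g'\in G$, the assignments $g\mapsto f(g'g)$ and $g\mapsto f(g')\,\phi_{g'}(f(g))$ coincide, that is,
\[
	f\circ L_{g'}=L_{f(g')}\circ\phi_{g'}\circ f,
\]
where $L_{g'}$ is left translation in $G$ and $L_{f(g')}$ left translation in $H$. Pulling $\theta$ back along both sides, using $(\psi_1\circ\psi_2)^\ast=\psi_2^\ast\psi_1^\ast$ together with the left-invariance $L_{f(g')}^\ast\theta=\theta$ from \eqref{eq:MC3}, I would obtain
\[
	L_{g'}^\ast(\delta f)=(f\circ L_{g'})^\ast\theta=(L_{f(g')}\circ\phi_{g'}\circ f)^\ast\theta=f^\ast\,\phi_{g'}^\ast\,\theta.
\]

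The decisive step is then to evaluate $\phi_{g'}^\ast\theta$, and I expect this to be the main obstacle. The claim is the general fact that any Lie group automorphism $\psi$ of $H$ pulls $\theta$ back to $(\psi_{\ast,e_H})^\sharp\theta$. To prove it I would differentiate the intertwining relation $\psi\circ L_{h^{-1}}=L_{\psi(h)^{-1}}\circ\psi$---valid precisely because $\psi$ is a homomorphism---at the point $h$, obtaining $(L_{\psi(h)^{-1}})_{\ast,\psi(h)}\circ\psi_{\ast,h}=\psi_{\ast,e_H}\circ(L_{h^{-1}})_{\ast,h}$; combined with the pointwise description $\theta_h=(L_{h^{-1}})_{\ast,h}$ from \eqref{eq:MC1}, this yields $(\psi^\ast\theta)_h=\psi_{\ast,e_H}\circ\theta_h$. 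Specialising to $\psi=\phi_{g'}$ and recalling the induced $G$-action $g'.x=(\phi_{g'})_{\ast,e_H}x$ of \eqref{eq:Ghmor}, this says exactly $\phi_{g'}^\ast\theta=g'.\theta$, where $g'.(-)$ acts on the $\mathfrak h$-values.

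Since the pullback $f^\ast$ commutes with the $\sharp$-action of the fixed linear map $g'.(-)$, by functoriality in $V$, see \eqref{eq:func}, I conclude
\[
	L_{g'}^\ast(\delta f)=f^\ast(g'.\theta)=g'.(f^\ast\theta)=g'.(\delta f),
\]
which is precisely the $G$-equivariance of $\delta f$ in the sense of \eqref{eq:equivK}/\eqref{eq:equiform}. The only place where the homomorphism property of $\phi_{g'}$---rather than its being a mere diffeomorphism---is genuinely used is the automorphism lemma, so that is where I would concentrate care; the remaining manipulations are formal properties of pullbacks and of left-invariance.
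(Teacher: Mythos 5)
Your proof is correct, and it rests on the same two pillars as the paper's: the reformulation of the crossed-morphism identity as $f\circ L_{g'}=L_{f(g')}\circ\phi_{g'}\circ f$, and the fact that each $\phi_{g'}$ is a group \emph{homomorphism} of $H$, not merely a diffeomorphism. The difference is in the packaging. The paper carries out the whole verification pointwise on tangent vectors: it expands both sides of \eqref{eq:equiform1} using $\theta_h=(L_{h^{-1}})_{\ast,h}$ and the chain rule, and checks that each reduces to $\big(L_{(\phi_{g'}(f(g)))^{-1}}\circ\phi_{g'}\circ f\big)_{\ast,g}v$, the homomorphism property entering through the identity $\phi_{g'}\circ L_{f(g)^{-1}}=L_{(\phi_{g'}(f(g)))^{-1}}\circ\phi_{g'}$. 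You instead work at the level of forms: after the pullback manipulation $L_{g'}^\ast(\delta f)=f^\ast\phi_{g'}^\ast\theta$ (which uses the left-invariance \eqref{eq:MC3}), you isolate the general lemma that any automorphism $\psi$ of $H$ satisfies $\psi^\ast\theta=(\psi_{\ast,e_H})^\sharp\theta$ --- the same homomorphism input, but stated once and for all --- and then conclude by the functoriality \eqref{eq:func}, i.e., that the pullback along $f$ commutes with the $\sharp$-action on the $\mathfrak h$-values. Your organization buys a reusable statement (a right-invariant variant of your automorphism lemma also yields the $\text{Ad}$-equivariance of $\theta^R$ mentioned in the paper) and avoids juggling base points; the paper's pointwise computation is more self-contained, never invoking the form-level pullback calculus, at the cost of longer strings of differentials. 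Part (1) is handled identically in both (it is the content of Remark \ref{rem:RRB}), and your cancellation argument for it is sound.
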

	
\begin{proof} 
In Remark \ref{rem:RRB}, it was already observed that if $f:G\rightarrow H$ is a crossed morphism, then $f(e_G)=e_H$. We show now that such an $f$ satisfies the second condition in the statement, i.e.~that 
\begin{equation}
\label{eq:identity}
	(\delta f)_{g'g}\big((L_{g'})_{\ast,g}v\big)=g'.(\delta f)_g(v),  \qquad  \forall g',g\in G,\;v\in T_gG,	
\end{equation}
see Formula \eqref{eq:equiform1}. To this end, let us first compute the left-hand side of the previous identity:
\begin{eqnarray*}
	(\delta f)_{g'g}\big((L_{g'})_{\ast,g}v\big)&=&\theta_{f(g'g)}\big(f_{\ast,g'g}(L_{g'})_{\ast,g}v\big)\\
	&=&\theta_{f(g'g)}\big((f\circ L_{g'})_{\ast,g}v\big)\\
	&=&\theta_{f(g')\phi_{g'}(f(g))}\big((f\circ L_{g'})_{\ast,g}v\big).
\end{eqnarray*}
Now one can observe that for every $g'\in G$
\[
	f \circ L_{g'}=L_{f(g)}\circ\phi_{g'}\circ f,
\]
which entails
\begin{eqnarray*}
	\theta_{f(g')\phi_{g'}(f(g))}\big((f\circ L_{g'})_{\ast,g}v\big)
	&=&\theta_{f(g')\phi_{g'}(f(g))}\big((L_{f(g')}\circ\phi_{g'}\circ f)_{\ast,g}v\big)\\
	&=&\big(L_{(f(g')\phi_{g'}(f(g)))^{-1}}\big)_{\ast,f(g')\phi_{g'}(f(g))}\big((L_{f(g')}\circ\phi_{g'}\circ f)_{\ast,g}v\big)\\
	&=&\big(L_{(\phi_{g'}(f(g)))^{-1}}\circ\phi_{g'}\circ f\big)_{\ast,g}v.
\end{eqnarray*}
On the other hand, first one observes that the right-hand side of \eqref{eq:identity} is 
\begin{eqnarray*}
	g'.(\delta f)_g(v)
	&=&(\phi_{g'})_{\ast,e_H}(\delta f)_g (v)=(\phi_{g'})_{\ast,e_H}\big(\theta_{f(g)}(f_{\ast,g}v)\big)\\
	&=& (\phi_{g'})_{\ast,e_H}(L_{f(g)^{-1}})_{\ast,f(g)}(f_{\ast,g}v)\\
	&=& (\phi_{g'}\circ L_{f(g)^{-1}}\circ f)_{\ast,g}(v).
\end{eqnarray*}
Now observe that 
\[
	(\phi_{g'}\circ L_{f(g)^{-1}}\circ f)(k)
	=\phi_{g'}(f(g)^{-1}f(k))=\phi_{g'}(f(g)^{-1})\phi_{g'}(f(k)),  \qquad \forall k\in G
\]
i.e.
\[
	\phi_{g'}\circ L_{f(g)^{-1}}\circ f
	=L_{\phi_{g'}(f(g)^{-1})}\circ\phi_{g'}\circ f=L_{(\phi_{g'}(f(g)))^{-1}}\circ\phi_{g'}\circ f,
\]
which implies the following equality
\[
	g'.(\delta f)_g(v)=(\phi_{g'})_{\ast,e_H}(\delta f)_g (v)=(L_{(\phi_{g'}(f(g)))^{-1}}\circ\phi_{g'}\circ f)_{\ast,g}v,
\]
proving the statement. 
\end{proof}

\begin{remark}
If $G$ is connected, the previous conditions suffice to characterize the crossed morphisms among the smooth maps between $G$ and $H$. More precisely, one can prove that if $G$ is connected, then $f:G\rightarrow H$ is crossed morphism (relative to a given $G$-action on $H$) if and only if 1) $f(e_G)=e_H$ and 2) $\delta f$ is a $G$-equivariant one-form with values in $\mathfrak h$, see Lemma II 3.5 in \cite{Neeb}. 
\end{remark}


\subsection{Lie algebras}
\label{ssec:liealgebras}


\subsubsection{Crossed morphisms}
\label{sssec:crossedmorph}

Let $\h$ and $\g$ be two Lie algebras and  $\alpha\co \g \to \der(\h)$ be a morphism of Lie algebras. Note that the existence of such a morphism is equivalent to the existence of a $\g$-action on $\h$. In a similar way to the group case, the direct sum $\h\oplus \g$ can be endowed with a structure of a Lie algebra. 
Explicitly, for all $y,y'\in \h$ and for all $x,x'\in \g$ the  bracket 
\begin{equation*}
		[(y,x),(y',x')] =  \Big( [y,y']_\h + \alpha_x(y')-\alpha_{x'}(y),  [x,x']_\g \Big),
\end{equation*}
satisfies the Jacobi identity. We denote by $\h\rtimes_\alpha \g=(\h\oplus\g,[-,-])$ the resulting Lie algebra. 
It is an extension of $\g$ by $\h$, usually called the semi-direct product of $\g$ and $\h$, see also Lemma \ref{lem:semdirG}. In other words, one has the following short exact sequence of Lie algebras 
\begin{equation}\label{eq: cano. extension}
	0\to \h\xrightarrow{i} \h\rtimes_\alpha \g \xrightarrow{\pi} \g \to 0 ,
\end{equation}
where $i$ is the canonical inclusion and $\pi$ is the projection on the second factor. Now, consider a section of the underlying sequence of vector spaces, that is a linear map  $s\co\g\to  \h\rtimes_\alpha \g$ such that  $\pi\circ s = \id$. Clearly, $s(x) = (\psi(x),x)$ for some linear map $\psi\co \g \to \h$. 

\begin{definition}[Crossed morphism of Lie algebras]
A linear map $\psi\co \g \to \h$ is called a \emph{crossed morphism} relative to the $\g$-action on $\h$ defined by Lie morphism $\alpha:\g\rightarrow\text{Der}(\h)$ if, for all $x,y\in \g$, one has 
\begin{equation}\label{eq:cmi}
	\psi([x,y]_\g) = \alpha_x(\psi(y)) - \alpha_{y}(\psi(x))+[\psi(x),\psi(y)]_\h.
\end{equation}
\end{definition}
 		
\begin{lemma}
The section $s$ is a morphism of Lie algebras if and only if $\psi$ is a crossed morphism.  
 \end{lemma}

\begin{example}
If $\mathfrak h$ is abelian, i.e.~if $[-,-]_{\mathfrak h}\equiv 0$, then the above map $\psi$ is a crossed morphism relative to $\alpha\co \g \to \operatorname{End}_{\mathbb K}(\h)$ if and only if for all $x,y \in \mathfrak g$
\[
	\psi([x,y]_{\mathfrak g})=\alpha_x(\psi(y))-\alpha_y(\psi(x)).
\]
\end{example}

\begin{example}
If $f\in\operatorname{Hom}_{\text{Lie}}(\mathfrak g,\mathfrak h)$ then $\alpha_f:\mathfrak g\rightarrow\operatorname{Der}(\mathfrak h)$, defined for all $y\in\mathfrak h$ by
\[
	\alpha_{f}(x)(y)=[f(x),y]_{\mathfrak h},
\]
is a morphism of Lie algebras. 
Moreover, $\psi\co \g\to \h$ is a crossed morphism relative to $\alpha_{f}$  if and only if $f+\psi\in\operatorname{Hom}_{\text{Lie}}(\mathfrak g,\mathfrak h)$.
\end{example}

\begin{example}\label{ex: phi is id}
If $\g$ and $\h$ share the same underlying vector space, say $V$, and if $\psi$ is the identity endomorphism of $V$, then \eqref{eq:cmi} becomes
\begin{equation}\label{eq: h bar }
	[ x,y ]_{\g} = \alpha_{x}(y) - \alpha_y (x) + [x,y]_{\h} \qquad \text{ for all } x,y \in V. 
\end{equation}
In other words, $V$ has two related Lie algebra structures. The Lie algebra $\h$ sometimes is dubbed the \emph{initial} Lie algebra and hereafter we will write $\llbracket -,-\rrbracket$ to denote $[-,-]_\g$, i.e. $\g:=(V,\llbracket-,-\rrbracket)$.
\end{example}
		
\begin{example}\label{ex:invR}
Suppose	$\psi\co \g \to \h$ is an invertible crossed morphism and let $R:=\psi^{-1}$. Then
\begin{equation*}
	[R(x),R(y)]_\g = R\big(  [x,y]_\h + \alpha_{R(x)}(y) - \alpha_{R(y)}(x) \big).
\end{equation*}
Moreover, if $\h$ and $\g$ share the same underlying vector space and if $\alpha_h(g)= [h,g]_{\h}$, i.e.~$\alpha$ is the adjoint action, then $R$ is called a Lie Rota--Baxter operator of weight 1 (or of weight 0 if $\h$ was abelian). In other words, the \emph{invertible} crossed morphisms relative to the adjoint action coincide with the invertible Lie Rota--Baxter operators (of weight $1$ or $0$). However, there are Lie Rota--Baxter operators (of weight 0,1) which do not come from crossed morphisms. 
\end{example}

We will prove now the following result whose Lie group analogue is Proposition \ref{prop:invcross}. 

\begin{proposition}\label{prop: adjun}
Let $\phi \co \g \to \h$ be an \emph{invertible} crossed morphism relative to $\alpha \co \g\to \der(\h)$. 
The following bracket 
\begin{equation}
\label{eq:crosalgLie}
	\llbracket x,y\rrbracket:=\alpha_{\phi^{-1}(x)}(y)-\alpha_{\phi^{-1}(y)}(x)+[x,y]_\h,\qquad \forall x,y \in \h.
\end{equation}
defines another structure of Lie algebra, denoted by $\overline\h$,  on the underlying vector space of $\h$. 
Moreover, the identity map  $\id:\overline\h\rightarrow\h$ is a crossed morphism relative to the action $\alpha\circ\phi^{-1}$ of $\h$ over $\h$.
\end{proposition}

\begin{proof}
First note that $\alpha_{\phi^{-1}(\llbracket x,y\rrbracket)} = \alpha_{\phi^{-1}(x)} \circ \alpha_{\phi^{-1}(y)} - \alpha_{\phi^{-1}(y)} \circ \alpha_{\phi^{-1}(x)}$, as one can check by recalling that for $\xi=\phi^{-1}(x)$, $\eta=\phi^{-1}(y)\in\g$, $\phi([\xi,\eta]_\g)=\alpha_\xi(\phi(\eta))-\alpha_\eta(\phi(\xi))+[\phi(\xi),\phi(\eta)]_\h$. The proof that \eqref{eq:crosalgLie} is a Lie bracket on $V$, i.e.~that it satisfies the Jacobi identity, is a tedious though straightforward computation which we prefer not to present. Finally, the last part of the proposition follows at once from the definition of crossed morphism.
\end{proof}

As a consequence of the above proposition, if one is interested in studying invertible crossed morphisms, it may be  enough to consider the setting of Example \ref{ex: phi is id}. For a more precise statement of this fact, see \cite[Prop.~6]{MQS}. 
	
\begin{remark}
Let 
\begin{equation*}
	 0\to \h \to  \mathfrak{e}  \to \g \to 0 
\end{equation*}
be an extension  of Lie algebras. 
A \emph{splitting} of the above sequence is a map $s\co \g\to \e$ that is both a section and a morphism of Lie algebras.
 One can show that a short exact sequence admits a splitting if and only if it is isomorphic to the extension \eqref{eq: cano. extension} and there exists a crossed morphism $\phi\co \g \to \h$ relative to $\alpha$ (given by \eqref{eq: cano. extension}). 
\end{remark}

A morphism between two short exact sequences 
\begin{equation*}\
	0\to \h \xrightarrow{i} \h \rtimes_\alpha \g \xrightarrow{\pi} \g \to 0 
\end{equation*}
and 
\begin{equation*}
		0 \to \h'\xrightarrow{i'} \h'\rtimes_{\alpha'} \g' \xrightarrow{\pi'} \g' \to 0 
\end{equation*}
that admits a splitting, is a pair $(f,g)$ of morphisms of Lie algebras, $f \co \h\to \h'$ and $g\co \g\to \g'$, such that 
	\begin{enumM}
		\item\label{M1}$f\circ\phi=\phi'\circ g$ and
		\item\label{M2}$f(\alpha_{x}(y))=\alpha'_{g(x)}(f(y)) \text{ for all } x \in \g \text{ and } y\in\h$. 
	\end{enumM}	
In particular, let us consider the subcategory of sequences as in Example \ref{ex: phi is id}:  $\h$ and $\g$ share the same underlying vector space, say $V$, and $\id_V$ is a crossed morphism. A morphism in this category is a morphism of Lie algebras $f \co \h\to \h'$ such that  for all $x,y \in V$
\begin{equation*}
	f(\alpha_{x}(y))=\alpha'_{f(x)}(f(y)). 
\end{equation*}
We denote by $\cat{SE}$ (for split special extensions) this category.


\subsubsection{From groups to Lie algebras}
\label{sssec:group2lie}

Let $\alpha\co \g\to \der(\h)$ be a morphism of Lie algebras and recall the dgla $(C^*(\g,\h),[-,-])$ from Proposition \ref{prop: Lie on CE} in Appendix \ref{sec:diffformV}.

\begin{lemma}\label{lem:crosMC}
A linear map $\psi\co \g \to \h$ is a crossed morphism relative to $\alpha$ if and only if it is an ${\mathrm{MC}}$-element of $C^\bullet(\g,\h)$. 
\end{lemma}  

\begin{proof} For all $x,y\in \g$, one has:
\begin{equation}
\label{eq:MCpl}
	\big(d_{CE}\psi +\frac{1}{2}\{\psi,\psi\}\big)(x,y) 
	= \alpha_x(\psi(y)) - \alpha_{y}(\psi(x)) - \psi([x,y]_\g) + [\psi(x),\psi(y)]_\h.
\end{equation}
\end{proof}

\begin{proposition}\label{pro:37}
Every crossed morphism of Lie groups induces a crossed morphism of Lie algebras.
\end{proposition}
	
\begin{proof}
By Theorem \ref{th: crossed morph gp G-inv}, for any crossed morphism $f\co G\to H$, its logarithmic derivative $\delta f$ is an  element of $\Omega^1(G,\h)^G$, which turns out to be also an MC-element (Lemma  \ref{lem: log der MC element}). Finally, by Theorem \ref{th: DeRham and CE} and Proposition \ref{prop: de Rham CE iso of Lie}, $\delta f$ corresponds to an MC-element of $C^\bullet(\g,\h)$. 
We conclude by means of the previous Lemma. Note that, if $f:G\rightarrow H$ is a crossed morphism relative to the $G$-action on $H$ defined by $\phi: G\rightarrow\text{Aut}(H)$, the induced crossed morphism $\psi:\g\rightarrow\h$ will be relative to the $\g$-action on $\h$ defined by the Lie algebra morphism $\alpha:\g\rightarrow\text{Der}(\h)$ that is defined by the formula 
\begin{equation}
\label{eq:derx}
	\alpha_x(\xi):=\left.\frac{d}{dt}\right\vert_{t=0}(\phi_{\exp(tx)})_{\ast,e_G}(\xi), \qquad \forall x\in\g,\xi\in\h.
\end{equation}
The latter is obtained by recalling that the $G$-action $\phi$ on $H$ defines a morphism of Lie groups $(\phi_{\cdot})_{\ast,e_G}:G\rightarrow\text{Aut}(\h)$, $g\mapsto (\phi_g)_{\ast,e_G}$, see also \eqref{eq:Ghmor}. Then $\alpha$ in the formula above is obtained differentiating this Lie group morphism at the identity of $G$. We close the proof noticing that the crossed morphism of Lie algebras so obtained is the differential at the identity of $f$. In fact, for all $g\in G$ and $v\in T_gG$
\[
	\delta f_g(v)=\theta_{f(g)}(f_{\ast,g}(v))
\]
see \eqref{eq:mci}, where $\theta$ is the (left) Maurer--Cartan 1-form of $H$. Since $f$ is a crossed morphism, after evaluating both sides of the previous identity at $e_G$, one deduces that for all $x\in\g$ 
\[
	\delta f_{e_G}(x)=\theta_{e_H}(f_{\ast,e_G}(x)),
\]
see (1) in Theorem \ref{th: crossed morph gp G-inv}. To conclude it suffices observe that $\theta_{e_H}=\text{id}\in\text{End}(\h)$.
\end{proof}
 
\begin{remark}
In general, it is not possible to \emph{integrate} crossed morphisms of Lie algebras to crossed morphisms of Lie group, see  \cite{Neeb} and \cite{Sharpe} for a detailed discussion about this point.
\end{remark}


\section{Crossed morphisms and post-Lie structures}
\label{sec: crossed to post-Lie}

In this section, we recall the notion of post-Lie algebra and of post-(Lie) group. We show that both amount to a certain type of extensions (split special extensions); these correspond to a certain type of crossed morphisms. Therefore, in essence, post(-Lie) groups and post-Lie algebras amount to the same concept, i.e.~that of split special extension, but in different categories: the one of (Lie) groups, the other one of Lie algebras.


\subsection{Post-Lie groups \cite{Bai-Guo-Sheng-Tang-post-groups}}

In this subsection, we will consider a special class of crossed morphisms of Lie groups, see Subsection \ref{sec: subsec cross morph gp}. We start our analysis with the following result whose infinitesimal analogue is Proposition \ref{prop: adjun}.

\begin{proposition}
\label{prop:invcross} 
Let $G,H$ be two Lie groups. If $f:G\rightarrow H$ is an invertible crossed morphism relative to the action $\phi:G\rightarrow\text{Aut}(H)$, then $\mathrm{id}:\overline{H}\rightarrow H$ is a crossed morphism relative to the $G$-action $\phi\circ f^{-1}:\overline{H}\rightarrow\text{Aut}(H)$ where $\overline{H}=(H,\star)$ is the Lie group whose group operation is defined by
\begin{equation}
	h_1\star h_2=h_1\phi_{f^{-1}(h_1)}(h_2),\qquad \forall h_1,h_2\in H.\label{eq:staract}
\end{equation}
Note that $H$ and $\overline H$ share the same underlying manifold and for this reason one can consider the identity map between $\overline{H}$ and $H$.
\end{proposition}

\begin{proof}
First, note that 
\begin{equation}
	\phi_{f^{-1}(h_1\star h_2)}=\phi_{f^{-1}(h_1)}\circ\phi_{f^{-1}(h_2)},\label{eq:idf}
\end{equation} 
for all $h_1,h_2\in H$. In fact since $f:G\rightarrow H$ is an invertible crossed morphism relative to the $G$-action $\phi$ on $H$, $f^{-1}$ satisfies 
\[
	f^{-1}(h_1\star h_2)=f^{-1}(h_1)f^{-1}(h_2), \qquad \forall h_1,h_2\in H,
\] 
see Remark \ref{rem:RRB}.
Then, since $\phi$ is a $G$-action on $H$
\begin{eqnarray*}
	\phi_{f^{-1}(h_1\star h_2)}=\phi_{f^{-1}(h_1)f^{-1}(h_2)}=\phi_{f^{-1}(h_1)}\circ\phi_{f^{-1}(h_2)}.
\end{eqnarray*} 
Observe now that \eqref{eq:staract} is associative. In fact
\begin{eqnarray*}
	h_1\star (h_2\star h_3)
	&=&h_1\star (h_2\phi_{f^{-1}(h_2)}(h_3))=h_1\phi_{f^{-1}(h_1)}\big(h_2\phi_{f^{-1}(h_2)}(h_3)\big)\\
	&=&h_1\phi_{f^{-1}(h_1)}(h_2)\phi_{f^{-1}(h_1)}\big(\phi_{f^{-1}(h_2)}(h_3)\big)\\
	&\stackrel{\eqref{eq:idf}}{=}&h_1\phi_{f^{-1}(h_1)}(h_2)(\phi_{f^{-1}(h_1\star h_2)}(h_3))\\
	&=&(h_1\star h_2)\star h_3.
\end{eqnarray*}
Note that for all $h\in H$, $h\star e_H=h=e_H\star h$ and that every $h$ has a $\star$-inverse, i.e.~the element $\phi_{(f^{-1}(h))^{-1}}(h^{-1})$. These comments entail that $\overline{H}=(H,\star)$ is a group. Now it suffices to notice that all the operations involved in the definition of $\star$ and of the inverse are smooth to conclude that $\overline{H}=(H,\star)$ is a Lie group. The statement that $\mathrm{id}:\overline{H}\rightarrow H$ is a crossed morphism relative to $\phi\circ f^{-1}$ follows from the definition of crossed morphism. 
\end{proof}

Consider two Lie groups, $G$ and $H$, that share the same underlying set. To emphasize the latter fact, let us denote $G$ by $\overline{H}$ and by $\star$ its product. Suppose there is an action $\phi\co \overline{H} \to \Aut(H)$ and that the identity map $\id \co \overline{H} \to H$ is a crossed morphism relative to $\phi$. We call such a tuple $(H,\overline{H},\phi,\id)$ a split special extension. We let $\cat{SE(gp)}$ denote the category of split special extensions; a morphism between two split special extensions $(H,\overline{H},\phi,\id)$ and $(H',\overline{H}',\phi',\id)$ is a morphism of Lie groups $p\co H\to H'$ such that 
\begin{equation*}
	p(\phi_{g}(\tilde{g}))=\phi'_{p(g)}(p(\tilde{g})) \qquad \text{ for all } g,\tilde{g}\in H. 
\end{equation*}

As a consequence of the fact that $\id\co \overline{H} \to H$ is a crossed morphism, one has for all $g,\tilde{g}\in H$
\begin{equation*}
 	g\star \tilde{g} =  \id(g\star \tilde{g}) = g\phi_{g}(\tilde{g}).
\end{equation*}

Let us set 
\begin{equation}
g\blacktriangleright \tilde{g} := \phi_{g}(\tilde{g}).\label{eq:dblacktria}
\end{equation}
That $\phi$ is a morphism of groups means that for all $g,\tilde{g},h\in H$
\begin{equation*}
	\phi_{g\star \tilde{g}}(h) = \phi_{g}( \phi_{\tilde{g}}(h)).
\end{equation*}
The left-hand side equals $\big( g\cdot (g\blacktriangleright \tilde{g})\big) \blacktriangleright h$ and the right-hand side is $g\blacktriangleright (\tilde{g}\blacktriangleright h)$. Putting these properties together, one obtains the following definition, which appeared in \cite{Bai-Guo-Sheng-Tang-post-groups}\footnote{We invite the reader to look into the earlier reference \cite{MQ-Crossed}, where the problem of integration of post-{L}ie algebras appeared in the framework of crossed morphisms.}. 

\begin{definition}\cite{Bai-Guo-Sheng-Tang-post-groups}
\label{def:post-LieG}
A \emph{post-group} is a group $(H,\cdot)$ equipped with a binary operation $\blacktriangleright\co H\times H\to H$ satisfying the following relations:
\begin{enumerate}
	
	\item for all $g\in H$, the map $\phi_g \co H \to H$ defined by $\phi_{g}(\tilde{g}) = g\blacktriangleright \tilde{g}$
	is an automorphism of $(H,\cdot)$.  This means that $g\blacktriangleright (g'\cdot h) = (g\blacktriangleright \tilde{g})\cdot(g\blacktriangleright h)$ and $g\blacktriangleright e_H = e_H$ for all $g,\tilde{g},h\in H$. 
	
	\item for all $g,\tilde{g},h\in H$, one has  $(g\cdot (g\blacktriangleright \tilde{g}))\blacktriangleright h = g\blacktriangleright (\tilde{g}\blacktriangleright h)$. 
\end{enumerate}
A \emph{post-Lie group} is a post-group $(H,\cdot,\blacktriangleright)$ such that $(H,\cdot)$ is a Lie group and $\blacktriangleright$ is smooth. 
\end{definition}

We let $\cat{PostLie(gp)}$ be the category of (finite dimensional, connected) post-Lie groups. Morphisms are morphisms of Lie groups that are compatible with the product $\blacktriangleright$. 

\begin{proposition}\label{prop: iso SE PL groups}
	The two categories $\cat{SE(gp)}$ and $\cat{PostLie(gp)}$ are isomorphic. 
\end{proposition}

\begin{proof}
We already saw that a split  special extension provides a post-Lie group. Let us see the converse. Let $(H,\cdot,\blacktriangleright)$ be a post-Lie group. Define, for all $g,h\in H$, 
\begin{equation}
	g\star h:=g\phi_g(h),\label{eq:pg1}
\end{equation}	
where we set $\phi_g(h):=g\blacktriangleright h$, see \eqref{eq:dblacktria}.
Let $\overline{H}$ denote the pair $(H,\star)$, i.e.~the manifold underlying the Lie group $H$ endowed with the binary operation \eqref{eq:pg1}. 
Note that \eqref{eq:pg1} is a smooth application from $H\times H$ to $H$. 

Recall that the second axiom of a post-Lie group is equivalent to $\phi_{g \star g'} = \phi_{g}\circ  \phi_{g'}$. Therefore, it remains to prove the following lemma.

\begin{lemma}\label{prop:postg}
	If the map $\phi:\overline{H}\rightarrow\text{Aut}(H)$ satisfies
	\begin{equation}
		\phi(g\star h)=\phi_g\circ\phi_h,\qquad \forall g,h\in H,\label{eq:pg2}
	\end{equation}
	then $\overline{H}$ is a Lie group.
\end{lemma}

\begin{proof}
First one checks associativity:
\begin{eqnarray*}
	h_1\star(h_2\star h_3)
	&=&h_1\star(h_2\phi_{h_2}(h_3))
	=h_1\phi_{h_1}(h_2\phi_{h_2}(h_3))
	=h_1(\phi_{h_1}(h_2)\phi_{h_1}(\phi_{h_2}(h_3)))\\
	&=&(h_1\phi_{h_1}(h_2))(\phi_{h_1}\circ\phi_{h_2}(h_3))\\
	&=&(h_1\star h_2)(\phi_{h_1}\circ\phi_{h_2}(h_3)),
	\end{eqnarray*}
which follows from the associativity of the product $\cdot$ of $H$ and from the definition of $\star$. Assuming \eqref{eq:pg2}, the last term is equal to $(h_1\star h_2)(\phi_{h_1\star h_2}(h_3))$ which implies the associativity of $\star$. To identify the identity as well as inverses, one observes
\[
	h\star e_H=h\phi_h(e_H)=he_H=h,\qquad \forall h\in H,
\]
i.e.~$e_H$ is a right-identity. This implies that given $h\in H$
\[
	\phi_h\circ\phi_{e_H}=\phi_{h\star e_H}=\phi_h,
\]
which yields
\begin{equation}
	\phi_{e_H}=\text{Id}\in\text{Aut}(H).\label{eq:id*}
\end{equation}
On the other hand, $e_H\star h=e_H\phi_{e_H}(h)\stackrel{\eqref{eq:id*}}{=}h$, showing that $e_H$ is also a left-identity of $\star$. Finally, given $h\in H$, one computes
\begin{eqnarray}\label{eq:pg3}
	h\star\phi_h^{-1}(h^{-1})=h\phi_h(\phi_h^{-1}(h^{-1}))=e_H.
\end{eqnarray}
On the other hand,
\begin{eqnarray*}
	\phi_h^{-1}(h^{-1})\star h
	=\phi_h^{-1}(h^{-1})\phi_{\phi_h^{-1}(h^{-1})}(h)
	&=&\phi_h^{-1}\big(h^{-1}\phi_h(\phi_{\phi_h^{-1}(h^{-1})}(h))\big)\\
	&\stackrel{\eqref{eq:pg2}}{=}&\phi_h^{-1}\big(h^{-1}\phi_{h\star\phi_h^{-1}(h^{-1})}(h)\big)\\
	&\stackrel{\eqref{eq:pg3}}{=}&\phi_h^{-1}\big(h^{-1}\phi_{e_H}(h)\big)\\
	&\stackrel{\eqref{eq:id*}}{=}&\phi_h^{-1}(h^{-1}h))\\
	&=&e_H,
\end{eqnarray*}
which, together with \eqref{eq:pg3}, shows that $\phi_h^{-1}(h^{-1})$ is the $\star$-inverse of $h$.
\end{proof}

We conclude, by observing that the identity map $\id\co \overline{H}\rightarrow H$ is a crossed homomorphism relative to the $\overline{H}$-action on $H$ defined by $\phi$. Finally, it is straightforward to check that these assignments are inverse to each other and are functorial. 
\end{proof}


\subsection{Post-Lie algebras}
\label{subsec:postLie}

In this section we recall the definition of post-Lie algebra \cite{BV2007}. See also \cite{MKW}. The reader is referred to \cite{CEFMK} for a short review.

\begin{definition}
A \emph{post-Lie algebra} is a pair $(\h,\triangleright)$ consisting of a Lie algebra $\h=(V,[-,-])$ and a binary operation
$\pl\co V\ot V \to V$ such that for all $x,y,z\in V$: 
\begin{enumPL}
	\item\label{PL Property 1} $x\triangleright [y,z]=[x\triangleright y,z]+[y,x\triangleright z]$, and
	\item\label{PL Property 2} $[x,y]\triangleright z={\rm a}_{\triangleright}(x,y,z)-{\rm a}_{\triangleright}(y,x,z)$.
\end{enumPL}
On the right-hand side of \ref{PL Property 2},  ${\rm a}_\triangleright$ denotes the associator
\begin{equation}
\label{eq:ass}
	{\rm a}_\triangleright(x,y,z):=x\triangleright(y\triangleright z)
	-(x\triangleright y)\triangleright z,  
\end{equation}
for the bilinear product $\triangleright$. The latter is called the \emph{post-Lie product} of the post-Lie algebra.
\end{definition}

Note that \ref{PL Property 1} is equivalent to the existence of a linear map $\alpha:V\rightarrow\text{Der}(\h)$, $x\mapsto\alpha_x$, where $\alpha_x(y):=x\triangleright y$, for all $y\in V$. On the other hand, \ref{PL Property 2} implies that:
\begin{enumerate}
\item on the underlying vector space $V$ of a post-Lie algebra $(\h,\triangleright)$ the binary operation $\llbracket-,-\rrbracket:V\times V\rightarrow V$ defined for all $x,y\in V$ by
\begin{equation}
\label{eq:secbra}
	\llbracket x,y\rrbracket :=x\triangleright y-y\triangleright x+[x,y],
\end{equation}
gives a second Lie algebra $\g:=(V,\llbracket-,-\rrbracket)$ and 
\item  $\alpha:\g\rightarrow\text{Der}(\h)$ is a morphism of Lie algebras.
\end{enumerate}
On the other hand, let $\h=(V,[-,-])$ be a Lie algebra and let $\alpha:V\rightarrow\text{Der}(\h)$ be a linear map. 
Now define $\llbracket-,-\rrbracket:V\times V\rightarrow V$ for all $x,y \in V$ by
\begin{equation}
	\llbracket x,y\rrbracket=\alpha_x(y)-\alpha_y(x)+[x,y].\label{eq:plc}
\end{equation}
 
\begin{lemma}\label{lem:pl}
If $\alpha_{\llbracket x,y\rrbracket}=\alpha_x\circ\alpha_y-\alpha_y\circ\alpha_x$ for all $x,y\in V$, then \eqref{eq:plc} is a Lie bracket. 
\end{lemma}

\begin{proof}
It suffices to prove that if $\alpha_{\llbracket x,y\rrbracket}=\alpha_x\circ\alpha_y-\alpha_y\circ\alpha_x$ holds true for all $x,y\in V$, then $\llbracket\llbracket x,y\rrbracket,z\rrbracket+\llbracket\llbracket z,x\rrbracket,y\rrbracket+\llbracket\llbracket y,z\rrbracket,x\rrbracket=0$ for all $x,y,z\in\ V$ and this can be checked by computation.
\end{proof}

\begin{remark}
	Note that the previous lemma gives a sufficient but not necessary condition for \eqref{eq:plc} being a Lie bracket. In fact, if on the Lie algebra $\h$ one defines $\alpha_x=\text{ad}_x=[x,-]$ for all $x\in\h$, then $\llbracket-,-\rrbracket=3[-,-]$, implying that $\llbracket-,-\rrbracket$ is a Lie bracket and, at the same time, that $\alpha_{\llbracket x,y\rrbracket}=3\alpha_{[x,y]}$ for all $x,y\in\h$. On the other hand, $\alpha_x\circ\alpha_y-\alpha_y\circ\alpha_x=\alpha_{[x,y]}$, showing that this choice of $\alpha$ does not fulfill the hypothesis of the lemma. 
\end{remark}

These comments entail the following  

\begin{proposition}
\label{prop:pl}
A Lie algebra $\h=(V,[-,-])$ carries the structure of a post-Lie algebra $(\h,\triangleright)$ if and only if a linear map $\alpha:V\rightarrow\text{Der}(\h)$ is defined such that $\alpha_{\llbracket-,-\rrbracket}=\alpha_x\circ\alpha_y-\alpha_y\circ\alpha_y$ for all $x,y\in V$, where $\llbracket-,-\rrbracket$ is as in \eqref{eq:plc}. 
\end{proposition}

\begin{proof}
It suffices noticing that, writing $\alpha_x:=x\;\triangleright$, \ref{PL Property 2} is equivalent to the condition $\alpha(\llbracket x,y\rrbracket)=\alpha_x\circ\alpha_y-\alpha_y\circ\alpha_x$. Furthermore, observe that, if this condition is fulfilled, $\alpha:\g=(V,\llbracket-,-\rrbracket)\rightarrow \text{Der}(\h)$ is (automatically) a Lie algebra morphism.
\end{proof}

\begin{remark}
Observe that under the hypothesis of Lemma \ref{lem:pl}, $\text{id}:\g\rightarrow\h$ is a crossed morphism relative to the $\g$-action on $\h$ defined by $\alpha:\g\rightarrow\text{Der}(\h)$.
\end{remark}

\begin{example}[Pre-Lie algebras]
\label{ex:preLie}
An abelian post-Lie algebra, i.e.~a post-Lie algebra whose Lie algebra $\h$ is abelian, is called a pre-Lie algebra. In this case \ref{PL Property 1} is an empty relation and  \ref{PL Property 2} says that the associator ${\rm a}_\triangleright$ is symmetric in the first two variables. This property suffices to conclude that $\llbracket x,y\rrbracket:=x\pl y-y\pl x$ is a Lie bracket defined on the underlying vector space $V$ of $\h$.
\end{example}

A morphism of post-Lie algebras is a morphism of the underlying Lie algebras that is compatible with the post-Lie products. 
We let $\cat{PostLie}$ denote the resulting category. 

\begin{proposition}
\label{pro:equiv}
	The two categories $\cat{SE}$ and $\cat{PostLie}$ are isomorphic. 
\end{proposition}  

\begin{proof}
To an exact sequence of $\cat{SE}$, i.e., a tuple $(\overline{\h},\h,\alpha,\id_V)$, one may associate the post-Lie algebra $(\h,\pl)$, where for all $x,y\in \h$
\begin{equation}
\label{eq: postlie from hh,v ,id}
	x\pl y := \alpha_{x}(y).  
\end{equation}
Indeed, \ref{PL Property 1} is clear since $\alpha_x$ is a derivation  of $\h$, and  \ref{PL Property 2} results from the fact that $\alpha\co \overline{\h}\to \Der(\h)$ is a Lie morphism: for all $x,y$ and $z$ in $\h$, one has 
\begin{equation*}
	[x,y]\pl z
	=\alpha_{[x,y]}(z)
	=\alpha_{\llbracket x,y \rrbracket - \alpha_{x}(y) + \alpha_{y}(x)}(z)\\
	=\alpha_{x}(\alpha_{y}(z))-\alpha_{y}(\alpha_{x}(z))-\alpha_{\alpha_{x}(y)}(z)+\alpha_{\alpha_{y}(x)}(z).
\end{equation*}
By a straightforward computation one verifies that this assignment induces an isomorphism of categories. 
\end{proof}


\subsection{From post-Lie groups to post-Lie algebras: the crossed morphism approach}
\label{ss:postLie}

Here we show how the setting of Section \ref{sec:crossedmorphism} forces, at the infinitesimal level, the existence of a post-Lie algebra structure. \\

Let $(H,\cdot,\blacktriangleright)$ be a post-Lie group or, equivalently, see Proposition \ref{prop: iso SE PL groups}, let $(H,\cdot)$ and $\overline{H}=(H,\star)$ be two Lie groups, $\phi\co \overline{H} \to \Aut(H)$ a morphism of Lie groups such that the identity map $\id \co \overline{H} \to H$ is a crossed morphism relative to $\phi$. The latter fact entails that $\delta(\text{id}) = \id^\ast\theta \in \Omega^1(\overline{H},\h)^{\overline{H}}$, see Theorem \ref{th: crossed morph gp G-inv}. Note that $(\text{id}^\ast\theta)_h=\theta_h$, for all $h\in \overline{H}$.

Using the isomorphism between $\Omega^\bullet(\overline{H},\h)^{\overline{H}}$ and $C^\bullet(\overline{\h},\mathfrak h)$ induced by the evaluation map at $e_{\overline{H}}=e_{H}$, see Proposition \ref{prop: de Rham CE iso of Lie} in the Appendix, one obtains the MC-element $ev_{e_H}(\text{id}^\ast\theta)\in C^{1}(\overline{\h},\mathfrak h)$. More explicitly,
\[
	ev_{e_H}(\text{id}^\ast\theta)
	=(\text{id}^\ast\theta)_{e_H}=\theta_{e_H}
	=\text{id}\in C^1(\overline{\h},\mathfrak h).
\]

Since $\id\in C^1(\overline{\h},\mathfrak h)$ is a MC-element, for all $x,y\in\overline{\h}$
\begin{eqnarray*}
	0
	&=&\big(d_{CE}\id +\frac{1}{2}\{\id,\id\}\big)(x,y) 
			= \alpha_x(\id(y)) - \alpha_{y}(\id(x)) - \id([x,y]_{\overline{\h}}) 
			+ [\id(x),\id(y)]_{\h}\\
	&=&\alpha_x(y)-\alpha_y(x)-\llbracket x,y\rrbracket+[x,y],		.
\end{eqnarray*}
that is
\begin{equation}
	\llbracket x,y\rrbracket=\alpha_x(y)-\alpha_y(x)+[x,y],\label{eq:la}
\end{equation}
where, for every $x$, $\alpha_x\in\text{Der}(\h)$ is defined as in \eqref{eq:derx}. The previous computation says that the Lie bracket of the Lie algebra $\overline{\h}$ of $\overline{H}$ is as in \eqref{eq:la} and the comments enclosed in the proof of Lemma \ref{lem:crosMC} entails that $\alpha:\overline{\h}\rightarrow\text{Der}(\h)$ is a Lie morphism. Applying now the result in Proposition \ref{prop:pl} one can conclude with the following

\begin{proposition}\label{prop:postgroups}
Let $H$ be a connected Lie group and $\phi:H \rightarrow \Aut(H)$ a smooth map such that \eqref{eq:pg2} holds true. Then $\h$ carries a structure of a post-Lie algebra whose post-Lie product is defined by the formula $x\triangleright y=\alpha_x(y)$ where $\alpha_x$ is as in Formula \eqref{eq:derx}.
\end{proposition}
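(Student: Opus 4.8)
The plan is to assemble the machinery developed above, since the hypothesis \eqref{eq:pg2} is exactly the condition guaranteeing that $\id\co \overline{H}\to H$ is a crossed morphism relative to $\phi$ (cf.\ Lemma \ref{prop:postg} and Proposition \ref{prop: iso SE PL groups}). First I would feed this crossed morphism into the chain of identifications already set up: by Theorem \ref{th: crossed morph gp G-inv} its logarithmic derivative $\delta(\id)=\id^\ast\theta$ is $\overline{H}$-equivariant, and by Lemma \ref{lem: log der MC element} it is a Maurer--Cartan element; hence $\id^\ast\theta$ is an MC-element of the dgla $\Omega^\bullet(\overline{H},\h)^{\overline{H}}$.

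Next I would transport this along the dgla isomorphism $ev_{e_H}$ of Proposition \ref{prop: de Rham CE iso of Lie}, obtaining the MC-element $ev_{e_H}(\id^\ast\theta)=\id\in C^1(\overline{\h},\h)$. By Lemma \ref{lem:crosMC}, expanding the Maurer--Cartan equation $d_{CE}\id+\tfrac12\{\id,\id\}=0$ yields precisely \eqref{eq:la}, i.e.\ the Lie bracket of $\overline{\h}$ is $\llbracket x,y\rrbracket=\alpha_x(y)-\alpha_y(x)+[x,y]$ with $\alpha$ given by \eqref{eq:derx}. This identifies $\llbracket-,-\rrbracket$ with the second bracket \eqref{eq:secbra} attached to the candidate post-Lie product $x\pl y:=\alpha_x(y)$.

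The decisive step is to verify the compatibility $\alpha_{\llbracket x,y\rrbracket}=\alpha_x\circ\alpha_y-\alpha_y\circ\alpha_x$, which is the hypothesis of Proposition \ref{prop:pl}. Here I would argue as in the proof of Proposition \ref{pro:37}: the $\overline{H}$-action $\phi$ induces the Lie group morphism $(\phi_{\cdot})_{\ast,e_H}\co \overline{H}\to\Aut(\h)$, whose differential at the identity is exactly $\alpha\co \overline{\h}\to\Der(\h)$. Being the Lie differential of a morphism of Lie groups, $\alpha$ is automatically a morphism of Lie algebras into $\Der(\h)=\mathrm{Lie}(\Aut(\h))$, which is the required identity; the connectedness of $H$ is what lets us realise $(\phi_{\cdot})_{\ast,e_H}$ as such a morphism. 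This is also where $\alpha_x\in\Der(\h)$ comes from, since each $\phi_g$ is a group automorphism of $H$ and thus $(\phi_g)_{\ast,e_H}\in\Aut(\h)$.

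With the bracket of $\overline{\h}$ recognised as \eqref{eq:secbra} and the compatibility in hand, Proposition \ref{prop:pl} applies verbatim and endows $\h$ with the structure of a post-Lie algebra whose product is $x\pl y=\alpha_x(y)$, completing the argument. I expect the only genuine obstacle to be the passage from the group-level morphism property of $\phi$ to the Lie-algebra-level compatibility of $\alpha$; everything else is bookkeeping of the functorial correspondences (equivariant forms $\leftrightarrow$ Chevalley--Eilenberg cochains $\leftrightarrow$ crossed morphisms $\leftrightarrow$ MC-elements) established in Sections \ref{sec:diffformV} and \ref{sec:crossedmorphism}.
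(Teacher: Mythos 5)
Your proposal is correct and follows essentially the same route as the paper: the paper likewise deduces from \eqref{eq:pg2} that $\id\co\overline{H}\to H$ is a crossed morphism (via Lemma \ref{prop:postg}), passes through Theorem \ref{th: crossed morph gp G-inv}, Lemma \ref{lem: log der MC element}, Proposition \ref{prop: de Rham CE iso of Lie} and Lemma \ref{lem:crosMC} to obtain the bracket \eqref{eq:la} on $\overline{\h}$, observes that $\alpha$ as in \eqref{eq:derx} is a morphism of Lie algebras because it is the differential of the Lie group morphism $(\phi_{\cdot})_{\ast,e_H}\co\overline{H}\to\Aut(\h)$, and concludes by Proposition \ref{prop:pl}. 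The only cosmetic difference is that you source the Lie-morphism property of $\alpha$ from the argument in Proposition \ref{pro:37}, whereas the paper points to the comments around Lemma \ref{lem:crosMC}; the mathematical content is identical.
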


\begin{remark}
Lemma \ref{prop:postg} and the discussion following it, relates the approach to post-Lie algebras via crossed morphism presented in this note (following \cite{MQ-Crossed}), with the notion of post-(Lie) group, see for example \cite{Bai-Guo-Sheng-Tang-post-groups}. In particular, Proposition \ref{prop:postgroups} gives an alternative proof of the result contained in the latter reference stating that the Lie algebra of a post-(Lie) group carries a structure of a post-Lie algebra. In particular, the $\star$-product defined in \eqref{eq:pg1}, when associative, is (an instance of) the so-called \emph{Grossman--Larson product}, see \cite{Grossman-Larson,Munthe-Kaas-Lundervold-PL, MKW} and Subsection \ref{ss:GL} here below. \end{remark}


\section{Further examples of post-Lie algebras}
\label{sec:exainf}

In this section we will present two examples of post-Lie algebras having a differential-geometric origin. These examples will not play any role in the sequel of the paper and, for this reason, the reader more algebraically inclined can safely skip this section and move to the next one.

\begin{example}[Infinite dimensional post-Lie algebra, see \cite{B-N},\cite{Munthe-Kaas-Lundervold-PL}]
\label{ex:crossedLiealg}
Let $K$ be a Lie group and let $\mathfrak k$ be its Lie algebra. Consider $\g=\mathfrak X(K)$ the Lie algebra of the vector fields on $K$ and $\h=C^\infty(K,\mathfrak k)$, endowed with the Lie bracket defined by 
\begin{equation}
	\llceil \xi,\eta\rrceil(k)=[\xi(k),\eta(k)]_\mathfrak k,\label{eq:bracketbundle}
\end{equation}
for all $\xi,\eta\in\h$ and $k\in K$. 
First note that $\mathfrak h=\Omega^0(K,\mathfrak k)$, see Section \ref{sec:diffformV}, and that it can be identified with $C^\infty(K)\otimes\mathfrak k$. More precisely, if $\{e_i\}_{i=1,\ldots,\text{dim} K}$ is a basis of $\mathfrak k$, every $\xi\in\mathfrak h$ can be written as $\xi=\sum_{i=1}^{\text{dim} K}f_i \otimes e_i$, for some $f_i\in C^\infty(K)$. Under this identification, for all $X\in\mathfrak g$, one has $X(\xi)=\sum_{i}(Xf_i)\otimes e_i$, i.e.
\begin{equation}
	(X\xi)(k)=X_k\xi=\sum_{i}X_k(f_i) e_i,\qquad \forall k\in K.\label{eq:derAlge}
\end{equation}
Furthermore, observe $\theta\in\Omega^1(K,\mathfrak k)$, the left-invariant Maurer--Cartan form of $K$, see Example \ref{ex:MCform}, induces the application $\phi\co \g \to  \mathfrak h$, defined for all $X\in\g$ by
\begin{equation}
	\phi(X)=i_X\theta, \label{eq:phi}
\end{equation}
where $i_X\theta$ is the element in $\mathfrak h$ such that for all $k\in K$
\[
	i_X\theta(k)=\theta_k(X(k))=(L_{k^{-1}})_{\ast,k}(X(k)).
\] 
Computing $i_{[X,Y]}\theta=\mathcal L_X(i_Y\theta)-i_Y(\mathcal L_X\theta)$, where $\mathcal L_X$ denotes the operation of Lie derivative along $X\in\mathfrak X(K)$, since $d\theta+\frac{1}{2}[\theta,\theta]=0$, one obtains for all $X,Y\in\g$
\[
	i_{[X,Y]}\theta=i_X(di_Y\theta)-i_Y(di_X\theta)+\llceil i_X\theta,i_Y\theta\rrceil,
\]
i.e.~$\phi\co \g\to \h$ is an invertible crossed morphism, where $\upsilon:\mathfrak g\rightarrow\text{Der}(\mathfrak h)$ is the Lie algebra morphism defined for all $X\in\g$, $\xi\in\h$ by
\begin{equation}
	\upsilon_X(\xi)(k):=X_k\xi, \label{eq:up}
\end{equation}
where the right-hand-side was defined in \eqref{eq:derAlge}.

$\phi$ is an invertible $C^\infty(K)$-linear map, such that, for all $x\in\mathfrak k$, $\phi(X_x)$ is the $\mathfrak k$-valued constant function on K equal to $x$, where $X_x$ is the left-invariant vector field defined by the element $x\in\mathfrak k$.
\end{example}

The previous example, which considers a crossed morphism between infinite dimensional Lie algebras, can be recast in the framework of the theory of Lie algebroids as observed in the following comments.


\subsubsection{Action Lie algebroids and bundle of Lie algebras}
\label{sssec:algebroid}

Recall that a Lie algebroid over a manifold $M$ is triple $(E,[-,-],a)$ of a vector bundle $E\stackrel{\pi}{\rightarrow} M$ whose space of global section $\Gamma(E)$ is endowed with a $\mathbb R$-bilinear Lie bracket $[-,-]$ and a $C^\infty(M)$-linear map $a:\Gamma(E)\rightarrow\mathfrak X(M)$, called the anchor of the Lie algebroid, such that
\begin{equation}
	[s_1,fs_2]=f[s_1,s_2]+a(s_1)(f)s_2,\label{eq:LeibnizAlg}
\end{equation}
for all $s_1,s_2\in\Gamma(E)$ and $f\in C^\infty(M)$. There are many examples of Lie algebroids, among them we recall the tangent bundle of a manifold $M$ and every (finite dimensional) Lie algebra. In the first case $[-,-]$ is the usual bracket between vector fields and the anchor is the identity map of $\Gamma(TM)=\mathfrak X(M)$. In the latter case one can think of the Lie algebra as a vector bundle over a zero dimensional manifold, whose global sections are the elements of the Lie algebra endowed with the Lie bracket of the underlying Lie algebra. 

Another two examples of Lie algebroids, of particular importance for us, are defined starting from a finite dimensional Lie algebra $\mathfrak k$ and a manifold $M$. The first one the Lie algebroid whose vector bundle is $E=M\times\mathfrak k$ and whose anchor is the zero map. Since $E$ is trivial, $\Gamma(E)\simeq C^\infty(M,\mathfrak k)$ and since $a=0$ the Lie bracket is $C^\infty(M)$-bilinear and for all $s_1,s_2\in C^\infty(M,\mathfrak k)$, one has $[s_1,s_2](m)=[s_1(m),s_2(m)]_\mathfrak k$ for all $m\in M$. In this case, $E\stackrel{\pi}{\rightarrow}M$ is a \emph{bundle of Lie algebras}, and this is how such a Lie algebroid is called. To define the second Lie algebroid which is relevant for us, we suppose that $M$ carries an action of $\mathfrak k$, i.e.~it is defined a morphism of Lie algebras $\mathfrak k\stackrel{\rho}{\rightarrow}\mathfrak X(M)$. Now we suppose that, as in the previous case, $E=M\times\mathfrak k$. Furthermore, define $a_{\text{act}}:\Gamma(E)\simeq C^\infty(M,\mathfrak k)\rightarrow\mathfrak X(M)$ as the $C^\infty(M)$-linear map that to every $x\in\mathfrak k$, identified with the constant section $\xi_x\in C^\infty(M,\mathfrak k)$ $\xi_x(m)=x$ for all $m\in M$, associates $X_x=\rho(x)$, the \emph{fundamental vector field} defined by $x$ and define the bracket
$$
	\Gamma(E) \times \Gamma(E) \stackrel{[-,-]_{\text{act}}}{\xrightarrow{\hspace{1cm}}}\Gamma(E)
$$
via the following formula
\begin{equation}
\label{eq:actibracket}
	[\xi,\eta]_{\text{act}}(m)
	=a_{{\text{act}}}(\xi)_m(\eta)-a_{{\text{act}}}(\eta)_m(\xi)+[\xi(m),\eta(m)]_\mathfrak k,\qquad \forall m \in M.
\end{equation}
Since the bracket \eqref{eq:actibracket} satisfies \eqref{eq:LeibnizAlg}, $(E=M\times\mathfrak k,[-,-]_{\text{act}},a_{\text{act}})$ is a Lie algebroid called \emph{action Lie algebroid}. Taking $M=K$, one sees that the \emph{inverse} of application $\phi$ defined in Formula \eqref{eq:phi} is the anchor of the action Lie algebroid defined on the tangent bundle of $K$ (trivialized via left-translations) acted upon by $\mathfrak k$ via the morphism of Lie algebras $\mathfrak k\stackrel{\rho}{\rightarrow}\mathfrak X(K)$ which, to every $x\in\mathfrak k$ associates the corresponding left-invariant vector field $X_x$, where $X_x(k)=(L_k)_{\ast,e}(x)$ for all $k\in K$. Note that on $K\times\mathfrak k\stackrel{\pi}{\rightarrow}\mathfrak k$, together with the action Lie algebroid defined by the left-translations, it is also defined a bundle of Lie algebras, see Formula \eqref{eq:bracketbundle}.

Before moving to the next topic we want to link what was discussed in Example \ref{ex:crossedLiealg} and in the comments coming after that to the results in \cite{M-KSV}, where the authors discuss a relation between post-Lie algebras and Lie algebroids.
	
\begin{example} 
In what follows we will use the notations introduced in Example \ref{ex:crossedLiealg}. In particular, $K$ is a Lie group, $\mathfrak k=\text{Lie}(K)$,  $\g=\mathfrak X(K)$ and $\h=C^\infty(K,\mathfrak k)$ endowed with the Lie bracket $\llceil-,-\rrceil$ defined in \eqref{eq:bracketbundle}. Under these assumptions one can introduce on $\h$ a bilinear product $\triangleright$ via the following formula
\begin{equation}
	f\triangleright g:=\phi^{-1}(f)g,\qquad \forall f,g\in\h,\label{eq:Kpl}
\end{equation}
which which makes $(\h, \llceil-,-\rrceil, \triangleright)$ into a post-Lie algebra whose associated Lie algebra $\overline\h$ has as underlying vector  space $C^\infty(K,\mathfrak k)$ and whose Lie bracket is 
\[
	\llfloor f,g\rrfloor=\phi^{-1}(f)g-\phi^{-1}(g)f+\llceil f,g\rrceil,\qquad \forall f,g\in C^\infty(K,\mathfrak k).
\]
In other words, the post-Lie algebra $(\h, \llceil-,-\rrceil, \triangleright)$ introduced above, corresponds to the canonical action Lie algebroid defined on $K$ acting on itself by left translations and it represents an instance of the theory contained in \cite{M-KSV}.
Pulling back \eqref{eq:Kpl} to $\g$, one obtains a $\mathbb K$-linear application $\triangledown:\g\otimes\g\rightarrow\g$, defined by
\begin{equation}
\triangledown(X\otimes Y)=\triangledown_X Y=\phi^{-1}(\phi(X)\triangleright\phi(Y)),\qquad \forall X,Y\in\g,
\end{equation}
which is $C^\infty(K)$-linear with respect to the first entry and such that
\[
	\triangledown_X(\xi Y)
	=X(\xi)Y+\xi \triangledown_X Y,\,\forall\xi\in C^\infty(K),\qquad X,Y\in\mathfrak X(K).
\]
This last identity, together with \eqref{eq:Kpl}, implies that $\triangledown_X Y=0$ for all $X\in\mathfrak X(K)$ and all $Y$ left-invariant. In other words, $\triangledown$ defines a flat linear connection on $TK$, whose flat sections are the left-invariant vector fields, and whose torsion is easily shown to be parallel since $T(X_x,X_y)=-X_{[x,y]_\mathfrak t}$, for all $x,y\in\mathfrak k$. In this way one recovers the post-Lie algebra on $\mathfrak X(K)$ defined by the \emph{Cartan connection}, see for example Corollary 4 in \cite{EFMpostLiealgebra-factorization}.
\end{example}

\begin{remark}
Following the suggestions contained in \cite{M-KSV}, it could be interesting to investigate the possibility to relate the approach to post-Lie algebras via crossed morphisms proposed in this note to homogeneous manifolds more general than Lie groups.
\end{remark}

\begin{example}
In the study of homogeneous manifolds is the following remarkable example of a post-Lie algebra, which was recently found by Grong, Munthe-Kaas and Stava in \cite{GMKS}. Before explaining its geometric origin and motivating its study, let us state it in algebraic terms. 	
	
Let $\mathfrak{h}=(V,[-,-])$ be a Lie algebra and let $\mathfrak{e}:=\End(V)$ be the Lie algebra of the linear endomorphisms of $V$ and $\mathfrak{d}$ its sub-Lie algebra of derivations of $V$. Let $\alpha\co \h \to  \mathfrak{d}$ be a mere linear map. 
Set $x \pl_\h y := \alpha_x(y)$ for all $x,y\in V$. 
Set $T\co \h \ot \h \to \h$ 
\begin{equation*}
	T(x,y) = x\pl y - y\pl x -[x,y]_\h
\end{equation*}
be the \emph{torsion}, which measures for $[-,-]_\h$ its lack of being the anti-symmetrization of $\pl$, let and $R\co \h\ot \h \to \e$ be 
\begin{equation*}
	R(x,y) = \alpha_x\circ \alpha_y - \alpha_y\circ \alpha_x  - \alpha_{[x,y]_\h}
\end{equation*}
be the \emph{curvature}, which measures for $\alpha$ its lack of being a morphism of Lie algebras. For any map $F\co \h^{\ot n} \to \h$ (or $F\co \h^{\ot n} \to \e$) one has 	
\begin{equation*}
	(\alpha_xF)(x_1,\ldots,x_n) = \alpha_x(F(x_1,\ldots,x_n)) - \sum_{i=1}^{n} F(x_1,,\ldots, \alpha_x(x_i),\ldots,x_n). 
\end{equation*} 
We write $\alpha F=0$ if one has $\alpha_xF =0$ for all $x\in \h$. 
	
Consider the space $\h\oplus \e$ and define the following operations $[-,-]$ and $\pl$: for all $x,y\in \h$ and $E,F\in \e$, let:
\begin{align*}
		x\pl y &=  x\pl_\h y		&& [x,y] = -T(x,y) + R(x,y)\\
		x\pl E &= \alpha_xE			&& [x,E] = E(x) \\
		E\pl x &= E(x)				&& \\
		E\pl F &= [E,F]_{\e}			&& [E,F] = -[E,F]_{\e}.
\end{align*}

\begin{theorem}[\cite{GMKS}]
If $\alpha T = 0$ and $\alpha R = 0$, then the triple $(\h\oplus \e,[-,-],\pl)$ is a post-Lie algebra.  
\end{theorem}	

The verification of this fact is straightforward but tedious. Note that the Jacobi identity for the bracket follows from $\alpha T = 0$ and $\alpha R = 0$.	This may be easily seen if one remarks that $T$ and $R$ satisfy the \emph{Bianchi identities}: 
\begin{align*}
	\sum_{\circlearrowright}T(T(x,y),z) + (\alpha_xT)(y,z) - R(x,y)(z) = 0 \\
	\sum_{\circlearrowright} (\alpha_xR)(y,z) +  R(T(x,y),z) = 0,
\end{align*}
where $\sum_{\circlearrowright}$ denotes the sum over the cyclic permutations of $(x,y,z)$. 
	
The example from \cite{GMKS} is as follows. Consider a connected manifold $M$ equipped with an affine connection $\nabla$ (which, in our example, is $\alpha$). It is known since \cite{Munthe-Kaas-Lundervold-PL} that if $\nabla$ is flat and with a constant torsion (\ie $R=0$ and $\nabla T =0$), then the space of vector fields on $M$, $\Gamma(TM)$, is a post-Lie algebra; its Lie bracket is the usual Jacobi bracket and the post-Lie product is given by $x\pl y = \nabla_x(y)$. Such a result allows us to perform the Runge--Kutta--Munthe-Kaas integration methods. 
	
A natural question to ask is whether such a theory is available for more general manifolds. The recent example obtained by Grong, Munthe-Kaas and Stava shows that manifolds equipped with an affine connection that has both a constant curvature and torsion (\ie $\nabla R= 0$ and $\nabla T=0$) enter this framework. The main ingredient consists in considering, along with the Lie algebra  $\h=\Gamma(TM)$, the infinitesimal transformations of the holonomy group, say $\mathfrak{hol}\subset \e$.  The proof is based on the study of the framed bundle $FM$ of $M$. It has a natural flat connection, let us denote it by $\tilde{\nabla}$. On the sub-bundle $N\subset FM$ of the frames that are obtained by parallel transport of some original one, the connection $\tilde{\nabla}$ turns out to have constant torsion if and only if the connection of $M$ is such that $\nabla R= 0$ and $\nabla T=0$. Therefore, $\Gamma(TN)$ is a post-Lie algebra. The post-Lie algebra $\h\oplus \mathfrak{hol}$ is obtained as being the natural decomposition of a sub-post-Lie algebra of $\Gamma(TN)$. 
\end{example}


\section{On the universal enveloping algebras of post-Lie algebras}
\label{sec:unipo}


\subsection{Free pre-Lie and free post-Lie algebras}
\label{ssec:freeprepostlie}
	
Free pre- and post-Lie algebras are well-understood in terms of non-planar respectively planar rooted trees. See for instance \cite{ChapotonLivernet2000,MQS,Munthe-Kaas-Lundervold-PL,Guin-Oudom}. We will review an operadic approach, which provides a convenient framework to understand how (free) pre- and post-Lie algebras relate with other structures such as associated universal enveloping algebras.

	
\subsubsection{Free pre-Lie algebras}
\label{sssec:freeprelie}
	
We briefly review the operadic model $\SB$ of the operad $\ca{P}re\ca{L}ie$ that controls pre-Lie algebras in terms of (non-planar) rooted trees. See Chapoton and Livernet \cite{ChapotonLivernet2000} for more details. 
For each $n\geq1$, the vector space $\SB(n)$ is generated by the set of pairs $(T,f)$, where: 
	\begin{itemize}
		\item $T$ is an isomorphism class of non-planar rooted trees with $n$ vertices;
		\item $f\co \{1,\ldots,n\} \to \text{Vert}(T)$ is a bijection between $\{1,\ldots,n\}$ and the set of vertices of $T$. 
	\end{itemize} 
It is naturally endowed with an action of the symmetric group $\Sigma_{n}$. Since we consider non-planar rooted trees, the action is not free: 
	\begin{equation*}
		\begin{tikzpicture}
			[baseline, my circle/.style={draw, fill, circle, minimum size=3pt, inner sep=0pt}, level distance=0.4cm, 
			level 2/.style={sibling distance=0.6cm}, sibling distance=0.6cm,baseline=1ex]
			\node [prelie,label=left:\tiny{$3$}] {} [grow=up]
			{
				child {node [prelie,label=left:\tiny{$2$}]  {}} 
				child {node [prelie,label=left:\tiny{$1$}] {}} 
			};
		\end{tikzpicture}
		\cdot (2,1,3) 
		=
		\begin{tikzpicture}
			[baseline, my circle/.style={draw, fill, circle, minimum size=3pt, inner sep=0pt}, level distance=0.4cm, 
			level 2/.style={sibling distance=0.6cm}, sibling distance=0.6cm,baseline=1ex]
			\node [prelie,label=left:\tiny{$3$}] {} [grow=up]
			{
				child {node [prelie,label=left:\tiny{$1$}]  {}} 
				child {node [prelie,label=left:\tiny{$2$}] {}} 
			};
		\end{tikzpicture}
		=
		\begin{tikzpicture}
			[baseline, my circle/.style={draw, fill, circle, minimum size=3pt, inner sep=0pt}, level distance=0.4cm, 
			level 2/.style={sibling distance=0.6cm}, sibling distance=0.6cm,baseline=1ex]
			\node [prelie,label=left:\tiny{$3$}] {} [grow=up]
			{
				child {node [prelie,label=left:\tiny{$2$}]  {}} 
				child {node [prelie,label=left:\tiny{$1$}] {}} 
			};
		\end{tikzpicture}
		\in \SB(3).
	\end{equation*}
	The operadic structure is as follows. 	For any two trees $\tau_1 \in \SB(m)$ and $\tau_2 \in \SB(n)$, 
	let $v$ be the vertex of $\tau_1$ that is labeled by $i$; let $k$ be the number of its incoming edges. 
	For a map $\phi \co \{1,\ldots,k\} \to V(\tau_2)$, where $V(\tau)$ denotes the vertex set of the tree $\tau$, let $\tau_1 \circ_i^{\phi} \tau_2$ be the tree obtained by 
	substituting the vertex labeled by $i$ by the tree $\tau_2$, and then grafting the incoming edges of the vertex $i$ 
	to the labeled vertices of $\tau_2$ following the map $\phi$. The labelling of $\tau_1 \circ_i^{\phi} \tau_2$ 
	is given by classical re-indexing.  
	
	The partial composition of $\tau_1$ and $\tau_2$ at vertex $i$ is:  
	\begin{equation}\label{eq: explicit partial compo partial planar tree}
		\tau_1 \circ_i \tau_2 = \sum_{\phi} \tau_1 \circ_i^{\phi} \tau_2,
	\end{equation} 
	where $\phi$ runs through the set of maps from $\{1,\ldots,k\}$ to $V(\tau_2)$. 
	For instance: 
	\begin{equation*}
		\begin{tikzpicture}
			[baseline, my circle/.style={draw, fill, circle, minimum size=3pt, inner sep=0pt}, level distance=0.4cm, 
			level 2/.style={sibling distance=0.6cm}, sibling distance=0.6cm,baseline=1ex]
			\node [prelie,label=left:\tiny{$1$}] {} [grow=up]
			{
				child {node [prelie,label=left:\tiny{$3$}]  {}} 
				child {node [prelie,label=left:\tiny{$2$}] {}} 
			};
		\end{tikzpicture}
		\circ_1 
		\begin{tikzpicture}
			[baseline, my circle/.style={draw, fill, circle, minimum size=3pt, inner sep=0pt}, level distance=0.4cm, 
			level 2/.style={sibling distance=0.6cm}, sibling distance=0.6cm,baseline=1ex]
			\node [prelie,label=left:\tiny{$1$}] {} [grow=up]
			{
				child {node [prelie,label=left:\tiny{$2$}]  {}} 
			};
		\end{tikzpicture}
		=
		\begin{tikzpicture}
			[baseline, my circle/.style={draw, fill, circle, minimum size=3pt, inner sep=0pt}, level distance=0.4cm, 
			level 2/.style={sibling distance=0.6cm}, sibling distance=0.6cm,baseline=1ex]
			\node [prelie,label=left:\tiny{$1$}] {} [grow=up]
			{
				child {node [prelie,label=left:\tiny{$4$}]  {}} 
				child {node [prelie,label=left:\tiny{$3$}] {}} 
				child {node [prelie,label=left:\tiny{$2$}] {}}
			};
		\end{tikzpicture}
		+
		\begin{tikzpicture}
			[baseline, my circle/.style={draw, fill, circle, minimum size=3pt, inner sep=0pt}, level distance=0.4cm, 
			level 2/.style={sibling distance=0.6cm}, sibling distance=0.6cm,baseline=1ex]
			\node [prelie,label=left:\tiny{$1$}] {} [grow=up]
			{
				child {node [prelie,label=left:\tiny{$2$}]  {} 
					child {node [prelie,label=left:\tiny{$4$}] {}} 
				}
				child {node [prelie,label=left:\tiny{$3$}] {} }
			};
		\end{tikzpicture}
		+
		\begin{tikzpicture}
			[baseline, my circle/.style={draw, fill, circle, minimum size=3pt, inner sep=0pt}, level distance=0.4cm, 
			level 2/.style={sibling distance=0.6cm}, sibling distance=0.6cm,baseline=1ex]
			\node [prelie,label=left:\tiny{$1$}] {} [grow=up]
			{
				child {node [prelie,label=left:\tiny{$2$}]  {} 
					child {node [prelie,label=left:\tiny{$3$}] {}} 
				}
				child {node [prelie,label=left:\tiny{$4$}] {} }
			};
		\end{tikzpicture}
		+
		\begin{tikzpicture}
			[baseline, my circle/.style={draw, fill, circle, minimum size=3pt, inner sep=0pt}, level distance=0.4cm, 
			level 2/.style={sibling distance=0.6cm}, sibling distance=0.6cm,baseline=1ex]
			\node [prelie,label=left:\tiny{$1$}] {} [grow=up]
			{
				child {node [prelie,label=left:\tiny{$2$}]  {} 
					child {node [prelie,label=left:\tiny{$4$}] {}} 
					child {node [prelie,label=left:\tiny{$3$}] {}}
				}
			};
		\end{tikzpicture}
		.
	\end{equation*}
	
	\begin{theorem}[\cite{ChapotonLivernet2000}]
		The operad $\SB$ is isomorphic to $\ca{P}re\ca{L}ie$.  
	\end{theorem}
	
	For instance one observes that: 
		\begin{equation*}
		\begin{tikzpicture}
			[  level distance=0.4cm, level 2/.style={sibling distance=0.6cm}, sibling distance=0.6cm,baseline=1ex,level 1/.style={level distance=0.4cm}]
			\node [] {} [grow'=up]
			{	
				{node [prelie, label=left:\small{$3$}]  {}
					child {node [prelie,label=left:\small{$1$}]  {}} 
					child {node [prelie,label=left:\small{$2$}]  {}} 
				}
			};
		\end{tikzpicture}  
		= 
		\begin{tikzpicture}
			[  level distance=0.4cm, level 2/.style={sibling distance=0.6cm}, sibling distance=0.6cm,baseline=1ex,level 1/.style={level distance=0.4cm}]
			\node [] {} [grow'=up]
			{	
				{node [prelie, label=left:\small{$2$}]  {}
					child {node [prelie,label=left:\small{$1$}]  {}} 
				}
			};
		\end{tikzpicture}  
		\circ_2 
		\begin{tikzpicture}
			[  level distance=0.4cm, level 2/.style={sibling distance=0.6cm}, sibling distance=0.6cm,baseline=1ex,level 1/.style={level distance=0.4cm}]
			\node [] {} [grow'=up]
			{	
				{node [prelie, label=left:\small{$2$}]  {}
					child {node [prelie,label=left:\small{$1$}]  {}} 
				}
			};
		\end{tikzpicture}  
		- 
		\begin{tikzpicture}
			[  level distance=0.4cm, level 2/.style={sibling distance=0.6cm}, sibling distance=0.6cm,baseline=1ex,level 1/.style={level distance=0.4cm}]
			\node [] {} [grow'=up]
			{	
				{node [prelie, label=left:\small{$2$}]  {}
					child {node [prelie,label=left:\small{$1$}]  {}} 
				}
			};
		\end{tikzpicture}  
		\circ_1 
		\begin{tikzpicture}
			[  level distance=0.4cm, level 2/.style={sibling distance=0.6cm}, sibling distance=0.6cm,baseline=1ex,level 1/.style={level distance=0.4cm}]
			\node [] {} [grow'=up]
			{	
				{node [prelie, label=left:\small{$2$}]  {}
					child {node [prelie,label=left:\small{$1$}]  {}} 
				}
			};
		\end{tikzpicture}  .
	\end{equation*}
	In particular, if $(V,\pl)$ is a pre-Lie algebra, one has  
		\begin{equation*}
		\begin{tikzpicture}
			[  level distance=0.4cm, level 2/.style={sibling distance=0.6cm}, sibling distance=0.6cm,baseline=1ex,level 1/.style={level distance=0.4cm}]
			\node [] {} [grow'=up]
			{	
				{node [prelie, label=left:\small{$2$}]  {}
					child {node [prelie,label=left:\small{$1$}]  {}} 
				}
			};
		\end{tikzpicture}  
		\left( x \ot y \right)
	=	x\pl y
	~~~\text{ and }~~~
		\begin{tikzpicture}
			[  level distance=0.4cm, level 2/.style={sibling distance=0.6cm}, sibling distance=0.6cm,baseline=1ex,level 1/.style={level distance=0.4cm}]
			\node [] {} [grow'=up]
			{	
				{node [prelie, label=left:\small{$3$}]  {}
					child {node [prelie,label=left:\small{$1$}]  {}} 
					child {node [prelie,label=left:\small{$2$}]  {}} 
				}
			};
		\end{tikzpicture}  
		\left( x \ot  y \ot z \right)
	= x\pl(y\pl z) - (x\pl y)\pl z.
	\end{equation*}
	That the corolla is non-planar corresponds to the symmetry in the first two variables of the associator 
	$$ 
		x\pl(y\pl z) - (x\pl y)\pl z =  y\pl(x\pl z) - (y\pl x)\pl z.
	$$
	
\begin{remark}
	For later use, observe that corollas 
	\begin{equation}\label{eq: braces}
		\begin{tikzpicture}
			[  level distance=0.4cm, level 2/.style={sibling distance=0.6cm}, sibling distance=0.6cm,baseline=2.5ex,level 1/.style={level distance=0.4cm}]
			\node [] {} [grow'=up]
			{	
				{node [prelie, label=right:\small{$n+1$}]  {}
					child {node [prelie,label=above:\small{$1$}]  {}} 
					child {node [prelie,label=above:\small{$2$}]  {}} 
					child {node [label=above:\small{$\dots$}]        {}} 
					child {node [prelie,label=above:\small{$n$}]  {}} 
				}
			};
		\end{tikzpicture}  
	\end{equation}
	define operations that are symmetric in the first $n$ variables. 
\end{remark}

From the above theorem, we can deduce a description of the free pre-Lie algebra on one generator. 
Recall that, given an operad $\mathcal{O}$ and a vector space $V$, the free $\mathcal{O}$--algebra generated by $V$  
is explicitly given by $\mathcal{O}(V):= \bigoplus_{n\geq 0} \mathcal{O}(n)\ot_{\mathbb{S}_n} V^{\ot n}$. 
Here: $\ca{P}re\ca{L}ie(\Bbbk) = \SB(\Bbbk)$ is therefore generated by the set
\begin{equation*}\label{eq: set gen of free prelie}
	\ca{G}_{pre} = 
	\bigg\{
	\begin{tikzpicture}
		[  level distance=0.5cm, level 2/.style={sibling distance=0.6cm}, sibling distance=0.6cm,baseline=1ex,level 1/.style={level distance=0.4cm}]
		\node [] {} [grow=up]
		{	
			{node [prelie]  {}
			}
		};
	\end{tikzpicture} 
	,
	\begin{tikzpicture}
		[  level distance=0.5cm, level 2/.style={sibling distance=0.6cm}, sibling distance=0.6cm,baseline=1ex,level 1/.style={level distance=0.4cm}]
		\node [] {} [grow=up]
		{	
			{node [prelie]  {}
				child {node [prelie]  {} 
				}
			}
		};
	\end{tikzpicture} 
	,
	\begin{tikzpicture}
		[  level distance=0.5cm, level 2/.style={sibling distance=0.6cm}, sibling distance=0.6cm,baseline=1ex,level 1/.style={level distance=0.4cm}]
		\node [] {} [grow=up]
		{	
			{node [prelie]  {}
				child {node [prelie]  {} }
				child {node [prelie]  {} 
				}
			}
		};
	\end{tikzpicture} 
	,
	\begin{tikzpicture}
		[  level distance=0.5cm, level 2/.style={sibling distance=0.6cm}, sibling distance=0.6cm,baseline=1ex,level 1/.style={level distance=0.4cm}]
		\node [] {} [grow=up]
		{	
			{node [prelie]  {}
				child {node [prelie]  {} 
					child {node [prelie]  {} 
				}}
			}
		};
	\end{tikzpicture} 
	,
	\begin{tikzpicture}
		[  level distance=0.5cm, level 2/.style={sibling distance=0.6cm}, sibling distance=0.6cm,baseline=1ex,level 1/.style={level distance=0.4cm}]
		\node [] {} [grow=up]
		{	
			{node [prelie]  {}
				child {node [prelie]  {} }
				child {node [prelie]  {} 
					child {node [prelie]  {} }
				}
			}
		};
	\end{tikzpicture} 
	,
	\begin{tikzpicture}
		[  level distance=0.5cm, level 2/.style={sibling distance=0.6cm}, sibling distance=0.6cm,baseline=1ex,level 1/.style={level distance=0.4cm}]
		\node [] {} [grow=up]
		{	
			{node [prelie]  {}
				child {node [prelie]  {} }
				child {node [prelie]  {} }
				child {node [prelie]  {} }
			}
		};
	\end{tikzpicture} 
	, 
	\begin{tikzpicture}
		[  level distance=0.5cm, level 2/.style={sibling distance=0.6cm}, sibling distance=0.6cm,baseline=1ex,level 1/.style={level distance=0.4cm}]
		\node [] {} [grow=up]
		{	
			{node [prelie]  {}
				child {node [prelie]  {} 
					child {node [prelie]  {} 
						child {node [prelie]  {} 
				}}}
			}
		};
	\end{tikzpicture} 
	,\, \ldots \bigg\}.
\end{equation*}

Its structure of pre-Lie algebra is given as follows. For any two trees $\tau_1$ and $\tau_2$ in $\ca{G}_{pre}$, their pre-Lie product 
\begin{equation*}
	\tau_1 \pl \tau_2  
\end{equation*}
is obtained by considering the sum of all graftings of the root of $\tau_1$ to the vertices of $\tau_2$, joining them by an new edge. 
For instance:
\begin{equation*}
		\begin{tikzpicture}
		[  level distance=0.5cm, level 2/.style={sibling distance=0.6cm}, sibling distance=0.6cm,baseline=0ex,level 1/.style={level distance=0.4cm}]
		\node [] {} [grow=up]
		{	
			{node [prelie]  {}
			}
		};
	\end{tikzpicture} 
	\pl 
	\begin{tikzpicture}
		[  level distance=0.5cm, level 2/.style={sibling distance=0.6cm}, sibling distance=0.6cm,baseline=1ex,level 1/.style={level distance=0.4cm}]
		\node [] {} [grow=up]
		{	
			{node [prelie]  {}
				child {node [prelie]  {} 
				}
			}
		};
	\end{tikzpicture} 
	=
		\begin{tikzpicture}
		[  level distance=0.5cm, level 2/.style={sibling distance=0.6cm}, sibling distance=0.6cm,baseline=1ex,level 1/.style={level distance=0.4cm}]
		\node [] {} [grow=up]
		{	
			{node [prelie]  {}
				child {node [prelie]  {} }
				child {node [prelie]  {} 
				}
			}
		};
	\end{tikzpicture} 
	+
	\begin{tikzpicture}
	[  level distance=0.5cm, level 2/.style={sibling distance=0.6cm}, sibling distance=0.6cm,baseline=1ex,level 1/.style={level distance=0.4cm}]
	\node [] {} [grow=up]
	{	
		{node [prelie]  {}
			child {node [prelie]  {} 
				child {node [prelie]  {} 
			}}
		}
	};
\end{tikzpicture} .
\end{equation*}


\subsubsection{Free post-Lie algebras}
\label{sssec:freepostlie}

The structure of post-Lie algebra was investigated in \cite{Munthe-Kaas-Lundervold-PL} where, in particular, a description of the free post-Lie algebra on one generator was given in terms of planar rooted trees. In a similar way as for $\ca{P}re\ca{L}ie$, there also exists a combinatorial model, $\PSB$, for the operad $\ca{P}ost\ca{L}ie$ that controls post-Lie algebras; see \cite{MQS}. While $\SB$ is based on non-planar rooted trees, the operad $\PSB$ is based on planar ones. 
	
Here, we briefly review the construction of $\PSB$. However, instead of giving a formal definition of $\PSB$, which is more involved than that of $\SB$, will we simply emphasize a few notable differences between $\PSB$ and $\SB$. We refer to \cite{MQS} for details. 
	
Note that in a post-Lie algebra $\h=(V,[-,-],\pl)$ the Lie bracket is an obstruction to the commutativity of the associator in the first two variables: 
	\begin{equation*}
		[x,y]\triangleright z={\rm a}_{\triangleright}(x,y,z)-{\rm a}_{\triangleright}(y,x,z). 
	\end{equation*}
	Therefore, instead of considering non-planar rooted trees as in $\SB$, one is led to consider planar ones.
	Moreover, since one has to implement the Lie bracket, it is also convenient to consider trees with a different type of vertices (unlabelled ones) and declare them to be anti-symmetric and satisfying the Jacobi relation. 
	Doing this, we end up with trees with two types of vertices: labeled and unlabelled. 
	The resulting ``trees" are in part planar (around labeled vertices) and in part anti-symmetric (around unlabelled vertices).

		If $\h=(V,[-,-],\pl)$ is a post-Lie algebra, one has  
	\begin{equation*}
		\begin{tikzpicture}
			[  level distance=0.4cm, level 2/.style={sibling distance=0.6cm}, sibling distance=0.6cm,baseline=1ex,level 1/.style={level distance=0.4cm}]
			\node [] {} [grow'=up]
			{	
				{node [my circle, label=left:\small{$2$}]  {}
					child {node [my circle,label=left:\small{$1$}]  {}} 
				}
			};
		\end{tikzpicture}  
		\left( x \ot y \right)
		=	x\pl y,
		~~~~~
		\begin{tikzpicture}
			[  level distance=0.5cm, level 2/.style={sibling distance=0.6cm}, sibling distance=0.6cm,baseline=1ex,level 1/.style={level distance=0.4cm}]
			\node [] {} [grow=up]
			{	
				{node [whitesq]  {}
					child {node [my circle,label=above:\small{$2$}]  {} }
					child {node [my circle,label=above:\small{$1$}]  {} 
					}
				}
			};
		\end{tikzpicture} 
		(x\ot y) = 
		[x,y]
		~~~\text{ and }~~~
		\begin{tikzpicture}
			[  level distance=0.4cm, level 2/.style={sibling distance=0.6cm}, sibling distance=0.6cm,baseline=1ex,level 1/.style={level distance=0.4cm}]
			\node [] {} [grow'=up]
			{	
				{node [my circle, label=right:\small{$3$}]  {}
					child {node [my circle,label=above:\small{$1$}]  {}} 
					child {node [my circle,label=above:\small{$2$}]  {}} 
				}
			};
		\end{tikzpicture}  
		\left( x \ot  y \ot z \right)
		= x\pl(y\pl z) - (x\pl y)\pl z.
	\end{equation*}
\begin{remark}
Here, the second tree has an unlabelled vertex as root, which stands for the Lie bracket. 
Another natural way to implement this bracket would have been by formally writing brackets of trees, like 
\begin{equation*}
\left[ 
\scalebox{.4}{
\begin{forest}
	baseline,for tree={%
		label/.option=content,
		content=,
		circle,
		thick,
		fill,
		minimum size=9pt,
		inner sep=0pt,
		l =.5cm,
		s sep= 8mm,
		grow=north,
		edge ={line width=1pt}
	}
	[
	[] 
	]
\end{forest}
}
	,
	\bullet
	\right] 
	:= 
\scalebox{.4}{
\begin{forest}
	baseline,for tree={%
		label/.option=content,
		content=,
		circle,
		thick,
		fill,
		minimum size=9pt,
		inner sep=0pt,
		l sep=5mm,
		l =.5cm,
		s sep= 8mm,
		grow=north,
		edge ={line width=1pt}
	}
	[ , fill=white, shape=rectangle, draw=black, minimum size=10pt, 
	[] 	[[]]
	]
\end{forest}
}
\end{equation*}
The choice of the "tree"-implementation of the Lie bracket was motivated by an easier description of the operadic structure of $\PSB$, as all elements are (classes) of trees. 
\end{remark}

	Note that the corolla
	$	\begin{tikzpicture}
		[  level distance=0.4cm, level 2/.style={sibling distance=0.6cm}, sibling distance=0.6cm,baseline=1ex,level 1/.style={level distance=0.4cm}]
		\node [] {} [grow'=up]
		{	
			{node [my circle, label=right:\small{$3$}]  {}
				child {node [my circle,label=above:\small{$1$}]  {}} 
				child {node [my circle,label=above:\small{$2$}]  {}} 
			}
		};
	\end{tikzpicture}$
	is a planar tree. The operad $\PSB$ is designed such that: 
	\begin{equation*}
				\begin{tikzpicture}
				[  level distance=0.4cm, level 2/.style={sibling distance=0.6cm}, sibling distance=0.6cm,baseline=1ex,level 1/.style={level distance=0.4cm}]
				\node [] {} [grow'=up]
				{	
					{node [my circle, label=left:\small{$2$}]  {}
						child {node [my circle,label=left:\small{$1$}]  {}} 
					}
				};
			\end{tikzpicture}  
			\circ_1 
			\begin{tikzpicture}
			[  level distance=0.5cm, level 2/.style={sibling distance=0.6cm}, sibling distance=0.6cm,baseline=1ex,level 1/.style={level distance=0.4cm}]
			\node [] {} [grow=up]
			{	
				{node [whitesq]  {}
					child {node [my circle,label=above:\small{$2$}]  {} }
					child {node [my circle,label=above:\small{$1$}]  {} 
					}
				}
			};
		\end{tikzpicture}
		=	\begin{tikzpicture}
			[  level distance=0.4cm, level 2/.style={sibling distance=0.6cm}, sibling distance=0.6cm,baseline=1ex,level 1/.style={level distance=0.4cm}]
			\node [] {} [grow'=up]
			{	
				{node [my circle, label=right:\small{$3$}]  {}
					child {node [my circle,label=above:\small{$1$}]  {}} 
					child {node [my circle,label=above:\small{$2$}]  {}} 
				}
			};
		\end{tikzpicture}  
		-
			\begin{tikzpicture}
			[  level distance=0.4cm, level 2/.style={sibling distance=0.6cm}, sibling distance=0.6cm,baseline=1ex,level 1/.style={level distance=0.4cm}]
			\node [] {} [grow'=up]
			{	
				{node [my circle, label=right:\small{$3$}]  {}
					child {node [my circle,label=above:\small{$2$}]  {}} 
					child {node [my circle,label=above:\small{$1$}]  {}} 
				}
			};
		\end{tikzpicture}  .
	\end{equation*}
	
		As for the pre-Lie case, one has the following result. 
	
	\begin{theorem}\label{th: iso}{\cite[Theorem 11]{MQS}}
		The operad $\PSB$ is isomorphic to $\ca{P}ost\ca{L}ie$. 
	\end{theorem}
	
	By Theorem \ref{th: iso}, one deduces that the free post-Lie algebra on one generator is generated by 
	\begin{equation*}\label{eq: set gen of free postlie}
		\mathcal{G}_{post}
		= \bigg\{
		\begin{tikzpicture}
			[  level distance=0.5cm, level 2/.style={sibling distance=0.6cm}, sibling distance=0.6cm,baseline=1ex,level 1/.style={level distance=0.4cm}]
			\node [] {} [grow=up]
			{	
				 {node [my circle]  {}
				}
			};
		\end{tikzpicture} 
		,
		\begin{tikzpicture}
			[  level distance=0.5cm, level 2/.style={sibling distance=0.6cm}, sibling distance=0.6cm,baseline=1ex,level 1/.style={level distance=0.4cm}]
			\node [] {} [grow=up]
			{	
			 {node [my circle]  {}
					child {node [my circle]  {} 
					}
				}
			};
		\end{tikzpicture} 
		,
		\begin{tikzpicture}
			[  level distance=0.5cm, level 2/.style={sibling distance=0.6cm}, sibling distance=0.6cm,baseline=1ex,level 1/.style={level distance=0.4cm}]
			\node [] {} [grow=up]
			{	
			 {node [my circle]  {}
					child {node [my circle]  {} }
					child {node [my circle]  {} 
					}
				}
			};
		\end{tikzpicture} 
		,
		\begin{tikzpicture}
			[  level distance=0.5cm, level 2/.style={sibling distance=0.6cm}, sibling distance=0.6cm,baseline=1ex,level 1/.style={level distance=0.4cm}]
			\node [] {} [grow=up]
			{	
			 {node [my circle]  {}
					child {node [my circle]  {} 
						child {node [my circle]  {} 
					}}
				}
			};
		\end{tikzpicture} 
		,
		\begin{tikzpicture}
			[  level distance=0.5cm, level 2/.style={sibling distance=0.6cm}, sibling distance=0.6cm,baseline=1ex,level 1/.style={level distance=0.4cm}]
			\node [] {} [grow=up]
			{	
				 {node [my circle]  {}
					child {node [my circle]  {} }
					child {node [my circle]  {} 
						child {node [my circle]  {} }
					}
				}
			};
		\end{tikzpicture} 
		,
		\begin{tikzpicture}
			[  level distance=0.5cm, level 2/.style={sibling distance=0.6cm}, sibling distance=0.6cm,baseline=1ex,level 1/.style={level distance=0.4cm}]
			\node [] {} [grow=up]
			{	
				 {node [my circle]  {}
					child {node [my circle]  {} 
						child {node [my circle]  {} } }
					child {node [my circle]  {} }
				}
			};
		\end{tikzpicture} 
		,
		\begin{tikzpicture}
			[  level distance=0.5cm, level 2/.style={sibling distance=0.6cm}, sibling distance=0.6cm,baseline=1ex,level 1/.style={level distance=0.4cm}]
			\node [] {} [grow=up]
			{	
				 {node [whitesq]  {}
					child {node [my circle]  {} }
					child {node [my circle]  {} 
						child {node [my circle]  {} }
					}
				}
			};
		\end{tikzpicture} 
		,
		\begin{tikzpicture}
			[  level distance=0.5cm, level 2/.style={sibling distance=0.4cm}, sibling distance=0.4cm,baseline=1ex,level 1/.style={level distance=0.4cm}]
			\node [] {} [grow=up]
			{	
				 {node [my circle]  {}
					child {node [my circle]  {} }
					child {node [my circle]  {} }
					child {node [my circle]  {} }
				}
			};
		\end{tikzpicture} 
		, 
		\begin{tikzpicture}
			[  level distance=0.5cm, level 2/.style={sibling distance=0.6cm}, sibling distance=0.6cm,baseline=1ex,level 1/.style={level distance=0.4cm}]
			\node [] {} [grow=up]
			{	
				 {node [my circle]  {}
					child {node [my circle]  {} 
						child {node [my circle]  {} 
							child {node [my circle]  {} 
					}}}
				}
			};
		\end{tikzpicture} 
		,\dots \bigg\}.
	\end{equation*}
	Here, the trees with black round-shaped vertices are planar trees, and the tree displayed with a white square-shaped vertex represents the Lie bracket $[x\pl x,x]$ on the variable $x$.

The post-Lie product 
\begin{equation*}
	\tau_1 \pl \tau_2  
\end{equation*}
of two trees $\tau_1$ and $\tau_2$ in $\ca{G}_{post}$ with only black round-shaped vertices is obtained by considering the sum of all graftings of the root of $\tau_1$ to the \emph{left-side} of the vertices of $\tau_2$, joining them by an new edge. 
	For instance:
	\begin{equation*}
			\begin{tikzpicture}
			[  level distance=0.5cm, level 2/.style={sibling distance=0.6cm}, sibling distance=0.6cm,baseline=0ex,level 1/.style={level distance=0.4cm}]
			\node [] {} [grow=up]
			{	
				{node [my circle]  {}
				}
			};
		\end{tikzpicture} 
		\pl 
		\begin{tikzpicture}
			[  level distance=0.5cm, level 2/.style={sibling distance=0.6cm}, sibling distance=0.6cm,baseline=1ex,level 1/.style={level distance=0.4cm}]
			\node [] {} [grow=up]
			{	
				{node [my circle]  {}
					child {node [my circle]  {} }
					child {node [my circle]  {} 
					}
				}
			};
		\end{tikzpicture} 
		= 
			\begin{tikzpicture}
			[  level distance=0.5cm, level 2/.style={sibling distance=0.6cm}, sibling distance=0.4cm,baseline=1ex,level 1/.style={level distance=0.4cm}]
			\node [] {} [grow=up]
			{	
				{node [my circle]  {}
					child {node [my circle]  {} }
					child {node [my circle]  {} }
					child {node [my circle]  {} }
				}
			};
		\end{tikzpicture} 
		+
		\begin{tikzpicture}
		[  level distance=0.5cm, level 2/.style={sibling distance=0.6cm}, sibling distance=0.6cm,baseline=1ex,level 1/.style={level distance=0.4cm}]
		\node [] {} [grow=up]
		{	
			{node [my circle]  {}
				child {node [my circle]  {} }
				child {node [my circle]  {} 
					child {node [my circle]  {} }
				}
			}
		};
	\end{tikzpicture} 
	+
	\begin{tikzpicture}
		[  level distance=0.5cm, level 2/.style={sibling distance=0.6cm}, sibling distance=0.6cm,baseline=1ex,level 1/.style={level distance=0.4cm}]
		\node [] {} [grow=up]
		{	
			{node [my circle]  {}
				child {node [my circle]  {} 
					child {node [my circle]  {} } }
				child {node [my circle]  {} }
			}
		};
	\end{tikzpicture} 
.
	\end{equation*}

	As for the operad $\SB$, in $\PSB$ one has distinguished elements: the corollas
	\begin{equation}\label{eq: braces PSB}
	\begin{tikzpicture}
		[  level distance=0.4cm, level 2/.style={sibling distance=0.6cm}, sibling distance=0.6cm,baseline=2.5ex,level 1/.style={level distance=0.4cm}]
		\node [] {} [grow'=up]
		{	
			{node [my circle, label=right:\small{$n+1$}]  {}
				child {node [my circle,label=above:\small{$1$}]  {}} 
				child {node [my circle,label=above:\small{$2$}]  {}} 
				child {node [label=above:\small{$\dots$}]        {}} 
				child {node [my circle,label=above:\small{$n$}]  {}} 
			}
		};
	\end{tikzpicture}  .
\end{equation}
However, they  are subject to the following equation:
\begin{equation*}
	\begin{tikzpicture}
		[baseline, my circle/.style={draw, fill, circle, minimum size=3pt, inner sep=0pt}, level distance=0.5cm, 
		level 2/.style={sibling distance=0.8cm}, sibling distance=0.5cm]
		\node [my circle,label=right:\tiny{$n+1$}] {} [grow=up]
		{child {node [my circle,label=above:\tiny{$n$}]  {}} 
			child {node [label=above:\tiny{$\cdots$}]  {}}
			child {node [my circle,label=above:\tiny{$i+1$}]  {}}
			child {node [my circle,label=above:\tiny{$i$}]  {}}
			child {node [label=above:\tiny{$\cdots$}]  {}}
			child {node [my circle,label=above:\tiny{$2$}]  {}}
			child {node [my circle,label=above:\tiny{$1$}]  {}}};
	\end{tikzpicture} - 
	\begin{tikzpicture}
		[baseline, my circle/.style={draw, fill, circle, minimum size=3pt, inner sep=0pt}, level distance=0.5cm, 
		level 2/.style={sibling distance=0.8cm}, sibling distance=0.5cm]
		\node [my circle,label=right:\tiny{$n+1$}] {} [grow=up]
		{child {node [my circle,label=above:\tiny{$n$}]  {}} 
			child {node [label=above:\tiny{$\cdots$}]  {}}
			child {node [my circle,label=above:\tiny{$i$}]  {}}
			child {node [my circle,label=above:\tiny{$i+1$}]  {}}
			child {node [label=above:\tiny{$\cdots$}]  {}}
			child {node [my circle,label=above:\tiny{$2$}]  {}}
			child {node [my circle,label=above:\tiny{$1$}]  {}}};
	\end{tikzpicture} 
	=
		\begin{tikzpicture}
		[baseline, my circle/.style={draw, fill, circle, minimum size=3pt, inner sep=0pt}, level distance=0.5cm, 
		level 2/.style={sibling distance=0.8cm}, sibling distance=0.5cm]
		\node [my circle,label=left:\tiny{$n$}] {} [grow=up]
		{child {node [my circle,label=above:\tiny{$n-1$}]  {}} 
			child {node [label=above:\tiny{$\cdots$}]  {}}
			child {node [my circle,label=above:\tiny{$2$}]  {}}
			child {node [my circle,label=above:\tiny{$1$}]  {}}};
	\end{tikzpicture} 
	\circ_i 
	\begin{tikzpicture} 
		[baseline, my circle/.style={draw, fill, circle, minimum size=3pt, inner sep=0pt}, level distance=0.4cm, 
		level 2/.style={sibling distance=0.8cm}, sibling distance=0.6cm]
		\node [whitesq] {} [grow=up]
		{child {node [my circle,label=above:\tiny{$2$}]  {}}
			child {node [my circle,label=above:\tiny{$1$}]  {}}};
	\end{tikzpicture} .
\end{equation*}


\subsection{On the Grossman--Larson product}
\label{ss:GL}

Let $(\h,\pl)$ be a pre-Lie algebra whose underlying vector space is $V$ and let $\g= (V,\llbracket-,-\rrbracket)$ be the corresponding Lie algebra, see Example \ref{ex:preLie}. There are two enveloping algebras that may be considered: $\ca{U}(\g)$ and $\ca{U}(\h)=S(V)$, the latter being the symmetric algebra of $V$ since $\h$ is abelian. Note that they are in fact bialgebras (even Hopf algebras) when endowed with the unshuffle coproduct. 

In the seminal work of Guin--Oudom \cite{Guin-Oudom}, the authors showed that a structure of pre-Lie algebra on $V$ corresponds to an associative product $\ast$ on the free symmetric coalgebra $S(V)$. This product is called the \emph{Grossman--Larson product}; see \cite{Grossman-Larson}. It endows $S(V)$ with another Hopf algebra structure which is isomorphic to $\ca{U}(\g)$.\\ 
The previous construction has a non-commutative analogue. More precisely, it can be extended to a (general) post-Lie algebra $(\h,\pl)$, where $\h=(V,[-,-])$ and $\g=(V,\llbracket-,-\rrbracket)$, with the Lie bracket $\llbracket-,-\rrbracket$ being the one defined in \eqref{eq:secbra}.
	
It was shown in \cite{EFLM} that, in analogy to the abelian case, one can define on $\mathcal U(\h)$ a new associative product $\ast$, still named Grossman--Larson product, such that $\mathcal U_\ast(\h)=(\mathcal U(\h),\ast)$ is a Hopf algebra isomorphic, as Hopf algebra, to $\mathcal U(\g)$. Let us present another way to construct the Grossman--Larson product. 
	First, we observe the following claims:
\begin{itemize} 
	\item (Claim 1)  $\ca{U}(\h)$ is a left $\ca{U}(\g)$--module in the category of 
		coalgebras (i.e.~the action is compatible with the coproducts);
	\item (Claim 2) There exists an isomorphism $\Theta \co  \ca{U}(\g) \to \ca{U}(\h)$ 
		both of $\ca{U}(\g)$--modules and  of coalgebras. 
\end{itemize} 

From this, the Grossman--Larson product on $\ca{U}(\h)$ is given by $a\ast b := \Theta^{-1}(a)\cdot b$, where $\cdot$ refers to the left action. This is due to the following general result. 

\begin{lemma}
Let $A$ be a bialgebra and $B$ be an $A$--module in the category of coalgebras. 
Let $\Theta\co A \to B$ be an isomorphism of $A$--modules in the category of coalgebras. 
The map $\ast\co B\ot B \to B$ given by 
\begin{equation*}
	b\ast b' = \Theta^{-1}(b)\cdot b'
\end{equation*} 
for all $b,b'\in B$ makes $B$ into a bialgebra. Moreover, $\Theta$ is an isomorphism of bialgebras from $A$ to $(B,\ast)$.  
\end{lemma}

\begin{proof}
We leave the details of the proof to the reader. 
\end{proof}

In other words, the existence of the above isomorphism $\Theta\co A\to B$ permits to transfer the structure of $A$ to $B$.  

\begin{remark}
\begin{enumerate}
\item Note that, since $\Theta$ is a linear isomorphism one may transfer the product of $A$ to $B$ in the classical way: $B$ can be equipped with the product $b\ast'b' := \Theta( \Theta^{-1}(b)\cdot \Theta^{-1}(b'))$ for all $b,b'\in B$. 
Since $\Theta$ is an isomorphism of algebras between $A$ and $(B,\ast)$, both products $\ast'$ and $\ast$ coincide. 
This way of defining the Grossman--Larson product was presented in (the first part of the talk) \cite{Manchon-video}. 

\item The present point of view on the Grossman--Larson product can be related to Gavrilov \cite{Gavrilov} in his study on covariant derivatives and the double exponential. The K-map of Gavrilov (in fact $K^{-1}$) was defined in a similar way as the present map $\Theta$; for more details see Subsection 2.5 of \cite{Al-K_EF_M_MK}.	
\end{enumerate}
\end{remark}

 Following \cite[Section 3]{MQ-Crossed}, let us show Claim 1 and 2. Consider a post-Lie algebra $\h=(V,[-,-],\pl)$ with second Lie algebra $\g:=(V,\llbracket-,-\rrbracket)$ and recall from Proposition \ref{pro:equiv} that 
\begin{equation}
\label{eq:strange}
	\phi=\id_V\co \g\to \h
\end{equation}
is a crossed morphism relative to $\alpha \co \g\to \der(\h)$. 

\begin{remark}\label{rem:strange}
In the following, we will use the notation $\phi$ to denote the identity map $\id_V$ for two main reasons. First, we frequently need to consider the same element of $V$ as both an element of $\h$ and $\g$. Second, the construction presented here, which is based on the identity crossed morphism, can be generalized to include invertible crossed morphisms that differ from the identity. For results that suggest such generalizations, see Example \ref{ex:invR}, Proposition \ref{prop: adjun} and Proposition \ref{prop:invcross}. 
\end{remark}

We begin by the construction of the action 
\begin{equation*}
	M\co \ca{U}(\g) \to \End_{\mathbb K} (\ca{U}(\h)).
\end{equation*} 
First, let us observe that, since $\alpha_x=x\pl$ is a derivation of $\h$, it can be extended as a derivation for the associative algebra $\ca{U}(\h)$. 
Also, let 
\begin{equation*}
	L_\phi   \co \g    \rightarrow    \operatorname{End}_{\mathbb K}(\mathcal U(\h))
\end{equation*} 
 be the linear application defined by 
\begin{equation*}
	L_{\phi(x)}(X) =  \phi(x)\cdot X  \text{ for all }  x\in\g \text{ and } X\in\mathcal U(\mathfrak h),
\end{equation*} 
and let $M\co \mathfrak g\rightarrow\operatorname{End}_\mathbb K(\mathcal U(\mathfrak h))$ be the linear map defined by
\begin{equation}
	M_x=\alpha_x+L_{\phi(x)}, \text{ for all } x\in\mathfrak g.\label{eq:eqM}
\end{equation}
The following lemma shows that $M$ extends to a morphism of associative algebras $M \co \ca{U}(\g) \to \End_{\mathbb K} (\ca{U}(\h))$, providing the action map. 

\begin{lemma}
	For all $x,y\in\mathfrak g$, one has 
	\begin{equation}
		M_{[x,y]}=[M_x,M_y]. \label{eq:comm}
	\end{equation}
	In other words, $\mathcal U(\mathfrak h)$ carries a structure of $\g$--module defined by $M \co \g\rightarrow \End_\mathbb K(\ca{U}(\h))$. 
\end{lemma}

\begin{proof}
For every $x,y\in\mathfrak g$ and $a\in\mathfrak h$, it suffices to compare $M_{[x,y]}(a)$ with $[M_x,M_y](a)$. 
\end{proof}

\begin{lemma}
	The map $M$ defines an action $\ca{U}(\g)\ot \ca{U}(\h)\to \ca{U}(\h)$ in the category of coalgebras.
\end{lemma}
\begin{proof}
	 We denote by $\Delta_{sh}$ the unshuffle coproduct on $\ca{U}(\h)$ (or on $\ca{U}(\g)$). 
	 We use Sweedler's notation: $\Delta_{sh}(X)=X^1\ot X^2$ (omitting the sum). \\
	Observe that:
\begin{enumerate}
	\item for any derivation $f$ of $\ca{U}(\h)$ and for all $Y\in \ca{U}(\h)$, one has $(f\ot1 + 1\ot f)\Delta_{sh}(Y) =  \Delta_{sh}(f(Y))$; 
	\item for any $a\in \g$ and $Y\in \ca{U}(\h)$, one has $(L_{\phi(a)}\ot1 + 1\ot L_{\phi(a)})\Delta_{sh}(Y) = \Delta_{sh}(L_{\phi(a)}(Y))$  since the coproduct is a morphism of algebras.
\end{enumerate}
As a consequence, one has $(M_a\ot1 + 1\ot M_a)\Delta_{sh}(Y) =  \Delta_{sh}(M_a(Y))$ for all $a\in \g$ and  $Y\in \ca{U}(\h)$.
To conclude, we proceed by an induction on the length of the monomials of $\ca{U}(\g)$. 
For $a\in \g$ and  $X\in \ca{U}(\g)$, one has $(aX)^1\ot (aX)^2 = aX^1\ot X^2 + X^1\ot aX^2$. 
Supposing $M_{X}(Y)^1 \ot  M_{X}(Y)^2 = \Delta_{sh}(M_X(Y))$ for all $Y\in \ca{U}(\h)$, the inductive step is  
\begin{align*}
	M_{(aX)^1}(Y^1) \ot  M_{(aX)^2}(Y^2)
	&= M_{aX^1}(Y^1) \ot  M_{X^2}(Y^2) + M_{X^1}(Y^1) \ot  M_{aX^2}(Y^2)\\
	&= (M_{a}\ot \id) (M_{X^1}(Y^1) \ot  M_{X^2}(Y^2)) +   (\id\ot M_a) (M_{X^1}(Y^1) \ot  M_{X^2}(Y^2))\\
	&= (M_{a}\ot \id +  \id\ot M_a) (M_{X}(Y)^1 \ot  M_{X}(Y)^2)\\
	&= \Delta_{sh}(M_{a}(M_X(Y))). 
\end{align*}
\end{proof}

The map $\Theta \co \mathcal U(\mathfrak g)\rightarrow\mathcal U(\mathfrak h)$ is defined on every monomial $X\in\mathcal U(\mathfrak g)$ by
\begin{equation}
\label{eq:Theta}
	\Theta (X)=M_X(1)
\end{equation}
and is  extended to all $\mathcal U(\mathfrak g)$ by linearity. It is a morphism of coalgebras as well as of left $\mathcal U(\mathfrak g)$--modules; see \cite[Proposition 28]{MQS} and also Subsection 3.1 in \cite{EFLM}. 

Let us observe that, in particular, for all $x,y,z\in \g$, one has 
\begin{equation}\label{eq: Theta via phi}
\begin{split}
	\Theta(x) 		&= \phi(x) \\
	\Theta(xy) 	&= M_x(\phi(y)) = \phi(x)\phi(y) + \alpha_{x}(\phi(y)) \\
	\Theta(xyz) 	&= M_x( \phi(y)\phi(z) + \alpha_{y}(\phi(z)) )\\
				&= \phi(x)\phi(y)\phi(z) +  \alpha_x(\phi(y))\phi(z)   + \phi(y)\alpha_x(\phi(z)) 
					+ \phi(x)\alpha_{y}(\phi(z))  +\alpha_{x}( \alpha_{y}(\phi(z))) .
\end{split}
\end{equation}
To simplify notation, we recall that $\phi = \id_V$. Denoting $\alpha$ in terms of $\pl$, i.e., $\alpha_x=x \pl$, one obtains
\begin{align*}
	\Theta(x) 		&= x \\
	\Theta(xy) 	&= xy + x\pl y \\
	\Theta(xyz) 	&= xyz +  (x\pl y)z + y(x\pl z) + x( y\pl z) + x\pl(y\pl z). 
\end{align*}

From this, we can describe the Grossman--Larson product explicitly for small monomials:
\begin{align*}
	x\ast Y &= M_{\Theta^{-1}(x)}(Y) = M_x(Y) = xY + x\pl Y \\
	(xy)\ast Z &= M_{\Theta^{-1}(xy)}(Z) = M_{xy - x\pl y} (Z) \\
	&= xyZ + x(y\pl Z) + x\pl(yZ)+  x\pl(y\pl Z) -  (x\pl y)Z - (x\pl y)\pl Z \\
	&= xyZ + x(y\pl Z) + y(x\pl Z)+  x\pl(y\pl Z)  - (x\pl y)\pl Z.
\end{align*}

Note that, for all monomials $x_1\cdots x_n\in\ca{U}(\g)$ one has
\begin{equation}
\label{eq:starmon}
	\Theta(x_1\cdots x_n)=x_1\ast\cdots\ast x_n.
\end{equation}

In \cite[Proposition 6]{Al-K_EF_M_MK}, an interesting formula for Gavrilov's K-map has been given in terms of set partitions (see also \cite{EFMMK}). It is almost immediate to see that this result holds true for $\Theta$. It is as follows.  For each $n\geq1$, consider a partition $\pi=(B_1,\ldots,B_k)$ of $\{1,\ldots,n\}$ into $k$--blocks. The blocks are ordered according to their maximum, \ie $\max B_1< \max B_2< \cdots < \max B_k=n$.  For every block $B=\{b_1<b_2<\cdots <b_{\ell}\}$ of length $\ell$, we set 
\begin{equation}
\label{orderedblocks}
	x_B :=  x_{b_1}\pl (\cdots  (x_{b_{\ell-2}} \pl (x_{b_{\ell-1}}\pl x_{b_{\ell}}))\cdots ). 
\end{equation}
For each monomial $x_1\cdots x_n$, we let
\begin{equation}
\label{orderedblocksexpansion}
	(x_1 x_2\cdots x_n)^\pi := x_{B_1}x_{B_2} \cdots x_{B_k}
\end{equation}

\begin{proposition}
\label{prop:cumulants}
For each $n\geq 1$, one has 
\begin{equation}
\label{eq:setpartitions}
	\Theta(x_1\cdots x_n) 
	= \sum_{\pi} 	(x_1x_2\cdots x_n)^\pi,
\end{equation}
where the sum is taken over all set partitions $\pi$ of the strictly ordered set $\{ 1,\ldots,n\}$. 
\end{proposition}

\begin{proof}
First observe that $\Theta(xX)= \alpha_{x}(\Theta(X)) + x\Theta(X)$ for all $x\in \g$ and $X\in \ca{U}(\mathfrak g)$. 
From this the proof can be performed as in  \cite[Proposition 6]{Al-K_EF_M_MK},  by induction on the length of a monomial. 
\end{proof}


\subsection{D-bialgebras}
\label{ssec:Dalgebra}

The result of Guin--Oudom \cite{Guin-Oudom} establishes a correspondence between pre-Lie algebra structures on $V$ and associative products on the Hopf algebra $S(V)$ that make it into another Hopf algebra. In fact, this correspondence ``factorizes" through another type of structure; see  \cite[Proposition 2.7]{Guin-Oudom}. A similar result holds true in the context of post-Lie algebras; the resulting  structure is called a $D$--bialgebra. The notion of $D$--bialgebras is a slight enrichment of the initial notion of $D$--algebra that arises in the study of numerical integration methods on manifolds; see \cite{MKW}. The relation between post-Lie algebras and $D$--algebras was first established in \cite{Munthe-Kaas-Lundervold-PL}.  

\begin{definition}\label{def:dbial}
A $D$--bialgebra $D$ is a bialgebra $(D,\cdot,1,\Delta,\epsilon)$ endowed with an exhaustive increasing filtration
\begin{equation*}
	\Bbbk\cdot 1 =D^0\subset D^1 \subset \cdots \subset D^n \subset \cdots 
\end{equation*}
such that $D^i\cdot D^j \subset D^{i+j}$ and such that its set of primitive elements is given by $\ker(\epsilon)\cap D^1 = \text{Prim}(D)$ and generates $D$ as an algebra. 
Moreover, $D$ is  endowed with a map $\pl\co D\ot D \to D$ such that: 
\begin{enumD}
	\item\label{D bial item1} $1\triangleright X=X$ for all $X\in D$;
	\item\label{D bial item3} $\Delta(X\triangleright Y)=(X^{(1)}\triangleright Y^{(1)})\ot (X^{(2)}\triangleright Y^{(2)})$ for all $X,Y\in D$; 
	\item\label{D bial item4} $X\triangleright(Y\cdot Z)=(X^{(1)}\triangleright Y)\cdot(X^{(2)}\triangleright Z)$ for all $X,Y$ and $Z$ in $D$;  and,  
	\item\label{D bial item5} $(x\cdot X)\triangleright y=x\triangleright (X\triangleright y)-(x\triangleright X)\triangleright y$ for all $X\in D$ and $x$ and $y$ in $\text{Prim}(D)$. 
\end{enumD}
\end{definition} 

\begin{remark}
From \ref{D bial item1} and \ref{D bial item4} one obtains that 
\begin{equation}\label{eq: item 1 bis}
	X\pl 1= \epsilon(X)    \text{ for all } X\in D. 
\end{equation}
In a $D$-bialgebra, the primitive elements are closed under $\pl$, that is  
\begin{equation}\label{eq: item 3 bis}
	\pl \Big( \text{Prim}(D) \ot \text{Prim}(D) \Big) \subset \text{Prim}(D).
\end{equation} 
This is a consequence of \ref{D bial item3} and \eqref{eq: item 1 bis}. In fact, in a D-bialgebra, \ref{D bial item3} and \eqref{eq: item 3 bis} are equivalent properties, that is, one may define a D-bialgebra as before by substituting \ref{D bial item3} by \eqref{eq: item 3 bis}. 
To see this, one may use the fact that  $\text{Prim}(D)$ generates $D$ and the properties \ref{D bial item4} and \ref{D bial item5}.
	
Note also that $\text{Prim}(D)$ is closed under antisymmetrization  of the product of $D$. 
Moreover, since $\text{Prim}(D)$ generates $D$, the coproduct is in fact the shuffle coproduct. 
\end{remark}

In what follows, we recall these correspondences in the more general context of post-Lie algebras. 

As shown in the previous section, for all primitive $z \in \g$ and all monomials $X= x_1x_2\cdots x_n\in  \ca{U}(\g)$,  the Grossman--Larson product $X\ast z$ has a non-trivial term in $\g$, which is given in terms of iterations of the post-Lie product $\pl$. For instance, the term in $\g$ of the product $(xy)\ast z$ is $x\pl(y\pl z)  - (x\pl y)\pl z= a_{\pl}(x,y,z)$. If $V$ is a pre-Lie algebra, this term is symmetric in the first two variables. In fact, it turns out that there is a natural way to extend the pre-Lie product $\pl\co V\ot V \to V$ to an operation $\pl\co S^{n}(V) \ot V \to V$ for all $n\geq 1$ with no loss of information: these operations are given by the corollas described in \eqref{eq: braces}. A similar fact holds true for  post-Lie algebras, but considering the corollas \eqref{eq: braces PSB}. 
This operadic point of view turns out to be convenient to prove  the following result. 

\begin{proposition}\label{pro:68}
There is a one-to-one correspondence  between the structures of post-Lie algebra $(V,\triangleright,[-,-])$ and the structure of $D$-bialgebras on $\ca{U}(\h)$, where $\h=(V,[-,-])$.
\end{proposition}

\begin{proposition}\label{pro:69}
Given a $D$--bialgebra $(D,\cdot,\Delta,\pl)$, the  product
\begin{equation}
\label{eq:GrossmanLarson}
	A\ast B = A^{(1)}\cdot (A^{(2)}\pl B) \qquad \forall A,B\in D,
\end{equation}
is associative. Moreover, if $D=\ca{U}(\h)$ as above, then $\ast$ is the Grossman--Larson product. 	
\end{proposition}

Hereafter $\ca{U}_\ast(\h)$ will denote coalgebra $\ca{U}(\h)$ endowed with Grossman--Larson product \eqref{eq:GrossmanLarson}. 
In particular, we have that 
\begin{equation}\label{eq: theta2: morphism}
	\Theta:\ca{U}(\g)\rightarrow\ca{U}_\ast(\h)
\end{equation} 
defined in \eqref{eq:starmon} is an isomorphism of Hopf algebras, see Section 3 in \cite{EFLM} and also Subsection 5.1 in \cite{EFMpostLiealgebra-factorization}.

	
\section{Magnus expansions and integration of post-Lie algebras}
\label{sec:Magnus}
	
	\def\Cross{\operatorname{Cross}}
	\def\Der{\operatorname{Der}}
	\def\End{\operatorname{End}}
	\def\Aut{\operatorname{Aut}}
	\def\CM{\cat{CM}}
	\def\CMinv{\cat{CM_{inv}}}
	\def\CMid{\cat{CM_{id}}}
	\def\CMGp{\cat{CMGp}}
	\def\CMGpinv{\cat{CMGp_{inv}}}
	\def\CMGpid{\cat{CMGp_{id}}}
	\def\CMpl{\cat{CM_{pl}}}
	\def\PB{\cat{Pbialg}}
	\def\IsoPB{\cat{Pbialg_{inv}}}
	\def\IsoPBid{\cat{PBialg_{id}}}
	\def\U{\mathbb{U}}
	\def\BCH{\operatorname{BCH}}
	\def\locGp{\cat{CMGp^{loc}}}
	\def\Exp{\operatorname{Exp}}
	\def\ImU{\cat{Im}(\widehat{\ca{U}})} 
	\def\CMinvfin{\cat{CM^{fin}_{inv}}} 
	\def\PBcomp{\cat{\widehat{Pbialg}_{inv}}} 
	\def\GpSplit{\cat{SplitGp}}
	\def\Split{\cat{Split}}
	
In this final section, first we introduce the so-called post-Lie Magnus expansion and then use it to address the problem of integration in post-Lie algebra. Along the way, we will also comment on the relation of this remarkable formal series with the Grossman--Larson product. Our approach will be based on the notion of crossed morphisms as presented in the first part of this note.


\subsection{Preliminaries} 

First, let us set up the framework. 
For more detailed information about what follows we refer the reader to \cite{Bai-Guo-Sheng-Tang-post-groups} and the references therein. 
Since hereafter we will need to handle (infinite) series of elements of post-Lie algebras and of their universal enveloping algebras, it will be convenient to work with post-Lie algebras endowed with an additional structure described here below. 
Recall that descending filtration $\mathcal F_\bullet V$ on a vector space $V$ is a chain 
\[
	V\supset\mathcal F_1 V\supset \mathcal F_2 V\supset \mathcal F_3 V\supset\cdots\supset \mathcal F_n V\supset\cdots
\] 
of (nested) vector subspaces of $V$ such that, for all $i$, $\mathcal F_{i+1} V$ is vector subspace of $\mathcal F_i V$. A vector space endowed with such a filtration will be called filtered. It is worth mentioning that filtered vector spaces form a (linear symmetric monoidal) category, whose morphisms are filtration preserving linear maps and whose tensor product is defined by a Cauchy-like formula, i.e.~for all $n$, $\mathcal F_n(V\otimes W)=\oplus_{i+j=n}\mathcal F_i V\otimes \mathcal F_j W$. Given a filtered vector space one defines its \emph{completion} as the inverse limit
\[
	\hat V:=\mathop{\lim_{{\longleftarrow}}}_{i}V/\mathcal F_i V
	=\Big\{(\overline{v}_1,\overline{v}_2,\dots)\in\prod_{i}V/\mathcal F_i V\, \vert\;v_j\equiv v_i\text{mod}\;(\mathcal F_i V),\;j>i\Big\},
\]
which is a vector space endowed with a (natural) filtration defined by 
\[
	\mathcal F_i\hat V=\big\{(\overline{v}_1,\overline{v}_2,\dots)\in\hat V\vert\; \overline{v}_j=0,\;j\leq i\big\}.
\]
Note that there is a \emph{canonical} linear map $\gamma:V \rightarrow \hat V$, defined by $\gamma(v)=(\overline{v},\overline{v},\dots)$. We say that a vector space $V$ is complete with respect to a descending filtration $\mathcal F_\bullet V$, if the map $\gamma$ is an isomorphism of  (filtered) vector spaces. Note that also completed vector spaces form a linear symmetric monoidal category, whose \emph{completed} tensor product is defined as 
\[
	\hat V\hat\otimes\hat U=\mathop{\lim_{{\longleftarrow}}}_{i,j}V/\mathcal F_i V\otimes U/\mathcal F_j U.
\]

\begin{remark}
In the previous comments one can trade the category of finite dimensional vector spaces with the one of rings, modules over rings, groups, and other structures. For example, in what follows we will be interested in the category of completed (finite) dimensional post-Lie algebras. In all of the possible cases, the construction sketched above has the following rationale. The existence of the filtration permits to define a topology on the underlying set so that the notions of convergence and of Cauchy sequence make sense. The operation of completion with respect to a filtration defines another object, belonging to the same category as the original one and having, among others, the following two properties: 1) there is a canonical map from the original object to the completed one which, under some further (mild) assumptions, identifies the former with a sub-object of the latter, 2) In the completed object all Cauchy sequences are convergent. 
\end{remark}

We can now introduce the following 

\begin{definition}
A complete post-Lie algebra is a post-Lie algebra $(\h,\pl)$ whose underlying vector space $V$ is endowed with a descending filtration $\mathcal F_\bullet V$ such that:
\begin{enumerate}
\item For all $i,j$, $[\mathcal F_i V,\mathcal F_j V]\subset\mathcal F_{i+j} V$ and $\mathcal F_i V\pl\mathcal F_j V\subset\mathcal F_{i+j} V$;
\item $V$ is complete with respect to this filtration.
\end{enumerate}
If, furthermore, $\mathcal F_1 V=V$ one says that the filtered post-Lie algebra is connected.
\end{definition}

Now let $(\h,\pl)$ be a post-Lie algebra and define recursively the following vector subspaces of its underlying vector space $V$
\begin{equation}
\label{eq:filt}
	\mathcal F_1:=V,\; 
	\mathcal F_2:=[V,V]+V\pl V\quad\text{and} \quad 
	\mathcal F_n=\oplus_{j=1}^{n-1}[\mathcal F_{j} V,\mathcal F_{n-j} V]+\oplus_{j=1}^{n-1}\mathcal F_{j} V\pl\mathcal F_{n-j} V,
	\; \forall n\geq 3.
\end{equation}
Then the completion of $(\h,\pl)$ with respect to the previous filtration is a connected, complete post-Lie algebra which, by abusing notation, will be denoted by $(\hat h,\pl)$. The filtration \eqref{eq:filt} was tailored to guarantee that some relevant Lie series are defined by Cauchy sequences and, for this reason, are convergent in $\hat h$. The first of these is the Baker--Campbell--Hausdorff series, BCH-series from now on, which is defined for all $x,y\in V$, by

\begin{equation}
	x\bullet_{\h}y := x+y + \frac{1}{2}[x,y]_{\h} + \frac{1}{12} \big(  [x,[x,y]_{\h}]_{\h} 
											+[y,[y,x]_{\h}]_{\h} \big) + \cdots. \label{eq:BCHG}
\end{equation}
Classically, when convergence issues are relevant, this series defines the structure of a local Lie group on a suitable neighbourhood of the neutral element of $V$. On the other hand, in our present context, the previous formula defines the structure of \emph{formal group} on $V$, denoted hereafter by $\mathcal H$, whose identity is the neutral element of $V$ and such that the inverse of $x\in \mathcal H$ is $-x$. Note that the underlying set of this group is $V$, the same as the one of the Lie algebra $\h$. Furthermore, using conventions and notations adopted in Subsection \ref{subsec:postLie}, together with the post-Lie algebra $(\h,\pl)$ one considers the Lie algebra $\g=(V,\llbracket-,-\rrbracket)$, which, since $\llbracket x,y\rrbracket=x\pl y-y\pl x+[x,y]$, can be completed to $\hat\g$ using the filtration \eqref{eq:filt}. Together with the two completed Lie algebra $\hat\h$ and $\hat\g$, one may consider their corresponding completed universal enveloping algebras, hereafter denoted by $\hat{\mathcal U}(\h)$ respectively $\hat{\mathcal U}(\g)$. For an exhaustive and in-depth analysis of the relations between a Lie algebra, its completion with respect to a filtration and their corresponding universal enveloping algebras we refer the reader to Chapter 7 and Chapter 8 of the monograph \cite{Fresse}. Here we simply remark that $\hat{\mathcal U}(\h)$ is a complete Hopf algebra, whose coproduct $\Delta$ is defined in terms of the completed tensor product $\hat\otimes$, see above in this section. Moreover, an element of $\hat{\mathcal U}(\h)$ is primitive if and only if it belongs to $\hat\h$, i.e.~$\xi\in\operatorname{Prim}(\hat{\ca{U}}(\h))$ if and only if $\Delta\xi=\xi\hat\otimes 1+1\hat\otimes\xi$.

\begin{remark}
Recall that together with $\mathcal U(\g)$ and $\mathcal U(\h)$ one can introduce $\mathcal U_\ast(\h)=(\mathcal U(\h),\ast)$ defined in terms of the Grossman--Larson product, see \eqref{eq:GrossmanLarson} in Section \ref{sec:unipo}, whose corresponding completion will be denoted by $\hat{\mathcal U}_\ast(\h)$. 
\end{remark}

For these completed universal enveloping algebras one can define the so-called \emph{exponential} and \emph{logarithm} maps. Considering, for example, $\hat{\mathcal U}(\h)$, we let 
\begin{equation}
\label{eq:primel}
	\exp_{\h} \co \operatorname{Prim}(\hat{\ca{U}}(\h)) \to \operatorname{Gp}(\hat{\ca{U}}(\h))
\end{equation}
be the \emph{exponential map}, where $\operatorname{Gp}(\hat{\ca{U}}(\h))$ is the set of group-like elements of $\hat{\ca{U}}(\h)$, that is
$\eta\in \hat{\ca{U}}(\h)$ is group-like if and only if $\Delta(\eta)=\eta\hat\otimes\eta$. Furthermore, recall that the exponential map \eqref{eq:primel} is bijection defined by $\exp_{\h}(x)=\sum_{j=0}^\infty\frac{x^n}{n!}$, where $x^n=x\cdots x\in\ca{U}(\h)$, whose inverse, denoted by $\log_{\h}$,  is called the \emph{logarithm map}. Similarly, one can consider $\exp_{\g}$ and $\log_{\g}$ as well as 
\begin{equation*}
 	\exp_{\ast} \co \operatorname{Prim}(\hat{\ca{U}}_\ast(\h)) \to \operatorname{Gp}(\hat{\ca{U}}_\ast(\h))
\end{equation*}
together with its inverse  $\log_{\ast}$.


\subsection{The post-Lie Magnus expansion and integration of post-Lie algebras} 

Note that, since $\Theta$ is an isomorphism of Hopf algebras, see \eqref{eq: theta2: morphism}, it restricts to an isomorphism $\Theta:\operatorname{Gp}(\hat{\ca{U}}(\g))\rightarrow\operatorname{Gp}(\hat{\ca{U}}_\ast(\h))$, such that for all $x\in\g$  
\begin{equation}
\label{eq: theta and phi}
 	\Theta (\exp_{\g}(x)) = \exp_{\ast}(\phi(x)).
\end{equation}
Recall that $\phi \co \g \to \h$, see \eqref{eq:strange} and also Remark \ref{rem:strange}.
Before moving on, we would like to make the following observations.
  
\begin{remark}
\label{rem:iso}
\begin{enumerate}
\item If the post-Lie algebra $(\mathfrak h,\pl,[-,-]_{\mathfrak h})$ has trivial Lie bracket, $[-,-]_{\mathfrak h}=0$, that is if it is a pre-Lie algebra, then 
\begin{equation}
\label{eq:primelpreLie}
	\exp_{\h} \co \operatorname{Prim}(\hat{\ca{S}}(\h)) \to \operatorname{Gp}(\hat{\ca{S}}(\h)).
\end{equation}

\item More importantly for what follows, in particular with respect to Formula \eqref{eq: varPhi def} below, we note that the exponential map $\exp_\h$ is a bijection between  $\operatorname{Prim}(\hat{\ca{U}}(\h))$ and $\operatorname{Gp}(\hat{\ca{U}}(\h))$. The group law \eqref{eq:BCHG} is defined via the BCH formula
\[
	\exp_\h x\, \exp_\h y=\exp_\h(x\bullet_\h y).
\] 
From this observation it follows that the formal group $\mathcal H$ is isomorphic, via $\exp_h$, to $\operatorname{Gp}(\hat{\ca{U}}(\h))$. The same remark applies to the Lie algebra $\g$ and the formal group $\mathcal G$.
\end{enumerate}
\end{remark}

\medskip

Given formal groups, $\mathcal G$ and $\mathcal H$, defined as above, we will argue that $\mathcal G$ acts (non-trivially) on $\mathcal H$ via automorphisms and then we will show the existence of a crossed morphism relative to this action. To this end, first observe that the action  $\alpha\co \g \to \Der(\h)$, $\alpha_x:=x \pl$, naturally induces an action at the level of groups, $\varUpsilon \co \ca{G} \to \Aut(\ca{H})$, given by 
\begin{equation}
\label{eq: varUpsil def}
	\varUpsilon= \Exp(\alpha).
\end{equation}
Explicitly, 
\begin{equation}
\label{eq: varUpsil}
	\varUpsilon_x(y) = \Exp(\alpha_x)(y) 
		= y + \sum_{n\geq 1} \frac{1}{n!}\alpha_{x}^n(y).
\end{equation}

\begin{lemma}
\label{lem: varUps crossed morph}
The map $\varUpsilon$ is a morphism of formal groups. 
\end{lemma}

\begin{proof}
Note that if $d \in \Der(\g)$, then $\Exp(d) \in \Aut(\g) \subset \Aut(\ca{G})$. Furthermore, if $\g$ and $\h$ are two (finite dimensional) Lie algebras and $\alpha \co \g \to \Der(\h)$ a morphism of Lie algebras, then one has for all $x,y \in \mathfrak g$
\begin{equation}\label{eq:idUP}
	\operatorname{Exp}(\alpha_x)\operatorname{Exp}(\alpha_y)
	=\operatorname{Exp}( \alpha_x \bullet_{\operatorname{End}(\h)}\alpha_y)
	=\operatorname{Exp}(\alpha_{ x\bullet_{\g} y }),
\end{equation}
which gives the result. Observe that, in the previous formula, the second equality holds true since $\alpha_{\cdot}:\g\rightarrow\text{End}(\h)$ is a morphism of Lie algebras, see Lemma \ref{lem:pl}, where the Lie algebra structure on $\text{End}(\h)$ is the one obtained skew-symmetrizing its (standard) associative product.
\end{proof}

The  isomorphism $\Theta:\ca{U}(\g)\rightarrow\ca{U}_\ast(\h)$ defined in \eqref{eq:starmon} induces a map $\varPhi\co \ca{G} \to \ca{H}$  via 	
\begin{equation}
\label{eq: varPhi def}
	\varPhi = \log_{\h} \circ\ \Theta \circ \exp_{\g},
\end{equation}
see Remark \ref{rem:iso}. Recalling identity \eqref{eq:setpartitions}, we see immediately that \eqref{eq: theta and phi} can be refined.

\begin{proposition}
\label{prop:cumulants}
If the post-Lie algebra $(\mathfrak h,\pl)$ has trivial Lie bracket, that is if it is a pre-Lie algebra, then
\begin{equation}
\label{eq: preLieMagnusgeneral}
	\Theta(\exp_{\mathfrak g} x) 
	= \exp_*(\phi(x)) 
	= \exp_{\mathfrak h} \Big(\sum_{n \ge 1}  \frac{\phi_n(x)}{n!}\Big),
\end{equation}
where the n-fold pre-Lie iteration is defined to be 
$$
	\phi_n(x) :=  \phi(x) \pl (\phi(x) \pl (\cdots  (\phi(x) \pl \phi(x)))\cdots ).
$$ 
This implies that 
\begin{equation}
\label{eq:inversepreLieMag}
	\varPhi(x) = \sum_{n \ge 1}  \frac{\phi_n(x)}{n!} = \frac{\varUpsilon_x - 1}{\mathrm{Ln}(\varUpsilon_x)} (x).
\end{equation}
\end{proposition}

\begin{proof} 
The first equality in \eqref{eq: preLieMagnusgeneral} follows from $\Theta:\operatorname{Gp}(\hat{\ca{U}}(\g))\rightarrow\operatorname{Gp}(\hat{\ca{S}}_\ast(\h))$ being an isomorphism. The second equality follows from identity \eqref{eq:setpartitions} 
$$
	\Theta(x^n) 
	= \phi(x) * \cdots * \phi(x)
	= \sum_{\pi} (\phi(x)^n)^\pi 
$$
and the fact that the ordering in \eqref{orderedblocksexpansion} is irrelevant in the commutative algebra $\hat{\ca{S}}(\h)$, which implies the well-known relation between exponential generating series
$$
	\exp_*(\phi(x)) = \exp_{\mathfrak h} \Big(\sum_{n \ge 1}  \frac{\phi_n(x)}{n!}\Big).
$$
The first equality in \eqref{eq:inversepreLieMag} follows from
$$
	\exp_{\mathfrak h} \varPhi(x) = \exp_*(\phi(x)), 
$$
whilst the second equality follows from \eqref{eq: varUpsil} and 
\begin{equation}
\label{eq:inversepreLieMagnew}
	 \sum_{n \ge 1}  \frac{\phi_n(x)}{n!} 
	 = \phi(x) + \frac{1}{2!} \phi(x) \pl \phi(x) 
	 	+ \frac{1}{3!} \phi(x) \pl (\phi(x) \pl \phi(x)) + \cdots
	 = \frac{\Exp(\alpha_x) - 1}{\alpha_x}(x).	 
\end{equation}
\end{proof}

Since $\varPhi$ is a composition of invertible maps, it is itself invertible.  From Proposition \ref{prop:cumulants} we deduce

\begin{corollary}
\label{cor:inversecumulants}
If the post-Lie algebra $(\mathfrak h,\pl, [-,-]_{\mathfrak h})$ has trivial Lie bracket, that is if it is a pre-Lie algebra, then the pre-Lie Magnus expansion is
\begin{equation}
\label{eq:preLieMagabstract}
	\varPhi^{-1} (x) = \frac{\mathrm{Ln}( \varUpsilon_{\varPhi^{-1}(x)} )}{\varUpsilon_{\varPhi^{-1}(x)} - 1}(x).
\end{equation}
\end{corollary}	

This is a direct consequence of the last equality of \eqref{eq:inversepreLieMagnew}, which implies that 
$$
	\varPhi^{-1} (x) 
	= \frac{\alpha_{\varPhi^{-1}(x)}}{\Exp(\alpha_{\varPhi^{-1}(x)}) - 1}(x)	
	= \sum_{n \ge 0} \frac{B_n}{n!} \alpha^n_{\varPhi^{-1} (x)}(x).
$$
\begin{remark}\label{rem:classvsmod}
Note that by definition $\alpha_{\varPhi^{-1}(x)} = L_{{\varPhi^{-1}(x)} \pl}$, which permits to connect Corollary \ref{cor:inversecumulants} back to Formula \eqref{eq:preLieMag1a}, where we introduce the concept of pre-Lie Magnus expansion in the classical context of Agrachev and Gamkrelidze.
\end{remark}

Returning to identity \eqref{eq: varPhi def}, we have seen that its inverse reduces to the pre-Lie Magnus expansion on a post-Lie algebra with trivial Lie bracket. It is therefore natural to identify the inverse of \eqref{eq: varPhi def} as the post-Lie Magnus expansion in the following.

\begin{definition}
The \emph{post-Lie Magnus expansion}, $\chi \co  \ca{H} \rightarrow \ca{G}$, defined via the post-Lie algebra $(\h,\pl,[-,-]_{\h})$ is the inverse of the map $\varPhi= \log_{\h} \circ\ \Theta \circ \exp_{\g}$. Explicitly, 
\begin{equation}
\label{eq:postLieMag}
	\chi :=\varPhi^{-1}= \log_{\g}\circ\ \Theta^{-1} \circ \exp_{\h}. 
\end{equation}
\end{definition}

We note that computing the terms in the post-Lie Magnus expansion and its inverse \eqref{eq: varPhi def} is far more involved than their pre-Lie cousins. We will return to this problem briefly in Subsection \ref{ssec:comput}.

The following result characterizes the algebraic nature of  the post-Lie Magnus expansion and will be crucial for what follows.

\begin{lemma}\label{lem:intpostLie}
For all $a,b\in \h$ one has 
\begin{equation}
\label{eq:bch1}
	\chi \big(a \bullet_{\h} \varUpsilon_{\chi(a)}(b)\big)
	=\chi(a) \bullet_{\g} \chi(b).  
\end{equation}
\end{lemma}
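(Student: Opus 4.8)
The plan is to recognise that the asserted identity is nothing but the Rota--Baxter relation \eqref{relRBop} satisfied by the inverse of a crossed morphism, and then to prove that $\varPhi=\chi^{-1}$ is a crossed morphism of formal groups $\ca{G}\to\ca{H}$ relative to the action $\varUpsilon$. Concretely, putting $x=\chi(a)$ and $y=\chi(b)$, so that $\varPhi(x)=a$ and $\varPhi(y)=b$, the identity \eqref{eq:bch1} becomes, after applying $\chi$ to both sides, the crossed-morphism relation
\begin{equation*}
	\varPhi(x\bullet_\g y)=\varPhi(x)\bullet_\h\varUpsilon_x(\varPhi(y)),\qquad \forall x,y\in V,
\end{equation*}
exactly as in Remark \ref{rem:RRB}. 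So it suffices to establish this last identity. The first move is to transport it into $\hat{\ca U}(\h)$: since $\exp_\h$ is a bijection from $\operatorname{Prim}(\hat{\ca U}(\h))=\hat\h$ onto $\operatorname{Gp}(\hat{\ca U}(\h))$ intertwining $\bullet_\h$ with the concatenation product $\cdot$, and since $\exp_\h\circ\varPhi=\Theta\circ\exp_\g$ by \eqref{eq: varPhi def} and \eqref{eq: theta and phi}, I would apply $\exp_\h$ to both sides, using that $\exp_\g$ intertwines $\bullet_\g$ with $\cdot$ and that $\Theta$ intertwines $\cdot$ with the Grossman--Larson product $\ast$. This turns the relation into the equality of group-like elements
\begin{equation*}
	\Theta(\exp_\g x)\ast\Theta(\exp_\g y)=\Theta(\exp_\g x)\cdot\exp_\h\big(\varUpsilon_x(\varPhi y)\big).
\end{equation*}

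Next I would unwind the left-hand side with the Grossman--Larson formula \eqref{eq:GrossmanLarson}. Writing $A:=\Theta(\exp_\g x)$, which is group-like, \eqref{eq:GrossmanLarson} gives $A\ast B=A\cdot(A\pl B)$; after cancelling the invertible factor $A$ (a group-like element), and taking $B=\Theta(\exp_\g y)=\exp_\h(\varPhi y)$, the claim reduces to the operator identity
\begin{equation}
	A\pl(-)=\varUpsilon_x\quad\text{on }\hat\h,\qquad A=\Theta(\exp_\g x).\label{eq:crux}
\end{equation}
Here I would use that for group-like $A$ the map $A\pl(-)$ is a Hopf-algebra automorphism of $\hat{\ca U}(\h)$ preserving $\hat\h$: this follows from axioms \ref{D bial item3} and \ref{D bial item4} together with $A\pl 1=\epsilon(A)=1$, see \eqref{eq: item 1 bis}, and it guarantees that $A\pl(-)$ commutes with $\exp_\h$, so that the displayed equality of group-like elements is genuinely equivalent to \eqref{eq:crux} with $B=\exp_\h(\varPhi y)$ arbitrary group-like.

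To prove \eqref{eq:crux} I would introduce $\Psi(x):=\big(\Theta(\exp_\g x)\pl(-)\big)\big|_{\hat\h}$ and show it is a morphism of formal groups $\ca{G}\to\Aut(\hat\h)$ sharing the differential of $\varUpsilon$ at the identity. The homomorphism property rests on the relation $(A\ast A')\pl B=A\pl(A'\pl B)$ for group-like $A,A'$, which I would derive from the associativity of $\ast$ (Proposition \ref{pro:69}) by expanding both $(A\ast A')\ast B$ and $A\ast(A'\ast B)$ through \eqref{eq:GrossmanLarson} and \ref{D bial item4}, and cancelling the common invertible prefactor $A\cdot(A\pl A')=A\ast A'$. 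Combined with $\Theta(\exp_\g x)\ast\Theta(\exp_\g y)=\Theta(\exp_\g(x\bullet_\g y))$, this makes $\Psi$ a formal-group morphism. Its differential is $d\Psi(x)(b)=x\pl b=\alpha_x(b)$, since $\Theta(\exp_\g(tx))=1+tx+O(t^2)$ (as $\Theta$ restricts to $\phi=\id_V$ on $\g$) and $1\pl b=b$ by \ref{D bial item1}. Evaluating $\Psi$ on the one-parameter subgroup $t\mapsto tx$ of $\ca{G}$ then yields a one-parameter subgroup of $\Aut(\hat\h)$ with infinitesimal generator $\alpha_x$, whence $\Psi(x)=\Exp(\alpha_x)=\varUpsilon_x$, which is exactly \eqref{eq:crux}; that $\varUpsilon$ is itself such an exponential homomorphism is Lemma \ref{lem: varUps crossed morph}. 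The main obstacle is precisely \eqref{eq:crux}, namely identifying the $\pl$-action of the group-like element $\Theta(\exp_\g x)$ with the exponentiated infinitesimal action $\Exp(\alpha_x)$; everything else is a bookkeeping translation between the four products $\bullet_\g$, $\bullet_\h$, $\cdot$ and $\ast$ through $\exp$ and $\Theta$.
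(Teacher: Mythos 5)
Your proposal is correct, and its skeleton matches the paper's: both arguments transport \eqref{eq:bch1} through $\exp$ and $\Theta$ into the single group-like identity $\exp_\h(a)\ast\exp_\h(b)=\exp_\h(a)\cdot\exp_\h\big(\varUpsilon_{\chi(a)}(b)\big)$ in $\hat{\ca{U}}(\h)$, using that $\Theta$ intertwines the product of $\ca{U}(\g)$ with the Grossman--Larson product, see \eqref{eq: theta and phi}. The genuine difference lies in what is proved versus what is cited. The paper's proof is explicitly a sketch resting on two external results: the identity $\exp_\h(a)\pl b=\varUpsilon_{\chi(a)}(b)$, cited from \cite{MQS}, which is exactly your crux identity; and the Fl{\o}ystad--Munthe-Kaas formula $a\bullet_\h(\exp_\h(a)\pl b)=\log_\h(\exp_\h(a)\ast\exp_\h(b))$, cited from \cite{F-MK} (and re-derived later in the paper as Lemma \ref{lem:glth}). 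You instead prove both ingredients from the $D$-bialgebra axioms: the latter via the group-like case of \eqref{eq:GrossmanLarson} together with the observation that $A\pl(-)$ is a filtered bialgebra endomorphism, hence commutes with $\exp_\h$; the former via the relation $(A\ast A')\pl B=A\pl(A'\pl B)$, correctly extracted from associativity of $\ast$ by cancelling the group-like prefactor, which makes $x\mapsto\Theta(\exp_\g x)\pl(-)$ a formal-group morphism with differential $\alpha$ and therefore equal to $\Exp(\alpha_x)=\varUpsilon_x$. This buys a self-contained argument where the paper delegates to the literature; your uniqueness-of-one-parameter-subgroups step is legitimate in the complete filtered setting, where it amounts to the coefficient identities $\Theta(x^n)\pl b=\alpha_x^n(b)$ (these also follow by induction from \ref{D bial item5} and the recursion $\Theta(xX)=\alpha_x(\Theta(X))+x\Theta(X)$ used in Proposition \ref{prop:cumulants}). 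Two cosmetic corrections: passing from \eqref{eq:bch1} to the crossed-morphism form is done by applying $\varPhi=\chi^{-1}$, not $\chi$; and the cancellation of the prefactor $A$ should invoke that group-like elements of $\hat{\ca{U}}(\h)$ are invertible because $A-1$ lies in the first step of the filtration.
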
 

\begin{proof}[Sketch of proof]
On the one hand, for all $a,b\in \h$, one has 
\begin{equation}
\label{eq: side 1}
	a  \bullet_{\h}\varUpsilon_{\chi(a)}(b)
 	= a  \bullet_{\h} (\exp_\h(a) \triangleright b) 
	=  \log_\h(\exp_\h(a)\ast\exp_\h(b)). 
\end{equation}
The first equality of \eqref{eq: side 1} is a direct consequence of the fact that 
\begin{equation*}
	\exp_\h(a) \triangleright b=\varUpsilon_{\chi(a)}(b),
\end{equation*} 
for all $a,b\in\h$, which was shown in \cite{MQS}, inspired by a similar result in the pre-Lie case; see \cite{Manchon2009}. The second equality  of \eqref{eq: side 1} was established by Fl{\o}ystad and Munthe-Kaas in \cite{F-MK} by introducing the \emph{composition product} 
$\sharp \co \h\times \h\rightarrow \h$ 
defined for all $a,b\in \h$ by 
\begin{equation*}
	a \sharp b=\log_\h(\exp_\h(a)\ast\exp_\h(b)). 
\end{equation*}
Their approach was motivated by an algebro-geometric perspective on post-Lie algebra and associated constructions in the enveloping algebras in terms of formal fields and flows. 
		
On the other hand, for all $a,b\in \h$, one has 
\begin{equation}\label{eq: side 2}
	\log_\h(\exp_\h(a)\ast\exp_\h(b)) 
	= \log_\h \Theta \big( \exp_\g(\chi(a)) \exp_\g(\chi(b)) \big)
	= \log_\h \exp_\ast \phi  \big( \chi(a) \bullet_\g \chi(b) \big).
\end{equation}
The first equality of \eqref{eq: side 2} follows from the definition of $\chi$ and the fact that $\Theta$ is a morphism of algebras. 
The second equality follows from the definition of $\bullet_{\g}$ and \eqref{eq: theta and phi}.  
		
The result follows by combining \eqref{eq: side 1} and \eqref{eq: side 2}, using the log version of \eqref{eq: theta and phi}. 
\end{proof}

Note that \eqref{eq:bch1} implies that 
\[
	\chi (a \bullet_{\h} b )
	=\chi(a) \bullet_{\g} \chi\big((\varUpsilon^{-1}_{\chi(a)}(b)\big)
	=\chi(a) \bullet_{\g} \chi\big(\varUpsilon_{-\chi(a)}(b)\big). 
\]
	
We can now state the following important
	
\begin{proposition}\label{lem: varPhi crossed morph}
The map $\varPhi$ is a crossed morphism of formal groups relative to the action $\varUpsilon$ defined in \eqref{eq: varUpsil def}. 	
\end{proposition}

\begin{proof}
It is a direct consequence of the above Lemma \ref{lem:intpostLie}. Indeed,  Formula \eqref{eq:bch1} above can be written as 
\begin{equation*}\label{eq:bch2}
	\varPhi(x \bullet_{\g}y)= \varPhi(x) \bullet_{\h} \varUpsilon_{x}(\varPhi(y)). 
\end{equation*}
\end{proof}

In summary, one has

\begin{corollary}
\label{cor:quadrupel}
To every post-Lie algebra $(\h,\pl)$ it is associated the quadruple $(\mathcal G,\mathcal H,\varUpsilon,\varPhi)$, where $\mathcal H,\mathcal G$ are the formal groups defined above, and $\varPhi:\mathcal G\rightarrow\mathcal H$ is an invertible crossed morphism relative to the action $\varUpsilon:\mathcal G\rightarrow\text{Aut}(\mathcal H)$ defined in \eqref{eq: varPhi def} respectively in \eqref{eq: varUpsil def}. \end{corollary}

\begin{remark}
\label{rmk:relRB}
Formula \eqref{eq:bch1} identifies the inverse of $\varPhi$, i.e., the post-Lie Magnus expansion as a relative Rota--Baxter operator from $\mathcal H$ to $\mathcal G$, in the sense of \cite[Def.~3.1]{JSZ2021}.  
\end{remark}
 
Let $V$  be the underlying vector space of the Lie algebra $\h$, see the beginning of this subsection, and let $\star: V \times V\rightarrow V$ be defined by 
\begin{equation}
	a\star b := a\bullet_\h\varUpsilon_{\chi(a)}(b). \label{eq:starProH}
\end{equation}

\noindent One has the following

\begin{proposition}\label{pro:79}
The binary operation $\star$ defined in \eqref{eq:starProH} employs a structure of formal group $\overline{\mathcal H}$ on $V$ such that $\mathrm{id}:\overline{\mathcal H}\rightarrow\mathcal H$ is a crossed morphism relative to the $\overline{\mathcal H}$-action on $\mathcal H$ defined by $\varUpsilon\circ\chi$.
\end{proposition}

\begin{proof}
It follows directly from Proposition \ref{prop:invcross} applied to the present framework.
\end{proof}

From the isomorphism of categories $\cat{SE(gp)}\cong \cat{PostLie(gp)}$  (see Proposition \ref{prop: iso SE PL groups}) the quadruple  $(\overline{\mathcal H},\mathcal H,\varUpsilon\circ\chi,\mathrm{id})$ corresponds to the post-Lie group $(\mathcal H,\bullet_\h,\blacktriangleright)$, where  for all  $a,b\in \mathcal H$
\begin{equation*}
	a \blacktriangleright b:=\varUpsilon_{\chi(a)}(b).
\end{equation*}

Finally, we can naturally think of the post-Lie group $(\mathcal H,\bullet_\h,\blacktriangleright)$ as a formal integration of the post-Lie algebra $(\h,\pl)$. 

\begin{theorem}\label{thm:81}
Let $(\mathcal H,\bullet_\h,\blacktriangleright)$ be a formal integration of $(\h,\pl)$. By applying to $(\mathcal H,\bullet_\h,\blacktriangleright)$ the differentiation procedure of Subsection \ref{ss:postLie} (in particular, see Proposition \ref{prop:postgroups}) one recovers $(\h,\pl)$. 
\end{theorem}

\begin{proof}
This follows at once from the discussion in Subsection \ref{ss:postLie}.
\end{proof}

A couple of comments are in order.
\begin{remark}
\begin{enumerate}
\item First we would like to comment on the presence of $\g$ in the quadruple mentioned in Corollary \ref{cor:quadrupel}. To this end, it is worth noticing that 
\begin{align}
\label{eq:starprod}
\begin{aligned}
	a \star b
	&= a \bullet_\h\varUpsilon_{\chi(a)}(b)\\
	&=a+\mathrm{Exp}(\alpha_{\chi(a)})(b)+\frac{1}{2}[a,\mathrm{Exp}(\alpha_{\chi(a)})(b)]_\h+\cdots\\
	&= a + b + a \triangleright b + \frac{1}{2}[a,b]_\h+\cdots,
\end{aligned}
\end{align}
where the last equality comes from the expansion $\chi(a)=a+O(a\! \pl a)$, see Formula \eqref{eq:postLieMagorder3} below. This entails that the Lie algebra of the formal group $\overline{\mathcal H}$ is the one with its Lie bracket given by the anti-symmetrization of the linear form $B_{\overline{\mathcal H}}:V\times V\rightarrow V$ defined by $B_{\overline{\mathcal H}}(a,b)=a\! \pl b+\frac{1}{2}[a,b]_\h$. In other words, one has that $\g=\overline\h$. The latter equality should not surprise. In fact, the map $\Theta:\mathcal U(\g)\rightarrow\mathcal U(\h)$ defined in Subsection \ref{ss:GL} is an isomorphism of Hopf algebras between $\mathcal U(\g)$ and $\mathcal U_\ast(\h)$ which restricts to the identity on primitive elements.

\item The quadruple $(\g,\h,\alpha,\text{id})$ can be considered an avatar of the post-Lie algebra $(\h,\pl)$ and for this reason $(\overline{\mathcal H},\mathcal H,\varUpsilon\circ\chi,\text{id})$ can be thought of a global object integrating it.

\item The post-Lie Magnus expansion, $\chi\co \overline{\mathcal{H}} \to \mathcal{G}$,  is a morphism of groups. (See Lemma \ref{lem:intpostLie})
$$
	\chi(a\star b) = \chi(a) \bullet_{\g} \chi(b).  
$$
This group morphism property is consistent with the fact that $\chi$ is a relative Rota--Baxter operator from $\mathcal H$ to $\mathcal G$, see Remark \ref{rmk:relRB}. It therefore defines a group morphism from the Grossman--Larson goup \cite{Al-K_EF_M_MK23} to $\mathcal G$. 
\end{enumerate}
\end{remark}


\subsection{The Grossman--Larson product via the post-Lie Magnus expansion}

Before moving to our final topic we would like to add one more comment regarding the role played by the post-Lie Magnus expansion in the theory of post-Lie algebra.  As was discussed at length in Subsection \ref{ss:GL}, the post-Lie product $\pl$ admits an extension to a product on $\mathcal U(\h)$ which yields an associative product $\ast:\mathcal U(\h)\times\mathcal U(\h)\rightarrow\mathcal U(\h)$, see Formula \eqref{eq:GrossmanLarson}. The latter is known the Grossman--Larson product and endows $\mathcal U(\h)$ with a ``new" Hopf algebra structure which turns out to be isomorphic to the one defined (canonically) on $\mathcal U(\g)$. We will now show that the Grossman--Larson product, when restricted to group-like elements of $\hat{\mathcal U}(\h)$, is defined by the post-Lie Magnus expansion. To this end let us first prove that

\begin{lemma}\label{lem:glth}
The set $\text{Gp}(\hat{\mathcal U}(\h))$ is closed under the product \eqref{eq:GrossmanLarson}.
\end{lemma}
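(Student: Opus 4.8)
The plan is to reduce the statement to the elementary fact that the group-like elements of any bialgebra form a monoid under multiplication, exploiting the crucial bookkeeping observation that the Grossman--Larson product $\ast$ deforms only the \emph{algebra} structure of $\hat{\ca{U}}(\h)$ and leaves its coproduct $\Delta$ untouched.

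First I would recall, from Subsection \ref{ss:GL} and Proposition \ref{pro:69}, that $\hat{\ca{U}}_\ast(\h)=(\hat{\ca{U}}(\h),\ast)$ is a complete Hopf algebra whose underlying coalgebra is exactly that of $\hat{\ca{U}}(\h)$; only the product is changed from the concatenation $\cdot$ to $\ast$. Since the condition of being group-like, namely $\Delta(\eta)=\eta\hat\otimes\eta$, is purely coalgebraic, the two sets $\operatorname{Gp}(\hat{\ca{U}}(\h))$ and $\operatorname{Gp}(\hat{\ca{U}}_\ast(\h))$ literally coincide. This identification is the only place where one must be attentive, and it is precisely what makes the lemma immediate.

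The key step is then the standard computation in the bialgebra $\hat{\ca{U}}_\ast(\h)$, where $\Delta$ is by definition an algebra morphism for $\ast$ (the product on $\hat{\ca{U}}_\ast(\h)\hat\otimes\hat{\ca{U}}_\ast(\h)$ being componentwise $\ast$). For $a,b\in\operatorname{Gp}(\hat{\ca{U}}_\ast(\h))$ one obtains
\[
	\Delta(a\ast b)=\Delta(a)\,\Delta(b)=(a\hat\otimes a)(b\hat\otimes b)=(a\ast b)\hat\otimes(a\ast b),
\]
so that $a\ast b$ is again group-like. Combining this with the identification $\operatorname{Gp}(\hat{\ca{U}}(\h))=\operatorname{Gp}(\hat{\ca{U}}_\ast(\h))$ from the previous paragraph yields the closure claim.

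Alternatively, one may argue through $\Theta$: since $\Theta\co\hat{\ca{U}}(\g)\to\hat{\ca{U}}_\ast(\h)$ restricts to a bijection $\operatorname{Gp}(\hat{\ca{U}}(\g))\to\operatorname{Gp}(\hat{\ca{U}}_\ast(\h))$ and is an isomorphism of algebras, the closure of $\operatorname{Gp}(\hat{\ca{U}}(\g))$ under the ordinary product transports directly to closure of $\operatorname{Gp}(\hat{\ca{U}}_\ast(\h))=\operatorname{Gp}(\hat{\ca{U}}(\h))$ under $\ast$. There is no genuine obstacle in either route; the entire content of the lemma is the recognition that ``group-like'' has the same meaning in $\hat{\ca{U}}(\h)$ and in $\hat{\ca{U}}_\ast(\h)$, after which the monoid property of group-likes in a bialgebra finishes the proof.
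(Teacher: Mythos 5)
Your proof is correct, but it takes a genuinely different route from the paper's. You argue structurally: since the Grossman--Larson product leaves the coproduct of $\hat{\ca{U}}(\h)$ untouched, the sets $\operatorname{Gp}(\hat{\ca{U}}(\h))$ and $\operatorname{Gp}(\hat{\ca{U}}_\ast(\h))$ coincide, and closure then follows from the standard fact that group-like elements of a (complete) bialgebra form a monoid under its product --- all ingredients being available from Subsection \ref{ss:GL} and Proposition \ref{pro:69}. The paper instead proceeds by direct computation inside the $D$-bialgebra $\hat{\ca{U}}(\h)$: writing group-like elements as $\exp_\h(a)$ and $\exp_\h(b)$ with $a,b$ primitive, it uses \eqref{eq:GrossmanLarson}, axiom \ref{D bial item5} of Definition \ref{def:dbial}, and an induction on the length of monomials to establish the explicit identity
\begin{equation*}
	\exp_\h(a)\ast\exp_\h(b)=\exp_\h\big(a\bullet_\h\big((\exp_\h a)\pl b\big)\big),
\end{equation*}
whose right-hand side is manifestly group-like. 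What the paper's computation buys is precisely this closed formula: it identifies \emph{which} primitive element has the product as its exponential, and that is exactly what makes the subsequent Theorem \ref{thm:glth}, i.e.\ $\log_\h(\exp_\h(a)\ast\exp_\h(b))=a\star b$, an immediate consequence. Your argument buys brevity and generality --- it works in any bialgebra and needs no $D$-bialgebra axioms --- but it gives no information about the logarithm of $\exp_\h(a)\ast\exp_\h(b)$, so the computational content of the paper's proof would still have to be supplied to obtain Theorem \ref{thm:glth}. Your alternative route via $\Theta$ is likewise valid and is consistent with the paper's observation that $\Theta$ restricts to an isomorphism $\operatorname{Gp}(\hat{\ca{U}}(\g))\to\operatorname{Gp}(\hat{\ca{U}}_\ast(\h))$.
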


\begin{proof}
First, recall that $\mathcal U(\h)$ (as well as $\hat{\mathcal U}(\h)$) is a $D$-bialgebra, see Definition \ref{def:dbial}, whose set of primitive elements is $\h$. Then, for $a,b \in \h$, 
\begin{equation}
	\exp_\h(a)\ast\exp_\h(b)\stackrel{\eqref{eq:GrossmanLarson}}{=}
	\exp_\h(a)\cdot(\exp_\h(a)\pl\exp_h(b)).\label{eq:laG}
\end{equation}
Using induction on the length of monomials together with \ref{D bial item5}, one obtains
\[
	\exp_\h(a)\pl\underbrace{b\cdots b}_{n-\text{times}}
	=\underbrace{\exp_\h(a)\pl b\cdots\exp_\h(a)\pl b}_{n-\text{times}}.
\]
This implies that the right-hand side of \eqref{eq:laG} is equal to $\exp_\h(a)\exp_\h(\exp_\h(a)\pl b)$, which entails
\[
	\exp_\h(a)\ast\exp_\h(b)
	=\exp_\h\big(a\bullet_\h((\exp_\h a)\pl b)\big).
\]
\end{proof}

We can now state the following important result.

\begin{theorem}\label{thm:glth}
{\it The restriction of the Grossman--Larson product to $\text{Gp}(\hat{\mathcal U}(\h))$ is the binary operation, still denoted by $\ast$, such that, for all $a,b\in V$
\begin{equation}
	\log_\h(\exp_\h(a)\ast\exp_\h(b))=a\star b.\label{eq:GLvsStar}
\end{equation}
In particular, it is completely determined by the post-Lie Magnus expansion (and the map $\varUpsilon \co \ca{G} \to \Aut(\ca{H})$) and it defines a new group structure on the underlying set of $\text{Gp}(\hat{\mathcal U}(\h))$.}
\end{theorem}

\begin{proof}
This is essentially Equation \eqref{eq: side 1} and follows from the previous lemma as well as the definition of the $\star$-product.
\end{proof}


\subsection{Computations}
\label{ssec:comput}

The problem of computing the post-Lie Magnus expansion and its inverse is challenging. 
From \eqref{eq: theta and phi} we deduce that 
\begin{equation}
\label{eq:expequality}
	\exp_{\ast} \circ\ \phi \chi  = \exp_{\h},
\end{equation}
where, to alleviate notations, we wrote $\phi\chi := \phi\circ \chi$.
Despite being just the identity, we have made the presence of the invertible crossed morphism, $\phi=\id_V: \g \to \h$, defined relative to $\alpha \co \g \to \der(\h)$, visible. See Proposition \ref{prop: adjun} and recall that 
\begin{equation}
\label{phipostLie}
	\phi( [x,y]_\g)
	= \alpha_x(\phi(y)) - \alpha_y(\phi(x)) + [\phi(x),\phi(y)]_\h
	= x\pl y -y\pl x + [x,y]_\h.
\end{equation} 
Its peculiar role will become clear quickly. Following \cite[Eq.~(81)]{EFMpostLiealgebra-factorization}, we can add a formal parameter $t$ and consider the formal series
\begin{equation*}
	\phi\chi(xt) = \sum_{n\geq 1} \chi^{\phi}_n(x)t^n.  
\end{equation*}
By comparing  the coefficients of $t^n$ between $\exp_{\ast} \phi\chi(xt)$ and $\exp_{\h}(xt)$ one obtains for all  $x \in \h$ the highly non-linear recursion
\begin{equation*}
	\chi^{\phi}_n(x)= \frac{x^n}{n!} - \sum_{\stackrel{k\geq 2, p_i>0}{p_1+\cdots +p_k=n}} 
	\frac{1}{k!} \chi^{\phi}_{p_1}(x)\ast \cdots \ast \chi^{\phi}_{p_k}(x) . 
\end{equation*} 	

On the other hand, from \eqref{eq:postLieMag} and the fact that $\phi\chi(x) \in \h$, we deduce 
\begin{align}
\label{postLierecursion}
\begin{aligned}
	\Theta \exp_{\g}\chi(tx) 
	&= 1 + \sum_{n>0} \Theta(\chi(tx)^n) \frac{1}{n!}\\
	&= 1 + \phi\chi(tx) + \sum_{n>1} \frac{1}{n!} \sum_{\pi} ((\phi\chi(tx))^n)^\pi \\    
	&= \exp_{\h}(tx). 
\end{aligned}
\end{align}
In the second equality, the sum is taken over all set partitions $\pi$ of $\{ 1,\ldots,n\}$, see \eqref{orderedblocks} and  \eqref{orderedblocksexpansion}, as well as Formula \eqref{eq:setpartitions} in Proposition \ref{prop:cumulants}. From this equality, we deduce the recursion in $\hat{\mathcal U}(\h)$
\begin{equation}
\label{eq:postLieMagrecursion1}
	\phi\chi(tx) = xt +  \sum_{j>1} x^j \frac{t^j}{j!} - \sum_{n>1} \frac{1}{n!} \sum_{\pi} ((\phi\chi(tx))^n)^\pi .
\end{equation}
Since $\phi\chi$ can be seen as a deformation of the identity, we see that \eqref{eq:postLieMagrecursion1} permits to compute the terms in the sum $\sum_{n\geq 1} \chi^{\phi}_n(x)t^n$ recursively. We note that at each order $t^n$, the corresponding term in the first sum on the right-hand side of \eqref{eq:postLieMagrecursion1}, i.e., $x^n \frac{t^n}{n!} $ is cancelled. This can be seen from 
$$
	-\sum_{i>1}\frac{1}{i!} \sum_{\pi}  ((\phi\chi(tx))^i)^\pi 
	= - \sum_{j>1} (\phi\chi(tx))^j \frac{1}{j!} - \sum_{i>1}\frac{1}{i!} \sum_{\pi \atop |\pi| < i}  ((\phi\chi(tx))^i)^\pi .
$$
The order $t^n$ contribution from the right-hand side of the last equality is exactly $- x^n \frac{t^n}{n!} $. 
Note that the basic building blocks of this recursion are the n-fold iterated post-Lie products 
\begin{equation*}
	(\phi\chi(tx))_{n} :=  \phi\chi(tx) \pl (\cdots  (\phi\chi(tx) \pl (\phi\chi(tx) \pl \phi\chi(tx) ))\cdots ). 
\end{equation*}
Using this equation, one may compute the first few terms of $\phi\chi$ explicitly. 
For instance: 
\begin{equation}
\label{eq:postLieMagorder3}
	\phi\chi(x)= x 
				-\frac{1}{2}x\pl  x 
				+ \frac{1}{4}(x\pl x)\pl x 
				+\frac{1}{12}x\pl (x\pl x) 
				+ \frac{1}{12}[x\pl x,x]_\h 
				+ \cdots
\end{equation}

From $\phi=\id_V\co \g\to \h$ and \eqref{phipostLie}, we deduce that $\chi(x) \in \g$ is given by: 
\begin{multline} 
\label{eq:preLieMag in hbar}
\chi(x)= x 
		-\frac{1}{2}x\pl  x 
		+ \frac{1}{6}\left( 
		(x\pl x)\pl x 
		+  x\pl (x\pl x) 
		+ \frac{1}{2}[x\pl x,x]_\g 
		\right) 
		\\
	-\frac{1}{24}
	\bigg( ( (x\pl x) \pl x)\pl x
	+ ( x\pl (x \pl x))\pl x
	+  2(x\pl x)\pl (x\pl x)
	+  x\pl ((x\pl x)\pl x)
	\\
	+  x\pl (x\pl (x\pl x))
	- [x, x\pl(x\pl x)]_\g
	- [x,(x \pl x)\pl x]_\g 
	\bigg)
	+\cdots 
\end{multline}

Other methods, based on the combinatorial description of the free post-Lie algebra on one generator, are provided in \cite{MQ-Crossed}; see also \cite{Al-K-preLie} for the pre-Lie case. 
Using tree notation, the first four components of $\phi\chi(\bullet)$ are:   
\begin{align*}
\chi_1^{\phi}(\bullet) &= \bullet
	, ~~
\chi_2^{\phi}(\bullet) =  -\frac{1}{2}
\scalebox{0.4}{
	\begin{forest}
		baseline,for tree={%
			label/.option=content,
			content=,
			circle,
			fill,
			minimum size=9pt,
			inner sep=0pt,
			l =.5cm,
			s sep= 8mm,
			grow=north,
			edge ={line width=1pt}
		}
		[
		[] 	
		]
	\end{forest}
}
	,~~
\chi_3^{\phi}(\bullet) = 
\frac{1}{3}
\scalebox{0.4}{
	\begin{forest}
		baseline,for tree={%
			label/.option=content,
			content=,
			circle,
			fill,
			minimum size=9pt,
			inner sep=0pt,
			l =.5cm,
			s sep= 8mm,
			grow=north,
			edge ={line width=1pt}
		}
		[
		[[]] 	
		]
	\end{forest}
}
+
\frac{1}{12}
\scalebox{.4}{
\begin{forest}
	baseline,for tree={%
		label/.option=content,
		content=,
		circle,
		thick,
		fill,
		minimum size=9pt,
		inner sep=0pt,
		l =.5cm,
		s sep= 8mm,
		grow=north,
		edge ={line width=1pt}
	}
	[
	[] 	[]
	]
\end{forest}
}
+
\frac{1}{12}
\left(
	\scalebox{0.4}{
	\begin{forest}
		baseline,for tree={%
			label/.option=content,
			content=,
			circle,
			fill,
			minimum size=9pt,
			inner sep=0pt,
			l =.5cm,
			s sep= 8mm,
			grow=north,
			edge ={line width=1pt}
		}
		[
		[] 	
		]
	\end{forest}
	\begin{forest}
		baseline, for tree={%
			label/.option=content,
			content=,
			circle,
			fill,
			minimum size=9pt,
			inner sep=0pt,
			grow=north,
			l=0.1cm,
			edge ={line width=1pt}
		}
		[]
		\end{forest}}
	-
	\scalebox{0.4}{
		\begin{forest}
			baseline,for tree={%
				label/.option=content,
				content=,
				circle,
				fill,
				minimum size=9pt,
				inner sep=0pt,
				l =.5cm,
				s sep= 8mm,
				grow=north,
			}
			[]
		\end{forest}
		\begin{forest}
			baseline, for tree={%
				label/.option=content,
				content=,
				circle,
				fill,
				minimum size=9pt,
				inner sep=0pt,
				grow=north,
				l=0.1cm,
				edge ={line width=1pt}
			}
			[ [] ] 
	\end{forest}}
	\right) 
	\text{ and } 
	\\
	\chi_4^{\phi}(\bullet) &=
	-\frac{1}{4}
\scalebox{0.4}{
	\begin{forest}
		baseline,for tree={%
			label/.option=content,
			content=,
			circle,
			fill,
			minimum size=9pt,
			inner sep=0pt,
			l =.5cm,
			s sep= 8mm,
			grow=north,
			edge ={line width=1pt}
		}
		[
		[[[]]] 	
		]
	\end{forest}
}
	-
	\frac{1}{12}
\scalebox{.4}{
\begin{forest}
	baseline,for tree={%
		label/.option=content,
		content=,
		circle,
		thick,
		fill,
		minimum size=9pt,
		inner sep=0pt,
		l =.5cm,
		s sep= 8mm,
		grow=north,
		edge ={line width=1pt}
	}
	[
	[[] 	[]]
	]
\end{forest}
}
	-
	\frac{1}{12}
\scalebox{.4}{
\begin{forest}
	baseline,for tree={%
		label/.option=content,
		content=,
		circle,
		thick,
		fill,
		minimum size=9pt,
		inner sep=0pt,
		l =.5cm,
		s sep= 8mm,
		grow=north,
		edge ={line width=1pt}
	}
	[
	[] 	[[]]
	]
\end{forest}
}
	+
	\frac{1}{24}
	\left(
		\scalebox{0.4}{
		\begin{forest}
			baseline, for tree={%
				label/.option=content,
				content=,
				circle,
				thick,
				fill,
				minimum size=9pt,
				inner sep=0pt,
				grow=north,
				l=0.1cm,
			}
			[]
		\end{forest}
		\begin{forest}
			baseline,for tree={%
				label/.option=content,
				content=,
				circle,
				thick,
				fill,
				minimum size=9pt,
				inner sep=0pt,
				l =.5cm,
				s sep= 8mm,
				grow=north,
				edge ={line width=1pt}
			}
			[
			[] 	[]
			]
		\end{forest}
	}
	-
	\scalebox{0.4}{
		\begin{forest}
			baseline,for tree={%
				label/.option=content,
				content=,
				circle,
				fill,
				minimum size=9pt,
				inner sep=0pt,
				l =.5cm,
				s sep= 8mm,
				grow=north,
				edge ={line width=1pt}
			}
			[
			[] 	[]
			]
		\end{forest}
		\begin{forest}
			baseline, for tree={%
				label/.option=content,
				content=,
				circle,
				fill,
				minimum size=9pt,
				inner sep=0pt,
				grow=north,
				l=0.1cm,
			}
			[]
	\end{forest}}
	\right) 
	+
	\frac{1}{12}
	\left(
	\scalebox{0.4}{
	\begin{forest}
		baseline,for tree={%
			label/.option=content,
			content=,
			circle,
			fill,
			minimum size=9pt,
			inner sep=0pt,
			l =.5cm,
			s sep= 8mm,
			grow=north,
		}
		[]
	\end{forest}
	\begin{forest}
		baseline, for tree={%
			label/.option=content,
			content=,
			circle,
			fill,
			minimum size=9pt,
			inner sep=0pt,
			grow=north,
			l=0.1cm,
			edge ={line width=1pt}
		}
		[ [[]]]
\end{forest}}
	-
		\scalebox{0.4}{
		\begin{forest}
			baseline,for tree={%
				label/.option=content,
				content=,
				circle,
				fill,
				minimum size=9pt,
				inner sep=0pt,
				l =.5cm,
				s sep= 8mm,
				grow=north,
			}
			[ [[]] ] 
		\end{forest}
		\begin{forest}
			baseline, for tree={%
				label/.option=content,
				content=,
				circle,
				fill,
				minimum size=9pt,
				inner sep=0pt,
				grow=north,
				l=0.1cm,
			}
			[]
	\end{forest}}
	\right).
\end{align*}
As one may start to observe on the first terms of $\chi$, if the underlying Lie algebra of the post-Lie algebra is abelian, then one recovers the \emph{pre-Lie Magnus expansion}; for instance see \cite[Corollary 8]{EFMpostLiealgebra-factorization}.

We close this survey on the Magnus expansion by noting that the post-Lie Magnus expansion turns out to be useful in the analysis of iso-spectral type flows on some post-Lie algebras; see \cite{EFLM}. In \cite[Proposition 3.11]{EFLM}, it has been shown how to deduce from \eqref{eq:expequality} the differential equation of Magnus-type
\begin{equation}
\label{eq:postLieODE}
	\chi'(xt)=(d\exp_{\ast})^{-1}_{-\chi(xt)}\big(\exp_{\ast}(-\chi (xt))\triangleright x\big). 
\end{equation}
 Also, (see \cite{EFMpostLiealgebra-factorization}) for a post-Lie algebra $\h$, the equation $a'(t)  = - a(t) \pl a(t)$, where $a(t)\in \h[[t]]$, with initial condition $a(0)=a_0\in\h$ has a solution $a(t) = \exp_{\ast}(-\chi(a_0t)) \pl a_0.$ It seems to be an interesting problem to compare the two recursions \eqref{eq:postLieODE} respectively \eqref{eq:postLieMagrecursion1}.


\section{APPENDIX: Differential geometric and algebraic background}
\label{sec:diffformV} 

In this Appendix, we collect the necessary differential geometric concepts that have been employed in this survey. In particular, we recall the notion of a differential form with values in a vector space, the notion of a differential graded Lie algebra as well as that of an equivariant differential form. Some of the relations among these notions will be discussed and the relevant properties of these geometric concepts will be presented below.

	
\subsection{Differential forms with values in a vector space}


If $V$ is a finite dimensional vector space and $M$ is a smooth manifold, then for every $k\geq 0$ one can define $\Omega^k(M,V) = \Omega^k(M)\otimes V$, the vector space of $k$-forms on $M$ with values in $V$, whose elements are  finite sums $\sum_i \omega_i \otimes v^i$ where $\omega_i\in\Omega^k(M)$ and $v^i\in V$ for all $i$. Note that $\Omega^k(M,V)$ inherits from $\Omega^k(M)$ the structure of $C^\infty(M)$-module, i.e.~$f.(\omega\otimes v)=(f\omega)\otimes v$, for all $f \in C^\infty(M)$ and $\omega\otimes v\in\Omega^k(M)\otimes V$.

\begin{definition}
The graded $C^\infty(M)$-module of smooth forms on $M$ with values in $V$ is denoted by $\Omega^\bullet(M,V)=\oplus_{k\geq 0}\Omega^k(M,V)$, where $\Omega^k(M,V)$ holds the elements of degree $k$.
\end{definition}

\begin{remark}
If $X\in\mathfrak X(M)$, then one can extend the contraction operation $i_X$ as well as Lie derivative $\mathcal L_X$ in the obvious way from $\Omega^\bullet(M)$ to $\Omega^\bullet(M,V)$ such that $i_X$ is a $C^\infty(M)$-linear operator of degree $-1$ while $\mathcal L_X$ is $\mathbb R$-linear of degree $0$.
\end{remark}

One can \emph{twist} the de Rham complex $(\Omega^\bullet(M),d)$ of $M$ to obtain the complex of the differential forms on $M$ with values in $V$, i.e.~$(\Omega^\bullet(M,V),d_V)$, where $d_V:=d\otimes \mathrm{id}_V:\Omega^\bullet(M,V)\rightarrow\Omega^{\bullet+1}(M,V)$ is the \emph{twisted} Cartan differential.

By construction, the usual properties and operations on $\Omega^\bullet(M)$ extend to $\Omega^\bullet(M,V)$. Moreover, the latter is also functorial in $V$. Hereafter, we briefly recall these facts:

\begin{itemize}
	\item 
\textit{Functoriality in $M$}. 
To every smooth map $f\co M\to N$ one can associate the linear map $f^\flat:\Omega^k(N,V)\rightarrow\Omega^k(M,V)$, defined for each $k$ by 
\[
	f^\flat(\sum_i\omega_i\otimes v^i):=\sum_i(f^\ast\omega_i)\otimes v^i. 
\] 
Note that since the pull-back commutes with the Cartan differential, $f^\flat:(\Omega^\bullet (N,V),d_V)\rightarrow (\Omega^\bullet (M,V),d_V)$ is a morphism of complexes.

\item \textit{Functoriality in $V$.}   
If $V$ and $U$ are finite dimensional vector spaces, every $\phi\in\text{Hom}(V,U)$ defines $\phi^\sharp:\Omega^k(M,V)\rightarrow\Omega^k(M,U)$ via the following formula 
\begin{equation}
	\phi^\sharp\big(\sum_i\omega_i\otimes v^i\big):=\sum_i\omega_i\otimes\phi(v^i),\label{eq:func}
\end{equation}
which is a morphism of $C^\infty(M)$-modules and which extends uniquely to a morphism of graded $C^\infty(M)$-modules. Moreover, the previous formula implies
\[
	\phi^\sharp(d_V\omega)=d_U(\phi^\sharp\omega),
\]
for all $\omega\in\Omega^\bullet(M,V)$, i.e.~$\phi^\sharp:(\Omega^\bullet(M,V),d_V)\rightarrow(\Omega^\bullet(M,U),d_U)$ is a morphism of complexes. 

\item \textit{More generally:} observe that any bilinear map $B:V\times U\rightarrow W$ induces a graded $C^\infty(M)$-bilinear map $B^\sharp:\Omega^\bullet(M,V)\times\Omega^\bullet(M,U)\rightarrow\Omega^\bullet(M,W)$ via the following formula:
	\begin{equation}
		B^\sharp\big(\sum_{i}\omega_i\otimes v^i,\sum_j\omega'_j\otimes u^j\big)
		=\sum_{i,j}\omega_i\wedge\omega_j'\otimes B(v^i,u^j).\label{eq:dgla1}
	\end{equation}
Also note that for all $\omega\in\Omega^\bullet(M,V)$, $\omega'\in\Omega^\bullet(M,U)$
\begin{equation}
	d_WB^\sharp(\omega,\omega')=B^\sharp(d_V\omega,\omega')+(-1)^{\vert\omega\vert}B^\sharp(\omega,d_U\omega'),
	\label{eq:dgla2}
\end{equation}
and that $B^\sharp$ commutes with both the operation of Lie derivative and of contraction with respect to any vector field $X$ on $M$. 
\end{itemize}
\begin{remark}
	The Cartan formula $\mathcal L_X=d_V\circ i_X+i_X \circ d_V$ and more generally, the so-called Cartan calculus extends to this enhanced setting.
\end{remark}

Before continuing, let us recall the following 
	
\begin{definition}[Differential graded Lie algebra]
\label{def:dgla}
A triple $(\mathfrak G,[-,-],\delta)$, where $(\mathfrak G=\oplus_{i\geq 0}\mathfrak G_i,[-,-])$ is a graded Lie algebra and $\delta:\mathfrak G\rightarrow\mathfrak G$ is a differential (i.e.~a degree-one linear map squaring to zero) which is a graded derivation of $[-,-]$, is called a differential graded Lie algebra (dgla).
\end{definition}	

	We will specialize the above observations when $U=W=V=\mathfrak h$ is a finite dimensional Lie algebra. Under these assumptions, taking $B=[-,-]$, the Lie bracket on $\mathfrak h$, Equations \eqref{eq:dgla1} and \eqref{eq:dgla2} imply the existence of the structure of dgla on $\Omega^\bullet(M,\mathfrak h)$, whose Lie bracket and differential will be denoted by $[-\wedge -]$, respectively $d_\mathfrak h$,  (see Definition \ref{def:dgla}). Indeed, if $[-\wedge -]:\Omega^\bullet(M,\mathfrak h)\times\Omega^\bullet(M,\mathfrak h)\rightarrow\Omega^\bullet(M,\mathfrak h)$ is given by
\begin{equation}
	[\omega\wedge \omega']=\sum_{i,j}\omega_i\wedge\omega_j'\otimes [v^i,v^j],
	\quad
	\omega_i,\omega'_j\in\Omega^\bullet(M),
\end{equation}
then one can show that 
\begin{lemma}\label{lem: com times Lie}
	$(\Omega^\bullet(M,\mathfrak h),[-\wedge -],d_{\mathfrak h})$ is a dgla.
\end{lemma}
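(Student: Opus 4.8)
The plan is to verify the three defining conditions of a dgla (Definition \ref{def:dgla}) in turn: that $(\Omega^\bullet(M,\mathfrak h),[-\wedge-])$ is a graded Lie algebra, that $d_{\mathfrak h}$ is a differential, and that $d_{\mathfrak h}$ is a graded derivation of $[-\wedge-]$. Since the bracket is $C^\infty(M)$-bilinear, it suffices throughout to argue on decomposable elements $\omega=\alpha\otimes v$ and $\omega'=\beta\otimes w$ with $\alpha,\beta\in\Omega^\bullet(M)$ and $v,w\in\mathfrak h$, where the defining formula reads $[\alpha\otimes v\wedge\beta\otimes w]=(\alpha\wedge\beta)\otimes[v,w]$; the general case follows by bilinearity. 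The crucial structural point, used repeatedly, is that $\mathfrak h$ sits in degree zero, so that no Koszul sign arises when the Lie-algebra factors are commuted past the form factors.

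First I would establish graded antisymmetry. Combining the graded commutativity of the wedge product, $\beta\wedge\alpha=(-1)^{|\alpha||\beta|}\alpha\wedge\beta$, with the antisymmetry of the bracket on $\mathfrak h$, $[w,v]=-[v,w]$, gives
$$
	[\beta\otimes w\wedge\alpha\otimes v]
	=(\beta\wedge\alpha)\otimes[w,v]
	=-(-1)^{|\alpha||\beta|}(\alpha\wedge\beta)\otimes[v,w]
	=-(-1)^{|\omega||\omega'|}[\alpha\otimes v\wedge\beta\otimes w],
$$
which is the required graded antisymmetry since $|\omega|=|\alpha|$ and $|\omega'|=|\beta|$.

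Next comes the graded Jacobi identity, which is the step demanding the most care. Writing it in the cyclic form $(-1)^{|\omega_1||\omega_3|}[\omega_1\wedge[\omega_2\wedge\omega_3]]+\text{(cyclic)}=0$ and evaluating on $\omega_i=\alpha_i\otimes v_i$, each of the three summands produces the form factor $\alpha_1\wedge\alpha_2\wedge\alpha_3$ up to a sign coming from reordering the $\alpha_i$ by graded commutativity. One checks that these reordering signs cancel the prefactors $(-1)^{|\omega_i||\omega_j|}$ exactly, so that the expression collapses to $(\alpha_1\wedge\alpha_2\wedge\alpha_3)\otimes\big([v_1,[v_2,v_3]]+[v_2,[v_3,v_1]]+[v_3,[v_1,v_2]]\big)$, which vanishes by the (ungraded) Jacobi identity in $\mathfrak h$. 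This sign-chasing is the only genuinely technical point; it is the familiar statement that the tensor product of the graded-commutative algebra $\Omega^\bullet(M)$ with the Lie algebra $\mathfrak h$ is a graded Lie algebra, simplified here by the degree-zero placement of $\mathfrak h$.

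Finally, the two differential conditions are immediate from what is already recorded in the excerpt. Since $d_{\mathfrak h}=d\otimes\id_{\mathfrak h}$ is the twisted Cartan differential, it has degree one and satisfies $d_{\mathfrak h}^2=d^2\otimes\id_{\mathfrak h}=0$ because $d^2=0$ on $\Omega^\bullet(M)$; thus $d_{\mathfrak h}$ is a differential. That $d_{\mathfrak h}$ is a graded derivation of the bracket is precisely the instance of the general identity \eqref{eq:dgla2} obtained by taking $B=[-,-]$, namely $d_{\mathfrak h}[\omega\wedge\omega']=[d_{\mathfrak h}\omega\wedge\omega']+(-1)^{|\omega|}[\omega\wedge d_{\mathfrak h}\omega']$. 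Assembling these three verifications proves that $(\Omega^\bullet(M,\mathfrak h),[-\wedge-],d_{\mathfrak h})$ is a dgla.
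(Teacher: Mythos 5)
Your proof is correct and follows essentially the same route as the paper: the paper's proof likewise reduces the claim to checking graded antisymmetry and the graded Jacobi identity of $[-\wedge-]$ by direct computation, with the derivation property of $d_{\mathfrak h}$ already supplied by \eqref{eq:dgla2} applied to $B=[-,-]$. You simply carry out (or sketch) the computations the paper leaves to the reader, working on decomposables and noting that no Koszul signs arise from the degree-zero factor $\mathfrak h$.
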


\begin{proof} It suffices to check that $[\omega_1\wedge\omega_2]=(-1)^{\vert\omega_1\vert\,\vert\omega_2\vert+1}[\omega_2\wedge\omega_1]$ and that

\[
	(-1)^{\vert\omega_1\vert\,\vert\omega_3\vert}[\omega_1\wedge[\omega_2\wedge\omega_3]]
	+(-1)^{\vert\omega_3\vert\,\vert\omega_2\vert}[\omega_3\wedge[\omega_1\wedge\omega_2]]
	+(-1)^{\vert\omega_2\vert\,\vert\omega_1\vert}[\omega_2\wedge[\omega_3\wedge\omega_1]]=0,
\]
both of which can be done by direct computations.
\end{proof}
\begin{remark}
	 Lemma \ref{lem: com times Lie} is a particular instance of a more general result: the tensor product of a Lie algebra with a commutative dg-algebra is a dgla. 
\end{remark}
	
An important notion is enclosed in the sequel

\begin{definition}[Maurer--Cartan element]
Let $(\mathfrak G,[-,-],\delta)$ be a dgla. An $\alpha\in\mathfrak G_1$ such that $\delta\alpha + \frac{1}{2}[\alpha,\alpha]=0$ is called a \emph{Maurer--Cartan} (MC) element of $(\mathfrak G,[-,-],\delta)$. The set of all MC-elements will be denoted by $\mathrm{MC}(\mathfrak G,[-,-],\delta)$ or simply $\mathrm{MC}(\mathfrak G)$ when this notation  does not cause any confusion.
\end{definition}
	
The prototypical example of MC-element is the so-called (left-invariant) Maurer--Cartan form, whose definition we recall in the sequel 	
	
\begin{example}[Maurer--Cartan form of $G$]
\label{ex:MCform}
		Let $G$ be a Lie group of dimension $n$ and, for every $g \in G$, let $L_g: G \rightarrow G$ denote the application of left-translation defined by $g$. Recall that $\omega\in\Omega^1(G)$ is called left-invariant if $L_g^\ast\omega=\omega$ for all $g\in G$, i.e.~if $\omega$ is completely specified by its value at the group unit $e_G$. 
		If $\omega_1, \ldots, \omega_n$ is a basis of $G$-invariant one forms such that $\omega_{i,e_G}=e_{i}$ (i.e.~$e^1,\ldots,e^n$ is a basis of $\mathfrak g$, the Lie algebra of $G$ and $e_1,\dots,e_n$ is its dual basis), then $\theta:=\sum_{i=1}^n\omega_i\otimes e^i\in\Omega^1(G,\mathfrak g)$ satisfies
\begin{equation}
\label{eq:MC}
	d_\mathfrak g\theta+\frac{1}{2}[\theta\wedge\theta]=0.
\end{equation}
In fact, for all $g\in G$ and all $u,v \in T_gG$, if $\xi, \eta \in \mathfrak g$ are the unique elements whose corresponding fundamental vector fields $X_\xi,X_\eta$ satisfy $X_\xi(g)=v$ and $X_\eta(g)=u$, then
\begin{eqnarray*}
	(d_\mathfrak g\theta)_g(v,u)
	&=&d_\mathfrak g\theta(X_\xi,X_\eta)(g)\\
	&=&\cancel{X_\xi\theta(X_\eta)(g)}-\bcancel{X_\eta\theta(X_\xi)(g)}-\theta([X_\xi,X_\eta])(g)
	=-\theta([X_\xi,X_\eta])(g),
\end{eqnarray*}
where the first two terms cancel (separately) due to $\theta(X_\xi)$ and $\theta(X_\eta)$ being constant $\mathfrak g$-valued functions on $G$. The surviving term can be computed as follows
\begin{eqnarray*}
	\theta([X_\xi,X_\eta])(g)
		&=&\theta(X_{[\xi,\eta]})(g)=\sum_{i=1}^n\langle{\omega_i},X_{[\xi,\eta]}\rangle(g) e^i
		=\sum_{i,j,k,l}\xi_j\eta_kC^{jk}_l\langle\omega_i,X_l\rangle(g)e^i\\
		&=& \sum_{i,j,k,l}\xi_j\eta_kC^{jk}_l\delta^l_ie^i\\
		&=&\sum_{i,j,k}\xi_j\eta_kC^{jk}_ie^i.
\end{eqnarray*}
On the other hand
\begin{eqnarray*}
	[\theta\wedge\theta]_g(v,u)
		&=&[\theta\wedge\theta](X_\xi,X_\eta)(g)
		=\sum_{r,s}(\omega_r\wedge\omega_s)(X_\xi,X_\eta)(g)[e^r,e^s]\\
		&=&\sum_{r,s,l}C^{rs}_l(\omega_r\wedge\omega_s)(X_\xi,X_\eta)(g)e^l\\
		&=&\sum_{r,s,l}C^{r,s}_l\sum_{j,k}\det 
		\begin{bmatrix}
		\delta^j_r 	& \delta^j_s\\ 
		\delta^k_r & \delta^k_s 
		\end{bmatrix}\xi_j\eta_ke^l
		=2\sum_{l,j,k}\xi_j\eta_kC^{jk}_le^l,
\end{eqnarray*}
		which together with the previous result yields \eqref{eq:MC}. 
\end{example}
	
The form $\theta$ defined above satisfies for all $g\in G$ and $v\in T_{g}G$
\begin{equation}
	\theta_g(v)=\langle\theta_g,v\rangle=(L_{g^{-1}})_{\ast,g}v.\label{eq:MC1}
\end{equation}
	
In fact for every $g\in G$ and $v\in T_gG$

\begin{eqnarray*}
	\theta_g(v)
	=\sum_{i=1}^n\langle\omega_{i,g},v\rangle e^i
	=\sum_{i=1}^n\langle (L_{g^{-1}})^\ast_{,e_G}e_i,v\rangle e^i
	=\sum_{i=1}^n\langle e_i,(L_{g^{-1}})_{\ast,g}v\rangle e^i
	=(L_{g^{-1}})_{\ast,g}v.
\end{eqnarray*}

The identity \eqref{eq:MC1} implies for all $\xi\in\g$
\begin{equation}
		\theta_{e_G}(\xi)=\xi.\label{eq:MC2}
\end{equation}
Moreover, one can prove that $\theta$ is left-invariant, i.e.
\begin{equation}
	L_g^\ast\theta=\theta,
	\quad \forall g \in G. \label{eq:MC3}
\end{equation}	
This property can be deduced from the ``coordinate representation" $\theta=\sum_{i=1}^n\omega_i\otimes e^i$ (being the $\omega_i$'s left-invariant) or proven using \eqref{eq:MC1}. In fact for every $h\in G$ and $v\in T_hG$

\begin{eqnarray*}
	\langle (L_g^\ast\theta)_h,v\rangle
	=\langle\theta_{gh},(L_g)_{\ast,h}v\rangle
	\stackrel{\eqref{eq:MC1}}{=}(L_{(gh)^{-1}})_{\ast,gh}(L_g)_{\ast,h}v
	=(L_{h^{-1}})_{\ast,h}v=\theta_h(v),
\end{eqnarray*}

which proves \eqref{eq:MC3}.

\begin{definition}[Left-invariant Maurer--Cartan form]
The form $\theta$ introduced above is called the left-invariant Maurer--Cartan form of $G$.
\end{definition}
	
In the next subsection we will introduce a general class of vector-valued differential forms which include the left-invariant Maurer--Cartan one forms.


\subsection{Equivariant differential forms} 

Let $G$ be a Lie group, let $(V,\rho)$ be a $G$-module (i.e.~a representation of $G$) and consider the complex $(\Omega^\bullet(G,V),d_V)$. Recall Formula \eqref{eq:func} for the definition of $\sharp$-morphism. 

\begin{definition}\label{def:Gequiv}
		      An element $\omega\in\Omega^k(G,V)$ is called $G$-equivariant if 
		\begin{equation}
			L_{g}^\ast\omega= \rho(g)^\sharp(\omega),\qquad \forall g\in G.\label{eq:equivK}
		\end{equation}

\end{definition}

\noindent Note that \eqref{eq:equivK} is equivalent to

\begin{equation}
	\omega_{g'g}\big((L_{g'})_{\ast,g}v_1,\dots,(L_{g'})_{\ast,g}v_k\big)=\rho(g')(\omega_{g}(v_1,\dots,v_k)),\label{eq:equivK1}
\end{equation}

for all $g,g'\in G$ and $v_1,\dots,v_k\in T_gG$. Formulas \eqref{eq:func} and \eqref{eq:equivK} imply that the differential of a $G$-equivariant form is still $G$-equivariant, i.e.~$d_V$ descends to a differential on $\Omega^\bullet(G,V)^G$, which will be still denoted by $d_V$, defining the complex $(\Omega^\bullet(G,V)^G,d_V)$ of the $G$-equivariant forms on $M$ with values in $V$. 

\begin{example}
The left-invariant Maurer--Cartan form of $G$ is an example of $G$-equivariant one-form. In fact, Formula \eqref{eq:MC1} implies
\[
	\theta_{g'g}\big((L_{g'})_{\ast,g}v\big)
	=(L_{{g'g}^{-1}})_{\ast,g'g}\big((L_{g'})_{\ast,g}v\big)
	=(L_{g^{-1}})_{\ast,g}(v)=\theta_g(v),
\]

which proves the statement if one chooses $(V,\rho)=(\mathfrak g,id)$, where $id:G\rightarrow \text{GL}(\mathfrak g)$ is the \emph{trivial}, i.e.~$id(g)=\text{Id}\in\text{GL}(\mathfrak g)$ for all $g\in G$, Lie group morphism from $G$ to $\text{GL}(\mathfrak g)$. Note that together with the left one can consider $\theta^R$, the right-invariant Maurer--Cartan form of $G$, i.e.~the unique one form with values in $\mathfrak g$ such that $R^\ast_g\theta^R=\theta^R$ for all $g\in G$. In this case one can prove that $L^\ast_g\theta^R=\text{Ad}^\sharp(g)\circ\theta^R$, i.e.~$\theta^R$ is $G$-equivariant with respect to the $G$-module $(\mathfrak g,\text{Ad})$, where $\text{Ad}:G\rightarrow\text{GL}(\mathfrak g)$ is the adjoint representation of $G$.
\end{example}

	Now recall that if $\mathfrak g$ is a Lie algebra and $(V,r)$ is a $\mathfrak g$-module one can define the Chevalley--Eilenberg complex $(C^\bullet (\mathfrak g,V),d_{CE})$, where $C^\bullet=\oplus_{k\geq 0}C^k(\mathfrak g,V)$, $C^k(\mathfrak g,V)=\text{Hom}(\Lambda^k\mathfrak g,V)$, whose differential $d_{CE}:C^{\bullet}(\mathfrak g,V)\rightarrow C^{\bullet+1}(\mathfrak g,V)$ is given by the following formula:
\begin{eqnarray*}
	(d_{CE}\sigma)(x_0,\dots,x_{n})&=&\sum_{i=0}^n(-1)^{i}r(x_i)\sigma(x_0,\dots,\hat{x}_{i},\dots,x_n)\\
	&+&\sum_{i<j}(-1)^{i+j}\sigma([x_i,x_j],x_0,\dots,\hat{x}_i,\dots,\hat{x}_j,\dots,x_{n}).
\end{eqnarray*}
	
Note that Formula \eqref{eq:equivK1} implies that every $G$-invariant $k$-form is completely determined by its value at $e_G$, This implies the existence of a grading-preserving linear isomorphism $ev_{e_G}:\Omega^\bullet(G,V)^G\rightarrow C^\bullet(\mathfrak g,V)$, defined for all $\omega\in\Omega^k(G,V)^G$ by
\begin{equation}
	ev_{e_G}(\omega)=\omega_{e_G}, \label{eq:evmap}
\end{equation}
where $V$ is thought of as a $G$-module via $\rho:G\rightarrow\text{GL}(V)$ and, respectively, as a $\mathfrak g$-module via $r:=\rho_{\ast,e_G}:\mathfrak g\rightarrow \text{End}(V)$. Here we have used implicitly the fact that $\text{Hom}(\Lambda^k\mathfrak g,V)\cong \Lambda^k\mathfrak g^*\ot V$. After these preliminary remarks, we can state the following important result

\begin{theorem}\label{th: DeRham and CE}
The application defined in \eqref{eq:evmap} is an isomorphism of complexes from $(\Omega^\bullet(G,V)^G,d_V)$ to $(C^\bullet (\mathfrak g,V),d_{CE})$.
\end{theorem}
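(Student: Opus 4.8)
The plan is to reduce the theorem to a single verification. The text preceding the statement already establishes that $ev_{e_G}$ is a grading-preserving linear isomorphism: every $G$-equivariant $k$-form is determined by its value at $e_G$ via \eqref{eq:equivK1}, together with the identification $\mathrm{Hom}(\Lambda^k\mathfrak g,V)\cong\Lambda^k\mathfrak g^\ast\otimes V$. Hence the only thing left to prove is that $ev_{e_G}$ is a chain map, i.e.\ that
$$ev_{e_G}(d_V\omega)=d_{CE}(ev_{e_G}(\omega)),\qquad \omega\in\Omega^\bullet(G,V)^G.$$
To check this for $\omega\in\Omega^k(G,V)^G$ I would evaluate both sides on an arbitrary tuple $\xi_0,\dots,\xi_k\in\mathfrak g$, realising each $\xi_i$ by its left-invariant vector field $X_{\xi_i}$ (so that $X_{\xi_i}(e_G)=\xi_i$ and the integral curve through $e_G$ is $t\mapsto\exp(t\xi_i)$), then apply the invariant (Koszul) formula for the twisted exterior derivative
$$(d_V\omega)(X_{\xi_0},\dots,X_{\xi_k})=\sum_{i}(-1)^i X_{\xi_i}\big(\omega(X_{\xi_0},\dots,\widehat{X_{\xi_i}},\dots,X_{\xi_k})\big)+\sum_{i<j}(-1)^{i+j}\omega\big([X_{\xi_i},X_{\xi_j}],\dots\big)$$
and specialise at $e_G$.

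The heart of the argument is the evaluation of the derivative terms. Here I would exploit equivariance: applying \eqref{eq:equivK1} with the base point $e_G$ and translation $g$, and using $X_{\xi}(g)=(L_g)_{\ast,e_G}\xi$, one sees that the $V$-valued function
$$f_i(g):=\omega_g\big(X_{\xi_0}(g),\dots,\widehat{X_{\xi_i}}(g),\dots,X_{\xi_k}(g)\big)=\rho(g)\,\omega_{e_G}(\xi_0,\dots,\widehat{\xi_i},\dots,\xi_k)$$
is simply $\rho(g)$ applied to the constant vector $c_i:=\omega_{e_G}(\xi_0,\dots,\widehat{\xi_i},\dots,\xi_k)\in V$. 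Since $d_V=d\otimes\mathrm{id}_V$ differentiates the $V$-component trivially, differentiating $f_i$ along the curve $t\mapsto\exp(t\xi_i)$ gives
$$X_{\xi_i}(f_i)(e_G)=\tfrac{d}{dt}\big|_{0}\,\rho(\exp(t\xi_i))\,c_i=r(\xi_i)\,c_i=r(\xi_i)\,\omega_{e_G}(\xi_0,\dots,\widehat{\xi_i},\dots,\xi_k),$$
with $r=\rho_{\ast,e_G}$, which is precisely the $i$-th summand of the first sum in $d_{CE}$. (The Maurer--Cartan computation in Example \ref{ex:MCform} is the special case $\rho=id$, where each $f_i$ is constant and these terms drop out, leaving only the bracket sum.)

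For the bracket terms I would use that the bracket of left-invariant vector fields is again left-invariant with $[X_{\xi_i},X_{\xi_j}]=X_{[\xi_i,\xi_j]}$, as already used in Example \ref{ex:MCform}, so that $\omega([X_{\xi_i},X_{\xi_j}],X_{\xi_0},\dots)(e_G)=\omega_{e_G}([\xi_i,\xi_j],\xi_0,\dots,\widehat{\xi_i},\dots,\widehat{\xi_j},\dots,\xi_k)$; these reproduce the second sum of $d_{CE}$. Collecting the two contributions yields exactly the defining formula for $(d_{CE}\,\omega_{e_G})(\xi_0,\dots,\xi_k)$, proving the chain-map identity and hence, together with the already-established bijectivity, the theorem. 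The only genuinely delicate point is the reduction $\omega_g(X_{\xi_0}(g),\dots)=\rho(g)\,\omega_{e_G}(\xi_0,\dots)$: it is what converts the flat directional derivative coming from $d_V$ into the algebraic action $r(\xi_i)$, and it must be carried out with the paper's left-translation and sign conventions (in particular $X_\xi(g)=(L_g)_{\ast}\xi$ and $[X_\xi,X_\eta]=X_{[\xi,\eta]}$) so that no spurious signs spoil the match with $d_{CE}$.
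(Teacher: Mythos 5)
Your proposal is correct. Note that the paper itself offers no proof of this theorem: it is stated as a classical fact, with the preceding remarks only asserting that $ev_{e_G}$ is a grading-preserving linear isomorphism, so there is no written argument to compare against. Your proof supplies exactly the missing content in the standard way: reduce to the chain-map identity (which is the only part not already asserted before the theorem), evaluate via the Koszul formula on left-invariant vector fields, use equivariance \eqref{eq:equivK1} to turn the function $g\mapsto\omega_g(X_{\xi_0}(g),\dots)$ into $\rho(g)$ applied to a constant vector---whence the directional derivative along $t\mapsto\exp(t\xi_i)$ produces $r(\xi_i)=\rho_{\ast,e_G}(\xi_i)$---and use $[X_{\xi_i},X_{\xi_j}]=X_{[\xi_i,\xi_j]}$ for the bracket terms; the two sums then match the paper's definition of $d_{CE}$ term by term, and your side remark that Example \ref{ex:MCform} is the special case of trivial $\rho$ is a correct consistency check.
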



\subsubsection{Two dgla's} 
\label{sssec:dgls}
	
	We will consider the following case. Let $\mathfrak h$ be a finite dimensional Lie algebra and let $\text{Der}(\mathfrak h)$ be the Lie algebra of its derivations. Suppose that $r:\mathfrak g \rightarrow\text{Der}(\mathfrak h)$ is a Lie algebra morphism. Note that these hypotheses imply that $(\mathfrak h,r)$ is a $\mathfrak g$-module. Observe that $\text{Der}(\mathfrak h)$ is a Lie algebra whose corresponding (connected and simply connected) Lie group will be denoted with $\text{Aut}(\mathfrak h)$. This is the set of linear automorphisms of $\mathfrak h$ whose group operation is the composition between elements of $\text{End}(\mathfrak h)$. The Lie group structure on $\text{Aut}(\mathfrak h)$ is obtained noticing that it is a closed subgroup of $\text{GL}(\mathfrak h)$. Note further that the morphism of Lie algebras $r$ can be \emph{integrated} to a morphism of Lie groups $\rho:G\rightarrow\text{Aut}(\mathfrak h)$, where $G$ is the (unique up to isomorphism) connected and simply connected Lie group whose Lie algebra is $\mathfrak g$.  For all $s_1\in C^j(\mathfrak g,\mathfrak h)$ and $s_2\in C^k(\mathfrak g,\mathfrak h)$, let $\{-,-\}:C^j(\mathfrak g,\mathfrak h)\times C^k(\mathfrak g,\mathfrak h)\rightarrow C^{j+k}(\mathfrak g,\mathfrak h)$ the bilinear application defined by
\begin{equation}
\label{eq:dglah} 
	\{s_1,s_2\}(x_1,\dots,x_{j+k})
	=\sum_{\sigma\in\Sigma_{j,k}}(-1)^{\vert\sigma\vert}[s_1(x_{\sigma(1)},\dots,x_{\sigma(j)}),s_2(x_{\sigma(j+1)},\dots,x_{\sigma(j+k)})],
\end{equation}
	where $\Sigma_{j,k}\subset\Sigma_{j+k}$ is the subgroup of the symmetric group $\Sigma_{j+k}$ of the $(j,k)$-shuffles, $\vert\sigma\vert$ is the signature of the permutation $\sigma$ and the bracket $[-,-]$ on the right-hand side of the previous formula is the Lie bracket defined on $\mathfrak h$. Now one can prove 
	
\begin{proposition}
\label{prop: Lie on CE}
The bilinear map defined in Formula \eqref{eq:dglah} extends to $C^\bullet(\mathfrak g,\mathfrak h)$ as a graded Lie bracket, still denoted by $\{-,-\}$, such that $(C^\bullet(\mathfrak g,\mathfrak h),\{-,-\},d_{CE})$ is a dgla.
\end{proposition}
	
Let $G$ and $\rho:G\rightarrow\text{Aut}(\mathfrak h)$ be as above and consider $(\Omega^\bullet(G,\mathfrak h)^G,[-\wedge -],d_\mathfrak h)$ (note that $(\mathfrak h,\rho)$ is a $G$-module). 

\begin{proposition} 
\label{prop: de Rham CE iso of Lie}
The evaluation map $ev_{e_G}:\Omega^\bullet(G,\mathfrak h)^G\rightarrow C^\bullet(\mathfrak g,\mathfrak h)$, see \eqref{eq:evmap}, defines an isomorphism between the corresponding dgla's. 
\end{proposition}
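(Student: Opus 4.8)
The plan is to build on Theorem~\ref{th: DeRham and CE}, which, applied with $V=\mathfrak h$ (viewed as a $G$-module via $\rho$ and as a $\mathfrak g$-module via $r=\rho_{\ast,e_G}$), already gives that $ev_{e_G}$ is an isomorphism of complexes from $(\Omega^\bullet(G,\mathfrak h)^G,d_{\mathfrak h})$ to $(C^\bullet(\mathfrak g,\mathfrak h),d_{CE})$. Since the source is a dgla by Lemma~\ref{lem: com times Lie} and the target is a dgla by Proposition~\ref{prop: Lie on CE}, and since $ev_{e_G}$ is already known to be a grading-preserving linear isomorphism commuting with the differentials, the only thing left to verify is that $ev_{e_G}$ intertwines the two graded Lie brackets, that is
$$ev_{e_G}([\omega\wedge\omega'])=\{ev_{e_G}(\omega),ev_{e_G}(\omega')\}$$
for all $\omega,\omega'\in\Omega^\bullet(G,\mathfrak h)^G$. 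Once this is established, $ev_{e_G}$ is an isomorphism of dgla's and the proof is complete.

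For the bracket compatibility I would first reduce, by bilinearity of both $[-\wedge-]$ and $\{-,-\}$ and by linearity of $ev_{e_G}$, to decomposable forms $\omega=\sum_i\omega_i\otimes v^i$ with $\omega_i\in\Omega^j(G)$ and $\omega'=\sum_l\omega_l'\otimes w^l$ with $\omega_l'\in\Omega^k(G)$. Expanding $[\omega\wedge\omega']=\sum_{i,l}(\omega_i\wedge\omega_l')\otimes[v^i,w^l]$, evaluating at $e_G$, and applying the result to tangent vectors $\xi_1,\dots,\xi_{j+k}\in T_{e_G}G=\mathfrak g$, the computation hinges on the shuffle description of the scalar exterior product,
$$(\omega_i\wedge\omega_l')_{e_G}(\xi_1,\dots,\xi_{j+k})=\sum_{\sigma\in\Sigma_{j,k}}(-1)^{|\sigma|}(\omega_i)_{e_G}(\xi_{\sigma(1)},\dots,\xi_{\sigma(j)})\,(\omega_l')_{e_G}(\xi_{\sigma(j+1)},\dots,\xi_{\sigma(j+k)}).$$
Substituting this expression and regrouping the sums over $i$ and $l$ back into $ev_{e_G}(\omega)=\omega_{e_G}$ and $ev_{e_G}(\omega')=\omega'_{e_G}$, using the bilinearity of the Lie bracket $[-,-]$ on $\mathfrak h$, reproduces precisely the defining formula \eqref{eq:dglah} for $\{\omega_{e_G},\omega'_{e_G}\}$. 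This is the heart of the matter: the $(j,k)$-shuffle sum built into the wedge product of $\mathbb R$-valued forms is exactly the $(j,k)$-shuffle sum appearing in the Chevalley--Eilenberg bracket.

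The main obstacle is not conceptual but a matter of matching conventions and signs. One must fix the convention for the exterior product $\wedge$ so that it agrees with the shuffle normalisation used in \eqref{eq:dglah}, and track the Koszul signs $(-1)^{|\sigma|}$ consistently on both sides. With the (unnormalised) $(j,k)$-shuffle convention the identity is immediate; with a factorial-normalised convention one absorbs the combinatorial factors by recalling that the $(j,k)$-shuffles furnish a complete set of coset representatives for $\Sigma_j\times\Sigma_k$ in $\Sigma_{j+k}$, so the two normalisations coincide after antisymmetrisation. Having verified the bracket identity on decomposables and extended it by bilinearity, combining it with the complex isomorphism of Theorem~\ref{th: DeRham and CE} yields that $ev_{e_G}$ is an isomorphism of dgla's.
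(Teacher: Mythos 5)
Your proposal is correct and coincides with the argument the paper intends: the paper actually states this proposition \emph{without} proof, treating it as an immediate consequence of Theorem \ref{th: DeRham and CE} together with the dgla structures furnished by Lemma \ref{lem: com times Lie} and Proposition \ref{prop: Lie on CE}. The bracket compatibility you verify—reducing to decomposable forms and matching the $(j,k)$-shuffle sum in the pointwise wedge product of scalar forms with the shuffle sum defining $\{-,-\}$ in \eqref{eq:dglah}, with the normalisation caveat you note—is exactly the step the paper leaves implicit, so your write-up supplies the missing verification rather than deviating from it.
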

	
An obvious consequence of the previous results is that $\text{MC}(\Omega^\bullet(G,\mathfrak h)^G)$ and $\text{MC}(C^\bullet(\mathfrak g,\mathfrak h))$ are in one-to-one correspondence.
	
Before closing this section it is worth making the following observation. Let $H$ be a Lie group whose Lie algebra is $\mathfrak h$. Consider
\begin{eqnarray*}
	\text{Aut}(H)
	&=&\{\phi:H\rightarrow H\vert\;\phi\ \text{is a group morphism, which is}\\
	&  &\text{ a diffeomorphism of the underlying manifold}\}
\end{eqnarray*}
Note now that, by differentiation, every $\phi\in\text{Aut}(H)$ induces a unique element $\phi_{\ast,e_H}\in\text{Aut}(\mathfrak h)$ and that the application so defined is a morphism of abstract groups. Now one can show the
	
\begin{proposition}
If $H$ is connected and simply connected, then the group morphism which takes $\phi\in\text{Aut}(H)$ to $\phi_{\ast,e_H}\in\text{Aut}(\mathfrak h)$ is an isomorphism of abstract groups.
\end{proposition}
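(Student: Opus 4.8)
The plan is to verify injectivity and surjectivity of the abstract group morphism $\Phi\co\Aut(H)\to\Aut(\mathfrak h)$, $\phi\mapsto\phi_{\ast,e_H}$, both being consequences of the fundamental theorem of Lie theory (Lie's second theorem). The essential input, used already in the excerpt to integrate $r\co\mathfrak g\to\Der(\mathfrak h)$ to $\rho\co G\to\Aut(\mathfrak h)$, is that for $H$ connected and simply connected every Lie algebra morphism out of $\mathfrak h$ integrates to a unique Lie group morphism out of $H$. Note that $\Phi$ indeed lands in $\Aut(\mathfrak h)$, since the differential at the unit of a group automorphism is a Lie algebra automorphism.

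For injectivity, I would use the rigidity of morphisms from a connected Lie group: a Lie group morphism $\phi\co H\to H$ is completely determined by its differential $\phi_{\ast,e_H}$ at the unit, because $H$ connected is generated by any neighbourhood of $e_H$ and $\phi$ is forced to coincide with $\exp\circ\,\phi_{\ast,e_H}\circ\exp^{-1}$ near $e_H$. Hence if $\phi_{\ast,e_H}=\mathrm{Id}_{\mathfrak h}=(\id_H)_{\ast,e_H}$, then $\phi=\id_H$, so $\ker\Phi$ is trivial.

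For surjectivity, given $\psi\in\Aut(\mathfrak h)$, I would invoke Lie's second theorem twice. Since $H$ is connected and simply connected, the Lie algebra morphism $\psi\co\mathfrak h\to\mathfrak h$ integrates to a unique Lie group morphism $\phi_\psi\co H\to H$ with $(\phi_\psi)_{\ast,e_H}=\psi$, and likewise $\psi^{-1}$ integrates to a unique $\phi_{\psi^{-1}}\co H\to H$. The compositions $\phi_\psi\circ\phi_{\psi^{-1}}$ and $\phi_{\psi^{-1}}\circ\phi_\psi$ are Lie group morphisms whose differentials at $e_H$ are $\psi\circ\psi^{-1}=\mathrm{Id}_{\mathfrak h}$ and $\psi^{-1}\circ\psi=\mathrm{Id}_{\mathfrak h}$, so by the uniqueness clause (equivalently, by the injectivity established above) both equal $\id_H$. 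Thus $\phi_\psi$ is a bijective group morphism with smooth inverse $\phi_{\psi^{-1}}$, i.e.\ $\phi_\psi\in\Aut(H)$, and $\Phi(\phi_\psi)=\psi$.

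The main obstacle is the integration step in surjectivity: producing $\phi_\psi$ from $\psi$ is exactly where the hypothesis that $H$ is simply connected is indispensable. Without it a Lie algebra automorphism need only integrate to a \emph{local} automorphism near $e_H$, and $\Phi$ may fail to be onto. Everything else is formal once Lie's second theorem is granted.
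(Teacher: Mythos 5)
Your proof is correct. The paper states this proposition without proof, presenting it as a standard fact of Lie theory, and your argument is precisely the standard one being invoked: rigidity of Lie group morphisms out of a connected group (via $\phi\circ\exp=\exp\circ\,\phi_{\ast,e_H}$) gives injectivity, and Lie's second theorem for the connected, simply connected group $H$ — applied to both $\psi$ and $\psi^{-1}$, with uniqueness identifying the compositions with $\mathrm{id}_H$ — gives surjectivity, including the smoothness of the inverse needed for membership in $\operatorname{Aut}(H)$.
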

	
As a consequence of the previous result one has that the abstract group $\text{Aut}(H)$ inherits a structure of a Lie group so that the application which takes $\phi\in\text{Aut}(H)$ to $\phi_{\ast,e_H}\in\text{Aut}(\mathfrak h)$ is an isomorphism of Lie groups.

	
\bibliographystyle{alpha}

\begin{thebibliography}{mm}

	\bibitem{AEFM22}
	\newblock{Post-Lie-Magnus expansion and BCH-recursion.}
	\newblock{\em{SIGMA}, {\bf{18}}, 023, 16 pages, (2022).}

	\bibitem{AG}
	A.~Agrachev, R.~Gamkrelidze.
	\newblock{Chronological algebras and nonstationary vector fields.}
	\newblock{\em{Journal of Soviet Mathematics}}, {\bf 17}, 1650--1675, (1981). 
	
	\bibitem{AgraSary2004}
	A.~Agrachev, Y.~L.~Sachkov.
	\newblock Control Theory from the Geometric Viewpoint.
	Encyclopaedia of Mathematical Sciences, Springer Berlin, Heidelberg 2004.
	
	
	\bibitem{Al-K-preLie}
	M.~J.~H.~Al-Kaabi.
	\newblock On pre-{L}ie {M}agnus expansion.
	\newblock {\em Special Issue: 1st Scientific International Conference, College of Science}, 
	Al-Nahrain University, 21-22/11/2017, Part II, 101--112. 
	
	\bibitem{Al-K_EF_M_MK}
	M.~J.~H.~Al-Kaabi, K.~Ebrahimi-Fard, D.~Manchon, H.~Z.~Munthe-Kaas.
	\newblock {A}lgebraic aspects of connections: from torsion, curvature, and post-Lie algebras 
	to {G}avrilov's double exponential and special polynomials. 
	\newblock {\em to appear in Journal of Noncommutative Geometry}, (2024).
	\newblock \href{http://arxiv.org/abs/2205.04381v1}{arXiv:2205.04381v1}
	
	\bibitem{Al-K_EF_M_MK23}
	M.~J.~H.~Al-Kaabi, K.~Ebrahimi-Fard, D.~Manchon.
	\newblock Free post-groups, post-groups from group actions, and post-Lie algebras. 
	\newblock {\em{Journal of Geometry and Physics}} {\bf{198}}, 105129, (2024).
	\newblock \href{https://arxiv.org/abs/2306.08284}{arXiv:2306.08284}
	
	\bibitem{Arnaletal2018}
	A.~Arnal, F.~Casas, C.~Chiralt.
	\newblock A general formula for the {M}agnus expansion in terms of iterated integrals of right-nested commutators.
	\newblock {\em{Journal of Physics Communications}} {\bf{2}}, 035024, (2018).
	\newblock \href{https://arxiv.org/abs/1710.10851}{arXiv:1710.10851}
	
	\bibitem{Beauchardetal2023}
	K.~Beauchard, J.~Le Borgne, F.~Marbach.
	\newblock{On expansions for nonlinear systems error estimates and convergence issues}.
  	\newblock {\em{Comptes Rendus. Math\'ematique}}, {\bf{361}}, 97--189, (2023).
	
	
	\bibitem{Beauchardetal2024}
	K.~Beauchard, A.~Laurent, F.~Marbach.
	\newblock{Control theory and splitting methods}.
	\newblock \href{https://arxiv.org/abs/2407.02127}{arXiv:2407.02127}
	\newblock {July (2024)}.
		
	\bibitem{Blanesetal2008}
	S.~Blanes, F.~Casas, J.~A.~Oteo, J.~Ros. 
	\newblock The Magnus expansion and some of its application. 
	\newblock {\em Physics Reports}, {\bf{470}}, 151--238, (2008). 
	
	
	\bibitem{Bai-Guo-Sheng-Tang-post-groups}
	C.~Bai, L.~Guo, Y.~Sheng, R.~Tang. 
	\newblock Post-groups, (Lie-)Butcher groups and the Yang--Baxter equation.
	\newblock {\em Mathematische Annalen}, (2023). 


	\bibitem{B-N}
	Y.~Billig, K.-H.~Neeb.
	\newblock On the cohomology of vector fields on parallelizable manifolds.
	\newblock {\em Annales de l'institut Fourier} (Grenoble), {\bf{58}}, no.~6, 1937--1982, (2008).
	
	
	\bibitem{CarPat2021}
	P.~Cartier, F.~Patras. 
	\newblock Classical Hopf Algebras and Their Applications. 
	\newblock Springer series {\em Algebra and Applications}, vol.~29, Springer Cham, 2021.

	
	\bibitem{ChaPat2013}
	F.~Chapoton, F.~Patras.
	\newblock Enveloping algebras of preLie algebras, Solomon idempotents and the Magnus formula.
	\newblock {\em International Journal of Algebra and Computation}, {\bf{23}}, No.~4, 853--861, (2013).
	
	\bibitem{ChapotonLivernet2000}
	F.~Chapoton, M.~Livernet.
	\newblock Pre-{L}ie algebras and the rooted trees operad.
	\newblock {\em International Mathematics Research Notices}, {\bf{8}}:395--408, (2001).
	
	\bibitem{CEFPP2022}
	A.~Celestino, K.~Ebrahimi-Fard, F.~Patras, D.~Perales.
	\newblock Cumulant-cumulant relations in free probability theory from Magnus' expansion.
	\newblock {\em Foundations of Computational Mathematics}, {\bf{22}}, 733--755, (2022).
	
	\bibitem{CP2023}
	A.~Celestino, F.~Patras.
	\newblock A forest formula for pre-Lie exponentials, Magnus' operator and cumulant-cumulant relations.
	\newblock \href{https://arxiv.org/abs/2203.11968}{arXiv:2203.11968}
	\newblock {March (2023)}.
	
	
	\bibitem{CEFMK} 
	C.~Curry, K.~Ebrahmi-Fard, H.~Z.~Munthe-Kaas.
	\newblock What is a post-Lie algebra and why is it useful in geometric integration.
	\newblock in {\em Lecture Notes in Computational Science and Engineering} {\bf{126}}, (2019).
	
	\bibitem{CEFO} 
	C.~Curry, K.~Ebrahmi-Fard, B.~Owren.
	\newblock The Magnus expansion and Post-Lie algebras.
	\newblock {\em Mathematics of Computation}, {\bf{89}},  2785--2799, (2020). 
	
	\bibitem{KM14}
	K.~Ebrahimi-Fard, D.~Manchon.
	\newblock The tridendriform structure of a discrete Magnus expansion.
	\newblock {\em Discrete and Continuous Dynamical Systems}, {\bf{34}}, Number 3, 1021--1040, (2014). 

	\bibitem{EFLMMK}
	K.~Ebrahimi-Fard, A.~Lundervold, I.~Mencattini, H.~Z.~Munthe-Kaas.
	\newblock Post-Lie Algebras and Isospectral Flows.
	\newblock {\em SIGMA}, {\bf 25}(11) 093, (2015).
	
	\bibitem{EFMMK}
	K.~Ebrahimi-Fard, I.~Mencattini, H.~Z.~Munthe-Kaas.
	\newblock Post-Lie algebras and factorization theorems.
	\newblock {\em Journal of Geometry and Physics}, {\bf{119}}, 19--33, (2017). 
	
	\bibitem{EFLM}
	K.~Ebrahimi-Fard, A.~Lundervold, H.~Z. Munthe-Kaas.
	\newblock On the {L}ie enveloping algebra of a post-{L}ie algebra.
	\newblock {\em Journal of Lie Theory} {\bf{25}}, 1139--1165, (2015).
	
	\bibitem{KM}
	K.~Ebrahimi-Fard, D.~Manchon.
	\newblock {A} {M}agnus-and {F}er-type formula in dendriform algebras.
	\newblock {\em Foundations of Computational Mathematics}, {\bf{9}}, 295--316, (2009).
	
	\bibitem{EFMpostLiealgebra-factorization}
	K.~Ebrahimi-Fard, I.~Mencattini.
	\newblock {P}ost-{L}ie algebras, factorization theorems and isospectral-flows.
	\newblock In: Ebrahimi-Fard, K., Li\~an Barbero M., eds. {\em Discrete Mechanics, 
	Geometric Integration and Lie--Butcher Series}, 
	\newblock Springer Proceedings in Mathematics and Statistics, Switzerland: Springer Nature, 231--285, (2018).
	
	\bibitem{EFPatras2013}
	K.~Ebrahimi-Fard, F.~Patras.
	\newblock  The pre-Lie structure of the time-ordered exponential.
	\newblock in {\em Letters in Mathematical Physics}, {\bf{104}}, no.10, 1281--1302, (2014). 
	
	\bibitem{EFPatras2021}
	K.~Ebrahimi-Fard, F.~Patras.
	\newblock From iterated integrals and chronological calculus to Hopf and Rota--Baxter algebras.
	\newblock in {\em Algebra and Applications 2: Combinatorial Algebra and Hopf Algebras}, ISTE Ltd-Wiley 2021. 
	
	\bibitem{EFPatras2021b}
	K.~Ebrahimi-Fard, F.~Patras.
	\newblock Quasi-shuffle algebras in non-commutative stochastic calculus.
	\newblock in {\em Geometry and Invariance in Stochastic Dynamics}, 
	\newblock Springer Proceedings in Mathematics and Statistics, Switzerland: Springer Nature, {\bf{378}}, 89--112, (2021).
	
	
	\bibitem{F-MK}
	G.~Fl{\o}ystad, H.~Z.~Munthe-Kaas.
	\newblock {P}re-and post-{L}ie algebras: the algebro-geometric view.
	\newblock Abel Symposia Series {\bf{13}}, Springer Cham (2018).
	
	
	\bibitem{Fresse}
	B.~Fresse.
	\newblock {H}omotopy of {O}perads and {G}rothendieck-{T}eichm\"uller {G}roups.
	\newblock {P}art I: {T}he {A}lgebraic {T}heory and its {T}opological {B}ackground.
	\newblock Mathematical Surveys and Monographs, Vol. 217, AMS, (2017).
	
	
	\bibitem{Gavrilov}
	 A.~V. ~Gavrilov.
	\newblock {T}he double exponential mapping and covariant differentiation. 
	\newblock {\em Rossiiskaya Akademiya Nauk. Sibirskoe Otdelenie. Institut Matematiki im. 
	S. L. Soboleva. Sibirskii Matematicheskii Zhurnal}, {\bf{48}}(1):68--74, (2007).
	
	
	\bibitem{Gerstenhaber1963}
	M.~Gerstenhaber.
	\newblock The cohomology structure of an associative ring.
	\newblock {\em Annals of Mathematics}, {\bf{78}}, 267--288, (1963).
	
	
	\bibitem{GMKS}
	E.~Grong, H.~Z.~Munthe-Kaas, J.~Stava.
	\newblock {P}ost-Lie Algebra Structure of Manifolds with Constant Curvature and Torsion.
	\newblock {\em{Journal of Lie Theory}}, {\bf{34}}, No.~2, 339--352, (2024).
	
	
	\bibitem{Grossman-Larson}
	R.~L.~Grossman, R.~G.~Larson.
	\newblock {H}opf-algebraic structures of families of trees.
	\newblock {\em Journal of Algebra}, {\bf{126}}(1), 184--210, (1989).
		
	
	\bibitem{Hilgert-Neeb}
	J.~Hilgert, K.~H. Neeb.
	\newblock {S}tructure and {G}eometry of {L}ie {G}roups.
    	\newblock Springer Monographs in Mathematics, Springer (2010).

	
	\bibitem{JSZ2021}
	J.~Jiang, Y.~Sheng, C.~Zhu.
	\newblock Lie theory and cohomology of relative Rota--Baxter operators.
	\newblock {\em{Journal of the London Mathematical Society}}, {\bf{109}}(2), e12863, (2024).
	\newblock \href{https://arxiv.org/abs/2108.02627}{arXiv:2108.02627}. 
	
	
	\bibitem{Magnus}
	W.~Magnus.   
	\newblock On the exponential solution of differential equations for a linear operator. 
	\newblock {\em Communications on Pure and Applied Mathematics}, {\bf{VII}} (4), 649--673, (1954).
	
	\bibitem{Magnus-report}
	W.~Magnus.   
	\newblock Algebraic aspects in the theory of systems of linear differential equations.
	\newblock Research report, Courant Institute of Mathematical Sciences, New York University, 1953. 
	
	
	\bibitem{Manchon2009}
	D.~Manchon. 
	\newblock A short survey on pre-Lie algebras. 
	\newblock in {\em E.~Schr\"odinger Institute Lectures in Mathematical Physics}, 
	European Mathematical Society, 2011.
	
	\bibitem{Manchon-video}
	D.~Manchon.   
	\newblock Free post-Lie algebras, the Hopf algebra of Lie group integrators and planar arborification. 
	\newblock {\em CIRM. Audiovisual resource. doi:10.24350/CIRM.V.19190803}, (2017)
	\newblock \href{http://dx.doi.org/10.24350/CIRM.V.19190803}{CIRM video}
	
	
	\bibitem{MQ-Crossed}
	I.~Mencattini, A.~Quesney.
	\newblock {C}rossed morphisms, integration of post-{L}ie algebras and the post-{L}ie {M}agnus expansion.
	\newblock {\em Communications in Algebra}, {\bf{49}}(8), 3507--3533, (2021).
	
	\bibitem{MQS}
	I.~Mencattini, A.~Quesney, P.~Silva.
	\newblock {P}ost-symmetric braces and integration of post-{L}ie algebras.
	\newblock {\em  Journal of Algebra}, {\bf{556}}, 547--580, (2020).
	
	\bibitem{MP1970}
	B.~Mielnik, J.~Plebański.
	\newblock Combinatorial approach to Baker--Campbell--Hausdorff exponents. 
	\newblock {\em Annales de l'I.H.P. Physique théorique} {\bf{12}}.3, 215--254, (1970).
	
	
	\bibitem{Munthe-Kaas-Lundervold-PL}
	H.~Z. Munthe-Kaas, A.~Lundervold.
	\newblock {O}n post-{L}ie algebras, {L}ie--{B}utcher series and moving frames.
	\newblock {\em Foundations of Computational Mathematics}, {\bf{13}}, 583--613, (2013).
	
	\bibitem{M-KSV} 
	H.~Z. Munthe-Kaas, A.~Stern, O.~Verdier.
	\newblock {I}nvariant connections, Lie algebra actions and foundations of numerical integration on manifolds
	\newblock {\em SIAM Journal on Applied Algebra and Geometry}, {\bf{4}}(1): 49--68, (2020).
	
	\bibitem{MKW}
	H.~Munthe-Kaas, W.~Wright.
	\newblock {O}n the {H}opf algebraic structure of lie group integrators.
	\newblock {\em Foundations of Computational Mathematics}, {\bf{8}}, 227--257, (2008).
	
	
	\bibitem{Neeb}
	K.-H.~Neeb.
	\newblock {T}oward a Lie theory of locally convex Lie groups.
	\newblock {\em Japanese Journal of Mathematics}, {\bf{1}}, 291--468, (2006)
	
	
	\bibitem{Guin-Oudom}
	J-M.~Oudom, D.~Guin.
	\newblock On the Lie enveloping algebra of a pre-Lie algebra.
	\newblock {\em Journal of K-Theory}, {\bf{2}}, no.~1,147--167, (2008).
	
	
	\bibitem{Sharpe} 
	R-W.~Sharpe.
	\newblock Differential Geometry {\textit{Cartan's Generalization of Klein's Erlangen Program}}
	\newblock Graduate Text in Mathematics, GTM Vol.~166
	\newblock Springer 1997 
	
	
	\bibitem{Strichartz1987} 
	R.S.~Strichartz.
	\newblock The Campbell--Baker--Hausdorff--Dynkin formula and solutions of differential equations.
	\newblock {\em Journal of Functional Analysis}, {\bf{72}}, 320--345, (1987).
	
	\bibitem{BV2007}
	B.~Vallette.
	{\newblock  Homology of generalized partition posets.} 
	\newblock {\em Journal of Pure and Applied Algebra}, {\bf{208}} No.~2, 699--725, (2007).
	
	\bibitem{Vinberg1963}
	E.~B.~Vinberg. 
	\newblock The theory of homogeneous convex cones. 
	\newblock {\em Transactions of the Moscow Mathematical Society}, {\bf{12}}, 340--403, (1963).
	
\end{thebibliography}

\end{document}